\mathchardef\mhyphen="2D
\newcommand{\Oh}{\mathcal{O}}
\def\emptyset{\varnothing}
\newcommand\hypersub[2]{\hyperref[#1]{\vphantom{#2}\smash{#2_{}\kern-\scriptspace}}}
\newcommand{\gaifman}{\mathsf{G}}
\newcommand{\trimsub}[2]{#1\langle{#2}\rangle}
\newcommand{\trim}[2]{\trimsub{#1}{#2}}
\newcommand{\trimhoms}[2]{\mathsf{TrimHom}(#1\to#2)}
\newcommand{\trimhomsprob}{\text{\ref{prob:trimhoms}}}
\newcommand{\cptrimhoms}[2]{\mathsf{cpTrimHom}(#1\to#2)}
\newcommand{\cftrimhoms}[2]{\mathsf{cfTrimHom}(#1\to#2)}
\newcommand{\cptrimhomsprob}{\textupsc{CpTrimHom}}
\newcommand{\cftrimhomsprob}{\textupsc{CfTrimHom}}
\newcommand{\trimembs}[2]{\mathsf{TrimEmb}(#1\to#2)}
\newcommand{\strtrimembs}[2]{\mathsf{StrTrimEmb}(#1\to#2)}
\newcommand{\trimstrembs}[2]{\mathsf{StrTrimEmb}(#1\to#2)}
\newcommand{\trimsubs}[2]{\mathsf{TrimSub}(#1\to#2)}
\newcommand{\trimsubsprob}{\#\textupsc{TrimSub}}
\newcommand{\indtrimsubs}[2]{\mathsf{IndTrimSub}(#1\to#2)}
\newcommand{\indtrimsubsprob}{\text{\ref{prob:trimindsub}}}
\def\tensor{\operatorname{\hypersub{def:hyp_tensor}{\otimes}}}
\def\fredgeco{\hypersub{def:fredgeco}{\rho^\ast}}
\def\fredgecoS#1{\hypersub{def:fredgeco}{{\rho^\ast_{#1}}}}
\def\indno{\hypersub{def:indno}{\alpha}}
\def\frindno{\hypersub{def:frindno}{\alpha^\ast}}
\def\frcoindno{\hypersub{def:frcoindno}{\sigma^\ast}}
\def\frcoindnoS#1{\hypersub{def:frcoindno}{{\sigma^\ast_{#1}}}}
\def\fhtw{\hypersub{def:widthmeasures}{\operatorname{fhw}}}
\def\aw{\hypersub{def:widthmeasures}{\operatorname{aw}}}
\def\tw{\hypersub{def:widthmeasures}{\operatorname{tw}}}
\newcommand\subw{\operatorname{subw}}
\newcommand{\setminusnotrim}[2]{#1 - #2}
\newcommand{\setminustrim}[2]{#1 \setminus #2}
\newcommand{\textupsc}[1]{\textup{\textsc{#1}}}
\newcommand{\dotcup}{\mathbin{\dot{\cup}}}
\newcommand{\poly}{\operatorname{poly}}
\newcommand{\supp}{\operatorname{supp}}
\newcommand{\subsprob}{\text{\ref{prob:sub}}}
\newcommand{\indsubsprob}{\text{\ref{prob:indsub}}}
\newcommand{\homsprob}{\text{\ref{prob:homs}}}
\newcommand{\cphomsprob}{\text{\ref{prob:cphoms}}}
\newcommand{\cfhomsprob}{\text{\ref{prob:cfhoms}}}
\newcommand{\img}{\ensuremath{\mathsf{img}}}
\newcommand{\subs}[2]{\mathsf{Sub}(#1 \to #2)}
\newcommand{\indsubs}[2]{\mathsf{IndSub}(#1 \to #2)}
\newcommand{\id}{\mathsf{id}}
\newcommand{\W}{\mathrm{W}}
\newcommand{\N}{\mathbb{N}}
\newcommand{\Q}{\mathbb{Q}}
\newcommand{\R}{\mathbb{R}}
\newcommand{\Rp}{\mathbb{R}_{\ge 0}}
\newcommand{\scH}{\mathcal{H}}
\newcommand{\scN}{\mathcal{N}}
\newcommand{\scG}{\mathcal{G}}
\newcommand{\scF}{\mathcal{F}}
\newcommand{\scB}{\mathcal{B}}
\newcommand{\scT}{\mathcal{T}}
\newcommand{\scQ}{\mathcal{Q}}
\newcommand{\scS}{\mathcal{S}}
\newcommand{\hyperME}{\textupsc{\ref{prob:hyperme}}}
\newcommand{\ccStyle}[1]{\ensuremath{\mathrm{#1}}\xspace}
\newcommand{\ccSharpP}{\#\ccStyle{P}}
\newcommand{\ccNP}{\ccStyle{NP}}
\newcommand{\ccFPT}{\ccStyle{FPT}}
\newcommand{\ccSharpW}[1]{\#\ccStyle{W}[#1]}
\newcommand{\clique}{\textupsc{Clique}}
\newcommand{\fptred}{\leq^{\scalebox{.6}{\ensuremath{\mathsf{FPT}}}}}
\newcommand{\fpteq}{\equiv^{\scalebox{.6}{\ensuremath{\mathsf{FPT}}}}}
\newcommand{\fptredlin}{\leq^{\scalebox{.6}{\ensuremath{\mathsf{FPT\text{-}LIN}}}}}
\newcommand{\fpteqlin}{\equiv^{\scalebox{.6}{\ensuremath{\mathsf{FPT\text{-}LIN}}}}}
\newcommand{\homs}[2]{\mathsf{Hom}(#1 \to #2)}
\newcommand{\cphoms}[2]{\mathsf{cpHom}(#1 \to #2)}
\newcommand{\cfhoms}[2]{\mathsf{cfHom}(#1 \to #2)}
\newcommand{\auts}[1]{\mathsf{Aut}(#1)}
\renewcommand{\phi}{\varphi}
\newcommand{\CfCommonNeighProb}{\text{\ref{prob:common_neigh}}}
\DeclareMathOperator{\rank}{rank}
\newcommand{\embs}[2]{\mathsf{Emb}(#1 \to #2)}
\newcommand{\strembs}[2]{\mathsf{StrEmb}(#1 \to #2)}
\newcommand{\G}{\mathcal{G}}
\newcommand{\proj}{\ensuremath{\operatorname{proj}}}
\newcommand{\join}{\ensuremath{\operatorname{join}}}
\newcommand{\WeakMinors}{\mathcal{T}}
\newcommand{\fwidth}[1]{{#1}\mhyphen\mathsf{width}}
\newcommand{\size}[1]{\lVert{#1}\rVert}
\newcommand{\ignore}[1]{}
\title{The Parametrised Complexity of Counting~Small~Sub-Hypergraphs}
\author{Marco Bressan}{Department of Computer Science\\University of
Milan, Italy}{marco.bressan@unimi.it}{}{}
\author{Julian Christoph Brinkmann}{Department of Computer Science\\Goethe University Frankfurt,
Germany}{J.Brinkmann@em.uni-frankfurt.de}{https://orcid.org/0009-0000-0332-4543}{}
\author{Holger Dell}{Basic Algorithms Research Copenhagen,
  Denmark\\IT University of Copenhagen,
Denmark}{hold@itu.dk}{https://orcid.org/0000-0001-8955-0786}{}
\author{Marc Roth}{School of Electronic Engineering and Computer
  Science\\ Queen Mary University of
London}{m.roth@qmul.ac.uk}{https://orcid.org/0000-0003-3159-9418}{}
\author{Philip Wellnitz}{National Institute of Informatics\\The
  Graduate University for Advanced Studies, SOKENDAI\\Tokyo,
Japan}{wellnitz@nii.ac.jp}{https://orcid.org/0000-0002-6482-8478}{}
\authorrunning{M. Bressan, J. C. Brinkmann, H. Dell, M. Roth, and P. Wellnitz}
\begin{document}

\maketitle
\pagenumbering{roman}
\begin{abstract}
    Subgraph counting is a fundamental and well-studied problem whose
    computational complexity is well understood.
    Quite surprisingly, the \emph{hypergraph} version of subgraph
    counting has been almost ignored.
    In this work, we address this gap by investigating the most basic
    sub-hypergraph counting
    problem: given a (small) hypergraph $H$ and a (large) hypergraph
    $G$, compute the number of sub-hypergraphs of $G$ isomorphic to $H$.
    Formally, for a family~$\mathcal{H}$ of hypergraphs, let $\#\textsc{Sub}(\mathcal{H})$ be
    the restriction of the problem to $H\in \mathcal{H}$; the induced variant
    $\#\textsc{IndSub}(\mathcal{H})$ is defined analogously.
    Our main contribution is a complete classification of the complexity of
    these problems.
    Assuming the Exponential Time Hypothesis, we prove that $\#\textsc{Sub}(\mathcal{H})$ is
    fixed-parameter tractable if and only if $\mathcal{H}$ has bounded
    \emph{fractional co-independent edge-cover number}, a novel hypergraph parameter we introduce.
    Moreover, $\#\textsc{IndSub}(\mathcal{H})$ is fixed-parameter
    tractable if and only if $\mathcal{H}$ has bounded \emph{fractional edge-cover number}.
    Both results subsume pre-existing results for graphs as special cases.
    We also show that the fixed-parameter tractable cases of
    $\#\textsc{Sub}(\mathcal{H})$ and $\#\textsc{IndSub}(\mathcal{H})$
    are unlikely to be in polynomial time, unless respectively
    $\#\mathrm{P}=\mathrm{P}$ and $\textsc{Graph Isomorphism} \in \mathrm{P}$.
    This shows a separation with the special case of graphs, where the
    fixed-parameter tractable cases are known to actually be in polynomial time.

    From a technical standpoint, we turn to the hypergraph homomorphism
    basis and lift the
    complexity monotonicity principle due to Curticapean, Dell, and
    Marx [STOC 2017] from
    graphs to hypergraphs of unbounded rank.
    Moreover, we crucially rely on the integrality gap for fractional
    independent sets based
    on adaptive width due to Bressan, Lanzinger, and Roth [STOC 2023].
    Our proofs rely on a careful investigation of the adaptive width of the patterns that
    survive in the hypergraph homomorphism basis.
    We also consider a natural variant of sub-hypergraphs where edges
    are trimmed to the
    vertex subset; we show that, surprisingly, in this case complexity
    monotonicity fails.
\end{abstract}

\thispagestyle{plain}
\tinytoc

\clearpage
\pagenumbering{arabic}

\section{Introduction}\label{sec:intro}

In the subgraph counting problem, the task is to count the copies of a
graph $H$ (the \emph{pattern}) in a graph $G$ (the \emph{host}).
As one of the most studied problems in computer science,
a multitude of results give insights into the computational complexity of counting
subgraphs.
We understand the problem tractability along the
class of allowed patterns, both for the induced case~\cite{ChenTW08}
and the not-necessarily induced case~\cite{CurticapeanM14}.
In addition, refined complexity characterizations exist for sparse
graphs~\cite{BR21}, for linear-vs.-non-linear
tractability~\cite{BeraGLSS22}, for dense
graphs~\cite{bressan_counting_2024}, for counting subgraphs with
hereditary and edge-monotone
properties~\cite{FockeR22,doring_counting_2024,DMW25}, and more.
Besides the results themselves, powerful conceptual tools, such
as the ``homomorphism basis''~\cite{CurticapeanDM17} and Fourier
analysis~\cite{CN25-FourierAnalysis}, provide a unifying view
of subgraph counting.

In stark contrast, our understanding of counting substructures of
\emph{hypergraphs} is
almost non-existent:
besides partial bounds for counting homomorphisms, very little is
known.
Yet, a growing body of literature highlights the many applications of
sub-hypergraph counting, ranging from computational
biology~\cite{hwang_learning_2008},
computer
vision~\cite{huang_image_2010,jun_yu_adaptive_2012,agarwal_beyond_2005}, link
prediction~\cite{benson_simplicial_2018,hwang_ahp_2022,li_link_2013}, and
clustering~\cite{amburg_clustering_2020,zhou_learning_2006}, to name
just a few examples.
This is not surprising: counting subgraphs has itself several
applications, notably in
social network analysis and computational biology, and hypergraphs
are strictly more
expressive than graphs.
Thus, one would expect sub-hypergraph counting to receive a
comparable amount of attention
as subgraph counting.
In this work, we make a first and substantial step towards
understanding the complexity of
counting substructures in hypergraphs.

A hypergraph is a pair $H=(V,E)$ where $V$ is a finite set
and $E \subseteq 2^V \setminus \{\emptyset\}$.
Given two hypergraphs $H$ and $G$, the number of copies of $H$ in $G$
is the number of
sub-hypergraphs of $G$ that are isomorphic to $H$.%
\footnote{By sub-hypergraph of $G$ we mean a hypergraph $G'$ with
  $V(G') \subseteq V(G)$
and $E(G')\subseteq E(G)$.}
We may phrase the basic subgraph counting problem for hypergraphs as
follows: for a
family of hypergraphs $\scH$, given a hypergraph $H \in \scH$ and a
hypergraph $G$, count
the number of copies of $H$ in $G$.
We denote this problem by $\subsprob(\scH)$, and we define similarly
the \emph{induced}
variant~$\indsubsprob(\scH)$ of the problem.
The goal of this paper is to understand the complexity of these
problems when parametrised by the size of the pattern hypergraph~$H$.
\begin{center}
    For which hypergraph families $\scH$ are the problems
    $\subsprob(\scH)$ and
    $\indsubsprob(\scH)$ fixed-parameter tractable,\linebreak that is,
    solvable in time \(f(|H|)\cdot |G|^{O(1)}\) where \(f\) is some
    computable function?
\end{center}
We answer this question with full complexity dichotomies expressed
through two invariants.
The first one, the \emph{fractional co-independent edge-cover number}
$\frcoindno(\scH)$, is introduced in this work and characterises the
tractability of $\subsprob(\scH)$.
The second one, the \emph{fractional edge-cover number}
$\fredgeco(\scH)$ of Grohe and Marx~\cite{GroheM14}, characterises
the tractability of $\indsubsprob(\scH)$.
Both invariants are defined formally in \cref{sec:overview-notions}.
Our complexity classifications are as follows.

\begin{restatable*}
    {mtheorem}{rstmthmone}
    \dglabel{thm:classification_subsprob}[lem:classification_subsprob_1,lem:classification_subsprob_2]
    Fix a recursively enumerable family $\scH$ of hypergraphs.
    \begin{itemize}
        \item If $\scH$ has bounded fractional co-independent edge-cover number,
            $\frcoindno(\scH)<\infty$, then $\subsprob(\scH)$ parametrised by~$|H|$ is
            FPT and solvable in time $f(H) \cdot
            \size{G}^{\frcoindno(\scH)+O(1)}$ for some function $f$.
        \item If $\scH$ has unbounded fractional co-independent edge-cover number,
            $\frcoindno(\scH)=\infty$, then $\subsprob(\scH)$ parametrised
            by~$|H|$ is not
            FPT and not solvable in time $f(H) \cdot
            \size{G}^{o\big(\sqrt[4]{\frcoindno(H)}\big)}$ for any function
            $f$, unless ETH fails. \qedhere
    \end{itemize}
\end{restatable*}

\begin{restatable*}
    {mtheorem}{rstmthmtwo}
    \dglabel{thm:classification_indsubsprob}[lem:classification_indsubsprob_1,lem:classification_indsubsprob_2]
    Fix a recursively enumerable family $\scH$ of hypergraphs.
    \begin{itemize}
        \item If $\scH$ has bounded fractional edge-cover number,
            $\fredgeco(\scH)<\infty$, then $\indsubsprob(\scH)$
            parametrised by~$|H|$ is
            FPT and solvable in time $f(H) \cdot
            \size{G}^{\fredgeco(\scH)+O(1)}$ for some function $f$.
        \item If $\scH$ has unbounded fractional edge-cover number,
            $\fredgeco(\scH)=\infty$, then $\indsubsprob(\scH)$
            parametrised by~$|H|$ is
            not FPT and not solvable in time $f(H) \cdot
            \size{G}^{o\big(\sqrt[4]{\fredgeco(H)}\big)}$ for any function
            $f$, unless ETH fails.
            \qedhere
    \end{itemize}
\end{restatable*}

\Cref{sec:technical_overview} contains formal definitions of the
notation\footnote{One may as well use the provided hyperlinks to jump
to their definitions.} of \cref{thm:classification_subsprob} and
\cref{thm:classification_indsubsprob}, as well as an overview of their proofs.
\Cref{fig:sunflowers} and \cref{ghzqkxmeqg} show examples of
a hard and an easy hypergraph family for the two problems.
Next, we discuss the challenges that make the criteria of
\cref{thm:classification_subsprob,thm:classification_indsubsprob}
different, and more involved to analyse, compared to their equivalents
for graphs.

\begin{figure}[t]
    \renewcommand\tabularxcolumn[1]{m{#1}}
    \centering
    \begin{tabularx}{\linewidth}{*{2}{>{\centering\arraybackslash}X}}
        \includegraphics[scale=1.5]{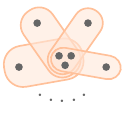}
    &
    \includegraphics[scale=1.5]{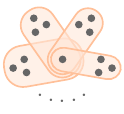}
    \\
    \begin{subfigure}[t]{\linewidth}
        \caption{The family of $4$-uniform sunflowers with core size three.
            By \cref{thm:classification_subsprob}, counting
        the sunflowers is easy for this family.}
    \end{subfigure}
    &
    \begin{subfigure}[t]{\linewidth}
        \caption{The family of $4$-uniform sunflowers with core size one.
            By \cref{thm:classification_subsprob}, counting
        the sunflowers is hard for this family.}
    \end{subfigure}
    \end{tabularx}
    \caption{Two families of \emph{sunflowers}, that is, hypergraphs
        whose edges pairwise
        intersect in the same set, called \emph{core}.
        The example can be easily extended to unbounded rank by taking
        the families of $\ell$-uniform sunflowers with cores of sizes $1$ and
    $\ell-1$, respectively.}
    \label{fig:sunflowers}
\end{figure}

\begin{figure}[t]
    \renewcommand\tabularxcolumn[1]{m{#1}}
    \centering
    \begin{tabularx}{\linewidth}{*{2}{>{\centering\arraybackslash}X}}
        \includegraphics[scale=.9]{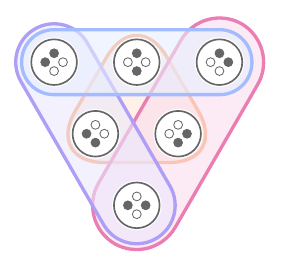}
    &
    \includegraphics[scale=.9]{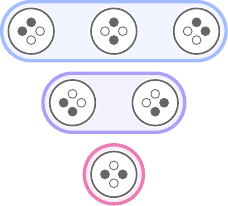}
    \\
    \begin{subfigure}[t]{\linewidth}
        \caption{The hypergraph family from
            \cite[Example~4.2]{GroheM14}:
            for an integer \(n > 0\), the hypergraph has a vertex for each
            size-\(n\) subset of \(\{1,\dots,2n\}\) (depicted: \(n=2\)) and for
            each \(1 \le i \le 2n\) has an edge containing all vertices
            that include \(i\).
            We obtain a fractional edge cover of \(2\) by assigning
            \(1/n\) to each edge.
            By \cref{thm:classification_indsubsprob}, counting
        is easy for this family.}
    \end{subfigure}
    &
    \begin{subfigure}[t]{\linewidth}
        \caption{Consider the family of all hypergraphs whose edges are
            pairwise disjoint.
            Clearly, this family has an unbounded fractional edge cover.
            By \cref{thm:classification_indsubsprob}, counting
        is hard for this family.}
    \end{subfigure}
    \end{tabularx}
    \caption{Two families of \emph{hypergraphs} that fall into either
        of the two cases of
    \cref{thm:classification_indsubsprob}.}
    \label{ghzqkxmeqg}
\end{figure}

Consider the problem of counting \emph{homomorphisms}.
If $H$ and $G$ are hypergraphs, a homomorphism from $H$ to $G$ is a
map $\varphi \colon
V(H) \to V(G)$ that preserves edges, that is, such that $\varphi(e)
\in E(G)$ for every $e
\in E(H)$.
The problem $\homsprob(\scH)$ asks, given $H \in \scH$ and a
hypergraph~$G$ as input, to
count all homomorphisms from $H$ to $G$.
When restricted to graphs, the complexity of counting homomorphism
has been fully
understood for over 20 years~\cite{DalmauJ04}.
Assuming ETH, $\homsprob(\scH)$ is polynomial-time computable if and only if the
\emph{treewidth}%
\footnote{A measure of how close a graph is to being a tree: paths
    and trees have constant
treewidth, whereas grids and complete graphs have unbounded treewidth.}
of $\scH$ is bounded.

In stark contrast, a complete dichotomy for counting homomorphisms
between general
hypergraphs is still an open problem; only partial results are known.
Loosely speaking, this is because hypergraphs have \emph{unbounded
rank}---that is, edges
of arbitrarily large cardinality---and this changes the nature of the
problem radically, so that many of the tractability criteria from the graph version do not
carry over.
For instance, it is well-known that the treewidth of hypergraphs
(which is the treewidth
of the underlying Gaifman graphs) is \emph{not} the correct criterion
for the tractability
of $\homsprob(\scH)$.\footnote{Let $\scH$ be the family of
  hypergraphs consisting of one
  edge that contains all vertices. While the Gaifman graphs of $\scH$
  have unbounded
  treewidth, the problem $\homsprob(\scH)$ essentially boils down to
  counting edges of
size~$=|H|$.}
Instead, one needs to turn to more complex generalizations of
treewidth, such as the
fractional hypertree width~\cite{DBLP:journals/jcss/GottlobLS02}, the submodular
width~\cite{Marx13}, or the fractional edge-cover number~\cite{GroheM14}.

Similarly to the case of $\homsprob(\scH)$, it is not hard to see
that the tractability
criteria for $\subsprob(\scH)$ and $\indsubsprob(\scH)$ do not
directly carry over from
graphs to hypergraphs either.
For instance, in the case of graphs $\indsubsprob(\scH)$ is tractable
if and only if
$\scH$ is finite, but in the case of hypergraphs $\indsubsprob(\scH)$
can be tractable
even for infinite classes.
We overcome these challenges in the present paper: curiously, we
obtain our \emph{full}
classification results for $\subsprob(\scH)$ and $\indsubsprob(\scH)$
from the known,
\emph{partial} results for $\homsprob(\scH)$.

We close with a remark on the lower bounds of
\cref{thm:classification_subsprob,thm:classification_indsubsprob};
in particular their dependence on $G$.
The terms $\size{G}^{o(\sqrt[4]{\frcoindno(H)})}$ and
$\size{G}^{o(\sqrt[4]{\fredgeco(H)})}$, respectively,
mirror precisely the result by Marx on the
complexity of counting homomorphisms, which under ETH cannot be done
in time $f(H)\cdot \size{G}^{o(\sqrt[4]{\aw(H)})}$, see
\cref{lem:unbounded_aw_LB}.
Understanding whether Marx's bound is tight or not is a widely open
problem in the area of subgraph counting.
The proofs of our lower bounds are constructed carefully so as
to inherit precisely that bound, without introducing further gaps.
Therefore, our lower bounds are the best one could prove without
making substantial progress on said open problem.
In fact, our proofs show that any improvement on
our lower bounds implies an improvement on Marx's bounds.

\paragraph*{Discussion of our result for $\subsprob$
(\cref{thm:classification_subsprob})}
For graphs, the fractional co-independent edge-cover
number~$\frcoindno$ is asymptotically
equivalent to the vertex-cover number, which is the tractability
criterion for subgraph
counting proven in~\cite{CurticapeanM14}.
This means that \cref{thm:classification_subsprob} captures the graph
setting as a
special case, as expected.
For hypergraphs, $\frcoindno$ is \emph{not} equivalent to the
vertex-cover number of the
hypergraph for two reasons: First, a vertex-cover (or ``hitting
set'') of a hypergraph
intersects every edge, whereas we need a \emph{co-independent set},
which intersects every
edge in all but at most one vertex.
Secondly, we truly need the fractional version, since we may design
families $\scH$ where
$\frcoindno(\scH)$ is bounded but its non-fractional version is unbounded.

In the case of graphs, the fixed-parameter tractable cases of
$\subsprob(\scH)$ are in
fact polynomial-time computable, see~\cite{CurticapeanM14}.
In contrast, for hypergraphs, the same is most likely not true.
\begin{restatable*}{theorem}{rstmtonelemone}\label{rem:4-3-1}
    There are hypergraph families $\scH$ with $\frcoindno(\scH)<\infty$ where
    $\subsprob(\scH)$ is $\ccSharpP$-hard.
\end{restatable*}

\paragraph*{Discussion of our results for
$\indsubsprob$ (\cref{thm:classification_indsubsprob})}
For graphs, the fractional edge-cover number~$\fredgeco$ specialises
to essentially the
order of the graph,%
\footnote{This is true if every vertex has its own singleton edge,
    which we may assume
without loss of generality, see \cref{sec:prelim}.}
which is the tractability criterion for $\indsubsprob(\scH)$ if
$\scH$ is a class of
graphs \cite{CurticapeanDM17}.
Thus, \cref{thm:classification_indsubsprob} captures the graph
setting as a special case.
Again, it is easy to see that $\fredgeco(H)$ is not equivalent to
$|H|$ for hypergraphs,
and that there are classes with unbounded~$|H|$ but bounded $\fredgeco(H)$.

As above, one may wonder whether the case $\fredgeco(\scH)< \infty$
is not just $\ccFPT$ but even polynomial-time computable.
Interestingly, we seem to get intermediate problems in this case.
\begin{restatable*}{theorem}{rstmtlemone}
    \dglabel{lem:quasiP_indsubs}(If $\fredgeco(\scH)<\infty$, then
    $\indsubsprob(\scH)$ can be solved in quasi-polynomial time)
    If $\fredgeco(\scH)<\infty$, then $\indsubsprob(\scH)$ can be solved in
    quasi-polynomial time in the size of the input hypergraphs; that is, in time
    $(\size{H}+\size{G})^{(\ln \size{H})^{\Oh(1)}}$.
\end{restatable*}
Let us briefly discuss the quasi-polynomial running time in
\cref{lem:quasiP_indsubs}.
First, we solve \(\indsubsprob(\scH)\) by a quasi-polynomial number
of hypergraph isomorphism subproblems, each requiring quasi-polynomial
time.
Due to the quasi-polynomial running time, if
$\indsubsprob(\scH)$ was $\ccSharpP$-hard for
$\fredgeco(\scH)<\infty$, then all problems in
$\ccSharpP$ and $\ccNP$ had quasipolynomial-time algorithms---a rather
unlikely scenario.
Second, we show that improving \cref{lem:quasiP_indsubs} to polynomial time
would imply the same for $\textupsc{Graph Isomorphism}$.
\begin{restatable*}{theorem}{rstmtlemtwo}
    \dglabel{lem:GI_hard_indsubs}(There are hypergraph families $\scH$ with
    $\fredgeco(\scH)<\infty$ that admit a polynomial-time reduction
    from $\textupsc{Graph Isomorphism}$ to
    $\indsubsprob(\scH)$.)
    There are hypergraph families $\scH$ with $\fredgeco(\scH)<\infty$ that admit
    a polynomial-time reduction from $\textupsc{Graph Isomorphism}$ to
    $\indsubsprob(\scH)$.
\end{restatable*}

\paragraph*{Trimmed sub-hypergraphs, and an impossibility result}
Compared to graphs, when considering potential definitions for an induced sub-hypergraph
we have multiple options as
to how we should treat edges from which some, but not all vertices are deleted.
Indeed, while in the notions considered so far, edges
that contain a deleted vertices are dropped entirely,
a perhaps more natural option would be to instead \emph{trim} such edges,
that is, to keep the edge but with the deleted vertices removed---giving rise to
\emph{trimmed sub-hypergraphs}.
Formally, a {trimmed sub-hypergraph} of $G$ is any hypergraph $G'$ with $V(G') \subseteq
V(G)$ such that every $e' \in E(G')$ has the form $e' = e \cap V(G')$ for some $e \in
E(G)$.
The trimmed sub-hypergraph of $G$ \emph{induced} by $X \subseteq V(G)$ is
the hypergraph
$\trimsub{G}{X} \coloneqq \big(X, \trim{E}{X}\big)$, where
$\trim{E}{X} = \{e \cap X : e
\in E(G), e \cap X \ne \emptyset\}$.
Consult \cref{fig:subgraphs} for an example; we call the resulting problems
$\trimsubsprob(\scH)$ and $\indtrimsubsprob(\scH)$.

\begin{figure}[t]
  \renewcommand\tabularxcolumn[1]{m{#1}}
  \centering
  \begin{tabularx}{\linewidth}{*{4}{>{\centering\arraybackslash}X}}
    \includegraphics[scale=1.39]{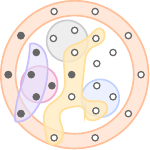}
    &
    \includegraphics[scale=1.52]{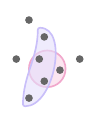}
    &
    \includegraphics[scale=1.52]{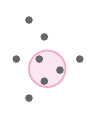}
    &
    \includegraphics[scale=1.52]{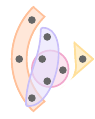}
    \\
    \begin{subfigure}[t]{\linewidth}
      \caption{
        A hypergraph in the shape of a pizza.
        Depicted are three salami (\emph{circular region with three vertices}),
        a ham (\emph{region with four vertices}), a blob of cheese (\emph{yellow
        blob}), and the crust (\emph{orange annulus}).
      }\label{fig:pizza-a}
    \end{subfigure}
    &
    \begin{subfigure}[t]{\linewidth}
      \caption{The sub-hypergraph induced by the black vertices of
        \cref{fig:pizza-a}.
        Observe that only one salami and one ham are entirely contained in the
        \emph{black} vertices---all other edges, such as the crust are dropped.
      }\label{4-4-to}
    \end{subfigure}
    &
    \begin{subfigure}[t]{\linewidth}
      \caption{Deleting (eating) a further edge compared to
        \cref{4-4-to} (in this
      case one ham), we obtain a sub-hypergraph of \cref{fig:pizza-a}.}
    \end{subfigure}
    &
    \begin{subfigure}[t]{\linewidth}
      \caption{
        The induced trimmed sub-hypergraph of the pizza slice formed by the
        \emph{black} vertices of \cref{fig:pizza-a}, parts of the crust and the
        cheese of the pizza are trimmed and survive.
      }
    \end{subfigure}
  \end{tabularx}
  \caption{A hypergraph and the various notions of subgraphs that we
    use in this work,
    with vertices (\emph{black} and \emph{white bullets}) and edges
  (\emph{shaded regions}).}
  \label{fig:subgraphs}
\end{figure}

As it turns out, switching to \emph{trimming edges}
drastically changes both the complexity of the corresponding pattern counting problems and their
inherent algebraic structure.
Tracing the steps of our previous approach, we observe that while the ideas of
Lovász~\cite{Lovasz12} still apply,
the complexity monotonicity principle fails
for the trimmed version of homomorphisms.
Indeed, we may express $\#\indtrimsubs{H}{\star}$\footnote{ $f(x,\star)$ denotes the function $y \mapsto f(x,y)$,
that is, $\#\indtrimsubs{H}{\star}$ maps a hypergraph $G$ to $\#\indtrimsubs{H}{G}$;
$\#\trimsubs{F}{\star}$ and $\#\trimhoms{F}{\star}$ are defined likewise.}
as a linear combination of~$\#\trimsubs{F}{\star}$,
which we may in turn express as a linear combination of $\#\trimhoms{F}{\star}$.
However, we prove the existence of a hypergraph family $\scH$ such that
\begin{itemize}[noitemsep]
  \item $\indtrimsubsprob(\scH)$ is easy, but
  \item some term in the corresponding linear combination of
    $\#\trimhoms{F}{\star}$ is
    hard.
\end{itemize}
See \cref{thm:main_trimmed} for a precise statement of our result.
This implies that one \emph{cannot} prove hardness of
$\indtrimsubsprob$ by leveraging the
hardness of its cousin problem $\trimhomsprob$ in the standard way.
This is somewhat surprising: usually, the natural homomorphism basis
for the counting
problem at hand yields complexity monotonicity, for instance, the
homomorphism basis for
subgraphs or sub-hypergraphs.
Yet, for trimmed sub-hypergraphs this is not true.

We are able to pinpoint exactly where the complexity monotonicity
principle fails: we
prove that, unless all problems in $\#\W[2]$ are fixed-parameter
tractable, there is no
efficiently computable associative hypergraph operator $\boxplus$ that behaves
multiplicatively with respect to \emph{trimmed} homomorphism counts
(as the tensor product
does for graphs).
We also show that replacing the trimmed homomorphism basis with the standard homomorphism basis does not help.
In fact, we show that standard homomorphism counts and trimmed
homomorphism counts are
\emph{linearly independent} functions.
As a consequence, trimmed sub-hypergraph counts \emph{cannot} be
expressed as linear
combinations of standard hypergraph homomorphisms.
These results suggest that alternative notions of sub-hypergraph may
be more delicate to
deal with, and may require novel analytical tools.

\subsection*{Organisation of the paper}
In \cref{sec:technical_overview}, we sketch the techniques and proofs
required to
establish \cref{thm:classification_subsprob,thm:classification_indsubsprob}.
\Cref{sec:prelim} introduces additional background material on hypergraphs, as well as
parametrised and fine-grained complexity theory.
We formally set up the hypergraph homomorphism basis in
\cref{sub:homsprob}, and formally
prove
\cref{thm:classification_subsprob,thm:classification_indsubsprob} as well as
\cref{rem:4-3-1,lem:quasiP_indsubs,lem:GI_hard_indsubs} in \cref{sec:sub}.
Finally, our observations about trimmed sub-hypergraphs are contained in
\cref{sec:trimmed_sec}.

\subsection*{Related work}
\subparagraph*{Pattern counting in graphs.}
In addition to the complexity classifications for counting small
subgraphs~\cite{CurticapeanM14} and induced subgraphs~\cite{ChenTW08}
mentioned in the
opening paragraphs, a tremendous amount of work has been done on related
pattern counting
problems in graphs. Most complexity results on small pattern counting
that were published
after 2017 use the ``homomorphism basis'' in some way. Examples
include results on
counting homomorphisms with inequality constraints due to
Roth~\cite{Roth17}, and on the
generalised induced subgraph counting problem introduced by Jerrum and
Meeks~\cite{JerrumM15}, which was almost completely resolved by Focke and
Roth~\cite{FockeR22}, and by D\"oring, Marx, and
Wellnitz~\cite{doring_counting_2024,DMW25}.

In the context of these and many related results, novel tools for the
analysis of the
homomorphism basis have been developed, which rely on matroid
theory~\cite{Roth17},
simplicial topology~\cite{RothS20}, and Fourier
analysis~\cite{CN25-FourierAnalysis}, to name a few.
However, none of these tools seem to easily transfer to hypergraphs of
unbounded rank, which is what we study in this paper.

\subparagraph*{Pattern counting in hypergraphs and relational structures.}
Pattern counting in hypergraphs has been studied almost exclusively
in the context of
constraint satisfaction problems and database theory. The problems
studied in those
contexts turn out to be versions of homomorphism counting: Given a small pattern
hypergraph $H$ and a large host hypergraph $G$, count the number of
homomorphisms
(edge-preserving mappings) from $H$ to $G$.

Dalmau and Jonsson~\cite{DalmauJ04} establish a parametrised
complexity dichotomy for
counting homomorphisms between hypergraphs of bounded rank. In the
much more challenging
case of unbounded rank, Grohe and Marx~\cite{GroheM14}\footnote{The
  algorithm presented
  in~\cite{GroheM14} is stated for the decision version of the
  homomorphism problem, but it
easily extends to counting.} construct a polynomial-time algorithm for counting
homomorphisms from hypergraphs of bounded fractional hypertree width.
Marx~\cite{Marx13}
partially complements this with a hardness result for counting
homomorphisms from
hypergraphs of unbounded adaptive width. The complexity for unbounded
fractional hypertree
width and bounded adaptive width is still unknown. Nevertheless, the
known complexity
results for the homomorphism counting problem in hypergraphs of
unbounded rank form the
cornerstones of our main results; we discuss this in detail in
\cref{sec:technical_overview}.

In the context of database theory, answers of select-project-join
queries are modelled as
(partial) homomorphisms from a small relational structure (associated
with the query) to a
large relational structure (the database).
Relational structures may be viewed as edge-coloured hypergraphs with
ordered hyperedges.
The problem of counting answers to database queries has been
thoroughly studied by Chen,
Durand, and Mengel~\cite{DurandM15,ChenM15,ChenM16} and by Dell,
Roth, and Wellnitz~\cite{DellRW19}.
The latter work also considers (in the bounded rank setting)
select-project-join queries
with inequalities between the variables of the query.
If all inequalities are present, this class of queries may be used to
count embeddings
into relational structures; however the algorithmic results
in~\cite{DellRW19} on the
complexity of counting answers to such queries do not yield an
explicit tractability
criterion, and they do not apply to the unbounded rank setting.

\section{Technical overview}\label{sec:technical_overview}
We give a detailed proof overview for
\cref{thm:classification_subsprob,thm:classification_indsubsprob}.
On a high level, we prove both theorems by a careful analysis that
employs the complexity
monotonicity principle, which we lift from graphs to hypergraphs.
This means that we express sub-hypergraph counts as linear
combinations of hypergraph
homomorphism counts, and analyse these linear combinations to
identify terms that are hard
to evaluate.

We begin in \cref{sec:overview-FPT,sec:overview-notions} by more
formally defining the
notions in the statements of these results.
\Cref{sub:overview_basis} introduces our main technical tools: the
hypergraph homomorphism
basis, hypergraph motif parameters, and Dedekind interpolation.
\Cref{sub:overview_subsprob,sub:overview_indsubsprob} show how to
leverage these tools to obtain our classifications for $\subsprob$ and $\indsubsprob$.

\subsection*{Fixed-parameter tractability}\label{sec:overview-FPT}
In line with previous work, our notion of tractability is \emph{fixed-parameter
tractability}.
Loosely speaking, a problem is fixed-parameter tractable if it can be
solved in time
$f(k)\cdot \poly(n)$ for some computable function $f$, where $k \in
\N$ is some suitable
parameter and $n$ is the input size.
In this work, we let $k=|V(H)|$, which is the standard
parametrisation used in previous
work.
The idea is that~$k$ is often much smaller than $n$, hence we may allow for a
superpolynomial dependence on $k$ alone.
One typical example is $H$ representing a query typed by a user and
$G$ representing a
large database.
The class of all fixed-parameter tractable problems is denoted by $\ccFPT$.

To provide evidence that a problem is not fixed-parameter tractable, we use the
Exponential Time Hypothesis (ETH) by Impagliazzo and
Paturi~\cite{ImpagliazzoP01}, which
states that 3-SAT cannot be solved in time $\exp(o(n))$, where~$n$ is
the number of
variables in the input formula.
Moreover, we use hardness for the classes $\ccSharpW{1}$ and
$\ccSharpW{2}$, which are two
analogues of $\ccSharpP$ in parametrised complexity.
ETH implies that $\ccSharpW{1}$-hard and $\ccSharpW{2}$-hard problems are not
fixed-parameter tractable.
For a more thorough introduction to fixed-parameter tractability, see
\Cref{sec:prelim}.

\subsection*{Fractional (co-independent) edge-cover number}\label{sec:overview-notions}
\Cref{thm:classification_subsprob,thm:classification_indsubsprob} refer to the
fractional co-independent edge-cover number and to the fractional
edge-cover number.
The latter notion was introduced by Grohe and Marx~\cite{GroheM14},
but the former appears
to be new.
We introduce both notions here.

Let $H=(V,E)$ be a hypergraph, and fix an $X \subseteq V$.
We say $X$ is \emph{co-independent} if its complement $\overline{X} =
V \setminus X$ is strongly independent, that is, if $|e \cap
\overline{X}| \le 1$ holds for every $e \in E$.
We only consider strongly independent sets in this work and
refer to them simply as independent sets from now on.
A function $\xi\colon E \rightarrow \Rp$ is a \emph{fractional
edge-cover} of $X$ if
$\sum_{e: v\in e} \xi(e) \geq 1$ holds for all $v\in X$.
The \emph{weight} of $\xi$ is $\sum_{e \in E} \xi(e)$.
The fractional edge-cover number
of $X$ in $H$, denoted by $\rho^\ast_H(X)$, is the minimum weight of
a fractional
edge-cover of $X$.
The \emph{fractional edge-cover number} of $H$ is $\fredgeco(H)
\coloneqq\rho^\ast_H(V)$.
Finally, the fractional co-independent edge-cover number of $H$, denoted by
$\frcoindno(H)$, is the minimum fractional edge-cover number
$\rho^\ast_H(X)$ over all
co-independent sets $X \subseteq V$.

\subsection{The hypergraph homomorphism basis}\label{sub:overview_basis}
The hardest part in obtaining complexity classifications is proving lower bounds.
In our case, we prove lower bounds by reduction from the problem of counting
\emph{homomorphisms} between hypergraphs, for which fine-grained lower bounds based on ETH
are already known~\cite{Marx13}.
To perform our reduction, we need to establish a hypergraph version of what is often
called the \emph{complexity monotonicity} principle.
This is a framework that provides a unified view on the complexity of subgraph counting
problems by expressing subgraph counts in the so-called homomorphism basis.

The complexity monotonicity principle for graphs, proved in a landmark paper by
Curticapean, Dell, and Marx~\cite{CurticapeanDM17}, works as follows.
A \emph{graph motif parameter} is any function~$\zeta$ that maps an input graph~$G$ to a
finite linear combination of homomorphism counts; that is,
\begin{equation}\label{eq:zeta_G}
    \zeta(G) = \sum_F \gamma(F)\cdot \#\homs{F}{G},
\end{equation}
where the sum is over all (isomorphism types of) graphs, and
$\gamma(F) \ne 0$ only for finitely many graphs.
The complexity monotonicity principle then says that computing $\zeta$ is  at least as
hard as computing every single function $\#\homs{F}{\star}$ such that $\gamma(F)\ne 0$.
It is not hard to see that subgraph counts are graph motif parameters. That is,
$\#\subs{H}{\star}=\zeta$ for some $\zeta = \zeta_H$ that depends on $H$ through the
function $\gamma = \gamma_H$.
Thus, to obtain a lower bound for $\subsprob$, it suffices to find graphs $F$ with
$\gamma_H(F)\ne 0$ such that $\#\homs{F}{\star}$ is hard.
The same holds for $\#\indsubs{H}{\star}$.

Since the homomorphism counting problem \emph{on graphs} is fully
understood~\cite{DalmauJ04,Marx10}, this principle induces a general
framework for
analysing counting problems.
If a function from graphs to rationals can be cast as a linear
combination of homomorphism
counts, then the complexity analysis of the function reduces to the
purely combinatorial
task of understanding which graphs have a non-zero coefficient in the
homomorphism basis.
This framework has seen tremendous success, resolving various open problems in
parametrised and fine-grained counting complexity theory and establishing novel
classifications for a wide range of pattern counting
problems~\cite{CurticapeanDM17,DellRW19,RothSW20,BeraGLSS22,BLR23}.

For our complexity lower bounds we need a version of
the homomorphism basis for hypergraphs---which does not seem to have been formulated yet.
While researchers obtained such a result for the closely related \emph{relational structures}
in the context of database query evaluation~\cite{ChenM16,DellRW19},
unfortunately,
as the edges in hypergraphs are unordered compared to ordered relational tuples,
we are unable to directly apply the complexity
monotonicity principle for relational structures to hypergraphs.
As our first technical contribution, we thus introduce a
hypergraph version of the homomorphism basis.
In particular, we define \emph{hypergraph motif parameters}.
\begin{restatable}[Hypergraph Motif Parameter]{definition}{hyperparam}
    \dglabel{def:hypmotifparam}
    Let $\gamma$ be a function from hypergraphs to $\Q$ with finite support. The
    hypergraph motif parameter defined by $\gamma$, denoted by $\zeta_\gamma$, is the
    function from hypergraphs to $\Q$ that for every hypergraph $G$ satisfies
    \begin{equation}
        \zeta_\gamma(G) = \sum_F \gamma(F)\cdot \#\homs{F}{G}, \label{eq:zeta_gamma}
    \end{equation}
    where the sum is over all isomorphism types of hypergraphs.
\end{restatable}
\Cref{def:hypmotifparam} is the natural hypergraph generalization of the notion of graph
motif parameter introduced by~\cite{CurticapeanDM17}.

It is not hard to show that $\#\subs{H}{\star}$ is a hypergraph motif parameter, and the
same for $\#\indsubs{H}{\star}$.
Thus, $\#\subs{H}{\star}=\zeta_\gamma$ where $\gamma=\gamma_H$ depends on $H$.
In other words, for every hypergraph $H$ the ``subhypergraph count'' function
$\#\subs{H}{\star}$ is identified by the corresponding
function $\gamma_H$.
As a consequence, every family of hypergraphs $\scH$ can be associated with a family of
functions $\Gamma_\scH := \{\gamma_H : H \in \scH\}$.
Therefore, the input $H \in \scH$ can be encoded equivalently as a function $\gamma \in
\Gamma_{\scH}$.
This leads to the following general version of the sub-hypergraph counting problem, stated
in terms of hypergraph motif parameters.
\begin{problem}[HyperME]{Hypergraph Motif Evaluation (HyperME($\Gamma$))}
    \label{prob:hyperme}
    \PInput{A hypergraph $G$ and a finitely supported function
        $\gamma \in \Gamma$ from
    hypergraphs to $\Q$.}
    \POutput{\(\zeta_\gamma(G)\).}
    \PParameter{$\lVert\gamma\rVert$, the description length of \(\gamma\).}
\end{problem}
This (somewhat laborious) abstraction allows us to elegantly state our main technical tool---the hypergraph
generalization of the complexity monotonicity principle of~\cite{CurticapeanDM17}.
For every family $\Gamma$ of hypergraph motif parameters let $\scH_\Gamma
\coloneqq\bigcup_{\gamma \in \Gamma}\supp(\gamma)$.
\begin{theoremq}[Complexity Monotonicity for Hypergraph Motif Parameters]
    \dglabel{lem:monotonicity_intro}
    For every recursively enumerable family $\Gamma$ of hypergraph motif parameters,
    \begin{equation*}
        \hyperME(\Gamma) \fpteqlin \homsprob(\scH_\Gamma).\qedhere
    \end{equation*}
\end{theoremq}
Intuitively, \cref{lem:monotonicity_intro} says that evaluating \(\zeta_\gamma(G)\) is as
hard as evaluating~$\#\homs{H}{G}$ for every hypergraph $H$ in the support of $\gamma$.

In the next two sections we describe how we instantiate \cref{lem:monotonicity_intro} to
the hypergraph motif parameters $\#\subs{\scH}{\star}$ and $\#\indsubs{\scH}{\star}$ and
obtain the first crucial step in our lower bounds.

We conclude with a technical discussion on the proof of \cref{lem:monotonicity_intro}.
First, the reduction from $\hyperME(\Gamma)$ to
$\homsprob(\scH_\Gamma)$ is easy, as
by \cref{eq:zeta_gamma}, $\zeta_\gamma(G)$ reduces to evaluating
$\#\homs{H}{G}$ for
every $H \in \supp(\gamma)$.
The non-trivial direction is the reduction from
$\homsprob(\scH_\Gamma)$ to $\hyperME(\Gamma)$.
We rely on ``Dedekind Interpolation'', an abstract method for isolating terms in finite
linear combinations of semigroup\footnote{Recall that a semigroup consists of a set and an
associative operation on said set. In contrast to groups, the existence of neutral and
inverse elements is not necessary.}
homomorphisms introduced by Bressan, Lanzinger, and Roth~\cite{BLR23}.
Here we state a simplified version of the method, and defer the full
version to \cref{sub:homsprob}.

\begin{theoremq}[Efficiently isolating terms in finite linear combinations of a semigroup,
    ``Dedekind interpolation'', {\cite[simplified]{BLR23}}]\dglabel^{thm:dedekind_intro}
    Let $(\mathrm{G},\ast)$ be a semigroup.
    Let $(\varphi_i)_{i\in[k]}$ with $\varphi_i\colon \mathrm{G} \to
    \Q$ be pairwise distinct and non-zero semigroup homomorphisms from
    $(\mathrm{G},\ast)$ into $(\Q,\cdot)$, that is,
    $\varphi_i(g_1\ast g_2)= \varphi_i(g_1)\cdot \varphi_i(g_2)$ for
    all $i\in[k]$ and $g_1,g_2\in \mathrm{G}$.
    Let $\phi \colon \mathrm{G} \to \Q$ be any function of the form
    \begin{equation}\label{eq:dedekind_intro}
        \phi \colon g \mapsto \sum_{i=1}^k a_i \cdot \varphi_i(g),
    \end{equation}
    where the $a_i$ are rational numbers. Then there is an efficient
    algorithm $\hat{\mathbb{A}}$ which is equipped with oracle access
    to $\phi$ and which computes the coefficients $a_1,\dots, a_k$.
\end{theoremq}

In our application of Dedekind interpolation, the semigroup
homomorphisms are the
functions $G \mapsto \#\homs{H}{G}$.
To this end, we need to define an operation $\tensor$ on hypergraphs
that satisfies $\#\homs{H}{F\tensor G} = \#\homs{H}{F} \cdot\#\homs{H}{G}$.
To no surprise to readers familiar with this area, a version
of the tensor product turns out to be the right answer here.
For hypergraphs of bounded rank, this is fairly straightforward;
however, as we are dealing with hypergraphs of \emph{unbounded}
rank~$r$, we need to define the tensor $F\tensor G$ very carefully in
order to make sure that its
size is not too large; for example, we cannot just consider
\emph{all} size-$r$ subsets of each edge of $G$ when constructing the
tensor product, as this may make the product size explode with $|G|^r$,
which is not fixed-parameter tractable.

Nevertheless, we are able to define a notion of the tensor product
that is suitable for
Dedekind interpolation.
Then, for the reduction in \cref{lem:monotonicity_intro}, given a
hypergraph $G$ and an
oracle for a hypergraph motif parameter $\zeta_\gamma$ with support
$H_1,\dots, H_k$, we compute, for each hypergraph $F$,
\begin{equation}
    F \mapsto \sum_{i=1}^k \gamma(H_i)\cdot \#\homs{H_i}{F\tensor G} =
    \sum_{i=1}^k
    \left(\gamma(H_i)\cdot \#\homs{H_i}{F}\right) \cdot \#\homs{H_i}{G}.
\end{equation}
By setting $a_i \coloneqq \gamma(H_i)\cdot \#\homs{H_i}{G}$, we may
thus apply Dedekind
interpolation to obtain for each $i\in [k]$ the term $\#\homs{H_i}{G}
= a_i/\gamma(H_i)$.

\subsection{Classification for \texorpdfstring{$\subsprob$
    (\Cref{thm:classification_subsprob})}{Sub-Hypergraph
Counting}}\label{sub:overview_subsprob}
As a first step, we cast $\subsprob(\scH)$ as a hypergraph motif evaluation problem.
We show that, for each hypergraph $H$, there is a function $\gamma_H$ such that, for all $G$,
\begin{equation}\label{eq:subs_H=sum_homs_Q}
    \#\subs{H}{G} = \sum_{F \in \scQ(H)} \gamma_H(F) \cdot \#\homs{F}{G},
\end{equation}
where $\scQ(H)$ is the set of all quotients of $H$, see \cref{def:hypergraph_quotient}.
The proof is similar to the one for graphs~\cite{CurticapeanDM17}.
Thus, $\subsprob(\scH)=\hyperME(\Gamma_\scH)$, where $\Gamma_\scH = \{\gamma_H : H \in \scH\}$, see above.
Using our complexity monotonicity principle for hypergraphs,
\cref{lem:monotonicity_intro}, we then obtain the crucial equivalence
\[\homsprob(\scQ(\scH)) \fpteqlin \subsprob(\scH),\]
where $\scQ(\scH) = \cup_{H \in \scH} \,\scQ(H)$.
We then prove \cref{thm:classification_subsprob} as follows.

For the upper bounds we show that, if $\frcoindno(\scH) < \infty$, then $\homsprob(\scQ(\scH))$ is easy.
More precisely, we show that for every hypergraph family $\scH$ the
\emph{fractional hypertree width}~$\fhtw(\scQ(\scH))$ of~$\scQ(\scH)$
is bounded by $\frcoindno(\scH)+1$.
Thus, if $\frcoindno(\scH) < \infty$, then $\fhtw(\scQ(\scH)) \le
\frcoindno(\scH)+1 < \infty$.
We then use the well-known fact that $\homsprob(\scQ(\scH))$ is
solvable in time $f(H) \cdot \size{G}^{\fhtw(H)+O(1)}$,
see~\cite[Corollary~4.11]{GroheM14}.
This gives us the upper bounds of~\cref{thm:classification_subsprob}.

For the lower bounds, things are more involved.
Suppose that for some function $g(n)=o(\sqrt[4]{n})$ one can solve
$\subsprob(\scH)$ in time $f(H) \cdot \size{G}^{g(\frcoindno(H))}$;
we want to prove that ETH fails.
We do this by looking at the family $\WeakMinors(\scQ(\scH))$ of all
hypergraphs obtained by trimming a quotient of $\scH$.
First, we show that counting \emph{homomorphisms} of trimmed
quotients of $\scH$ reduces to $\subsprob(\scH)$, that is,
\[\homsprob(\WeakMinors(\scQ(\scH))) \fptredlin \subsprob(\scH).\]
This implies that one can solve $\homsprob(\WeakMinors(\scQ(\scH)))$
in time $f(H) \cdot \size{G}^{g(\frcoindno(H))}$, too.

We exploit this fact as follows.
Fix any unbounded function $c$ with~${c(n) = o(n)}$.
Set~${g(n) = o\big(\sqrt[4]{c(n)}\big)}$.
Define the following two hypergraph families
\begin{align*}
    \scH_{\aw} & \coloneqq \left\{ H \in \scH : \aw(H) \ge
    c(\frcoindno(H)) \right\} \quad\text{and}
    \\
    \scH_{\tw} & \coloneqq \left\{ H \in \scH : \aw(H) <
    c(\frcoindno(H)) \right\}.
\end{align*}
As $\frcoindno(\scH)=\infty$ and $\scH = \scH_{\aw} \, \cup \,
\scH_{\tw}$, we have $\frcoindno(\scH_{\aw}) = \infty$ or
$\frcoindno(\scH_{\tw})=\infty$.
We consider each case separately.

Suppose first $\frcoindno(\scH_{\aw}) = \infty$.
Since $c$ is unbounded, $\aw(\scH_{\aw})$ is also unbounded.
Moreover, one can easily see that our assumptions imply that one can solve
$\homsprob(\scH_{\aw})$ in time $f(H) \cdot
\size{G}^{o\big(\sqrt[4]{\aw(H)}\big)}$.
A lower bound by Marx then implies that ETH fails, see
\cref{lem:unbounded_aw_LB}.

Suppose now $\frcoindno(\scH_{\tw})=\infty$.
Then, by construction of $\scH_{\tw}$, the ratio
${\frcoindno(H)}/{\aw(H)}$ diverges over~${H \in \scH_{\tw}}$.
In this case we prove a structural result, \cref{lem:hat_H_subs},
that allows us to obtain an infinite subfamily~${\hat \scH_{\tw}
\subseteq \WeakMinors(\scQ(\scH)})$ of bounded rank and unbounded treewidth.
By a careful argument involving the choice of~$g$ as well as the relationship
between $\frcoindno$ and the treewidth of $\hat \scH_{\tw}$, we show
that our assumptions imply that one can solve $\homsprob(\hat \scH_{\tw})$
in time $f(\hat H) \cdot \size{G}^{o(\tw(\hat H) / \ln \tw(\hat H))}$.
Another lower bound of Marx again implies that ETH fails, see
\cref{lem:unbounded_tw_LB}.

\subsection{Classification for \texorpdfstring{$\indsubsprob$
    (\Cref{thm:classification_indsubsprob})}{Induced Sub-Hypergraph
Counting}}\label{sub:overview_indsubsprob}
Similarly to the case of $\subsprob(\scH)$, we begin by casting
$\indsubsprob(\scH)$ as a
hypergraph motif evaluation problem.
We show that for every hypergraph $H$ there is a finitely-supported function $\gamma_H$ such that
\begin{equation}\label{eq:indsub_expansion}
    \#\indsubs{H}{G} = \sum_{F\in \scQ(\scS(H))} \gamma_H(F) \cdot
    \#\homs{F}{G},
\end{equation}
where $\scS(H)$ is the set of edge-super-hypergraphs of $H$.\footnote{The set of
hypergraphs that can be obtained from $H$ by adding zero or more edges.}
Moreover, $\gamma_H(F) \ne 0$ for every $F \in \scS(H)$.
By using again our complexity monotonicity principle, \Cref{lem:monotonicity_intro}, we prove that
\[
    \homsprob(\scS(\scH)) \fptredlin \indsubsprob(\scH) \fptredlin
    \homsprob(\scQ(\scS(\scH))),
\]
where $\scS(\scH) = \cup_{H \in \scH} \,\scS(H).$
We then prove \cref{thm:classification_indsubsprob} as follows.

For the upper bounds, we show that $\fhtw(\scQ(\scS(\scH))) \le \fredgeco(\scH)$.
Similarly to~$\subsprob(\scH)$, we then use the fact that $\homsprob(\scQ(\scS(\scH)))$ is
solvable in time $f(H) \cdot \size{G}^{\fhtw(H)+O(1)}$.
This gives us the claimed running time.
As $\indsubsprob(\scH) \fptredlin \homsprob(\scQ(\scS(\scH)))$, the same running time
applies to $\indsubsprob(\scH)$ as well.

The lower bounds are again the harder part.
We follow the same strategy of $\subsprob(\scH)$, but with the
relevant parameter being $\fredgeco$ instead of $\frcoindno$.
Assume indeed $\scH$ is recursively enumerable with
$\fredgeco(\scH)=\infty$, and yet for some function
$g(n)=o(\sqrt[4]{n})$ one can solve $\indsubsprob(\scH)$ in time
$f(H) \cdot \size{G}^{g(\fredgeco(H))}$.
Mirroring the $\subsprob(\scH)$ case, we choose a suitable function
$c$ and define two hypergraph families
\begin{align*}
    \scH_{\aw} & \coloneqq \left\{ H \in \scH : \aw(H) \ge
    c(\fredgeco(H)) \right\}\quad\text{and}
    \\
        \scH_{\tw} & \coloneqq \left\{ H \in \scH : \aw(H) < c(\fredgeco(H)) \right\}.
\end{align*}
Since $\scH = \scH_{\aw} \, \cup \, \scH_{\tw}$, we have $\fredgeco(\scH_{\aw}) = \infty$ or
$\fredgeco(\scH_{\tw})=\infty$.
As we did for $\subsprob(\scH)$, and using the fact that $\homsprob(\scS(\scH))
\fptredlin \indsubsprob(\scH)$ stated above, we then show that the assumed
algorithm solves~$\homsprob(\scH_{\aw})$ or~$\homsprob(\scH_{\tw})$
in a running time that violates the conditional lower bounds of
\cref{lem:unbounded_aw_LB} or \cref{lem:unbounded_tw_LB}, implying
that ETH fails as well.

\section{Preliminaries}\label{sec:prelim}

\subparagraph*{Functions, sets and Partitions.}
Given sets $A$, $B$, and $C$, a function $f:A \times B \to C$, and an
element $a\in A$, we
write $f(a,\star):B \to C$ for the function $b \mapsto f(a,b)$.
Given a positive integer $k$, we set $[k] \coloneqq \{1,\dots,k\}$.
For a finite set \(A\), we use \(|A|\) and \(\#A\) to denote the
cardinality of \(A\).
A \emph{partition} of $A$ is a set $\tau=\{B_1,\dots,B_\ell\}$ of
non-empty subsets,
called \emph{blocks}, of $A$ such that $A=B_1 \cup \dots \cup B_\ell$
and $B_i \cap B_j
=\emptyset$ for all $i\neq j$.

\subparagraph*{Hypergraphs and morphisms between hypergraphs.}
A \emph{hypergraph} is a pair $H=(V,E)$, where $V=V(H)$ is a finite set and $E =
E(H)\subseteq 2^V \setminus\{\emptyset\}$ is a set of non-empty
subsets of $V$.
We write~$|H| \coloneqq |V(H)|$ and $\lVert H \rVert \coloneqq
|H|+\sum_{e \in E(H)}|e|$.
The \emph{rank} of $H$ is defined as $\rank(H)\coloneqq\max_{e\in E(H)} |e|$.
In this work we use capital letters \(F, G, H, \dots\) to denote
hypergraphs; we use script letters $\scF,\scG,\scH,\dots$ to denote families of hypergraphs.
Let $H=(V,E)$ be a hypergraph and let $v \in V$.
The hypergraph obtained from $H$ by \emph{deleting} $v$ has vertex
set $V \setminus \{v\}$
and edge set $\{e \in E : v \notin e\}$.
The hypergraph $\setminusnotrim{H}{X}$ is obtained by deleting all vertices in $X \subseteq V$.
Formally, its vertex set is $V\setminus X$ and its edge set is $\{e \in E : e \cap X = \emptyset\}$.
We also define another deletion operation:
the hypergraph $\setminustrim{H}{X}$ has vertex set $\overline{X} = V \setminus X$ and
edge set~$\{e \cap\overline{X} : e \in E\}\setminus \{\emptyset\}$.
Equivalently, $H \setminus X = \trim{H}{\overline{X}}$, see below.

A \emph{homomorphism} from a hypergraph $H$ to a hypergraph $G$ is a mapping
$\varphi\colon V(H)\to V(G)$ such that~$\varphi(e)\in E(G)$ holds for
all $e\in E(H)$.
We write $\homs{H}{G}$ for the set of all homomorphisms from $H$ to $G$.
An \emph{embedding} from $H$ to $G$ is an injective homomorphism from
$H$ to $G$.
A \emph{strong embedding} is an embedding $\varphi$ which additionally satisfies $S \in
E(H)\Leftrightarrow\varphi(S)\in E(G)$ for every subset $S\subseteq V(H)$.
We write $\embs{H}{G}$ for the set of all embeddings from $H$ to $G$,
and we write
$\strembs{H}{G}$ for the set of all strong embeddings from $H$ to $G$.
An \emph{isomorphism} from \(H\) to \(G\) is a bijective strong
embedding; $H$ and $G$ are
\emph{isomorphic} if there is an isomorphism from~$H$ to~$G$.
In this case, we also write $H\cong G$.
An \emph{automorphism} of \(H\) is an isomorphism from $H$ to itself.
We write $\auts{H}$ for the set of all automorphisms of $H$.

A \emph{sub-hypergraph} of $G$ is any hypergraph $G'$ with
$V(G')\subseteq V(G)$ and $E(G') \subseteq E(G)$.
If $V(G')=V(G)$ then $G'$ is an \emph{edge}-sub-hypergraph.
For a vertex subset $X \subseteq V(G)$, the sub-hypergraph
\emph{induced by $X$} in $G$ is the hypergraph $G[X]$ with $V(G[X])=X$ and
$E(G[X])=\{e \in E(G) \mid e\subseteq X\}$.
We denote by~$\subs{H}{G}$ the set of all sub-hypergraphs of $G$ that
are isomorphic to $H$, and by $\indsubs{H}{G}$ the set of all induced sub-hypergraphs
that are isomorphic to $H$.
One readily verifies the following well-known equalities that relate
sub-hypergraphs and embeddings.
\begin{align}
    \#\embs{H}{G}    & = \#\auts{H}\cdot \#\subs{H}{G}\label{eq:embs=auts/subs} \\
    \#\strembs{H}{G} & = \#\auts{H}\cdot
    \#\indsubs{H}{G}\label{eq:strembs=auts/indsubs}
\end{align}

\subparagraph*{Trimmed homomorphisms and trimmed sub-hypergraphs.}
A \emph{trimmed homomorphism} from a hypergraph $H$ to a hypergraph $G$ is a map
$\varphi\colon V(H)\to V(G)$ such that if $e \in E(H)$ then there
exists $e' \in E(G)$
satisfying $\varphi(e) = e' \cap \img(\varphi)$.
We write $\trimhoms{H}{G}$ for the set of all such maps.
A \emph{trimmed embedding} from $H$ to $G$ is a trimmed homomorphism
from $H$ to $G$ that
is injective.
A \emph{strong trimmed embedding} is a trimmed embedding $\varphi$
where $e \in E(H)$ if
and only if there exists $e' \in E(G)$ satisfying $\varphi(e) = e'
\cap \img(\varphi)$.
We write $\trimembs{H}{G}$ and $\strtrimembs{H}{G}$ for the set of
all trimmed and strong
trimmed embeddings of $H$ in $G$.

A \emph{trimmed sub-hypergraph} of $G$ is any hypergraph $G'$ with
$V(G') \subseteq V(G)$
such that for every $e' \in E(G')$ there exists $e \in E(G)$ such that
$e' = e \cap V(G')$.
The trimmed sub-hypergraph of $G$ induced by $X \subseteq V(G)$ is
the hypergraph
$\trimsub{G}{X} \coloneqq \big(X, \trim{E}{X}\big)$, where
$\trim{E}{X} = \{e \cap X : e \in E(G), e \cap X \ne \emptyset\}$.
We denote by $\trimsubs{H}{G}$ and $\indtrimsubs{H}{G}$ the sets of trimmed
sub-hypergraphs and induced trimmed sub-hypergraphs of $G$ isomorphic to $H$.
One may adapt \cref{eq:embs=auts/subs,eq:strembs=auts/indsubs} to
obtain the following
equalities, see \cref{lem:aut-omatic_lemma}.
\begin{align}
    \#\trimembs{H}{G}    & = \#\auts{H}\cdot
    \#\trimsubs{H}{G}\label{eq:trimembs=auts/trimsubs} \\
    \#\strtrimembs{H}{G} & = \#\auts{H}\cdot
    \#\indtrimsubs{H}{G}\label{eq:strtrimembs=auts/trimindsubs}
\end{align}

\subparagraph*{Parametrised problems, fixed-parameter tractability,
and fine-grained complexity theory.}
We provide a brief introduction to parametrised complexity theory and
refer the reader to
one of the standard textbooks~\cite{FlumG06,CyganFKLMPPS15,DowneyF13}
for a comprehensive treatment.
A \emph{parametrised (counting) problem} is a pair of a function
$P\colon \{0,1\}^\ast \to
\mathbb{Q}$ and a computable parametrisation $\kappa\colon
\{0,1\}^\ast \to \mathbb{N}$. A
parametrised problem $(P,\kappa)$ is called \emph{fixed-parameter
tractable (FPT)} if
there is a computable function $f$ and a deterministic algorithm
$\mathbb{A}$ such that,
on input $x$, the algorithm $\mathbb{A}$ computes $P(x)$ in time $f(\kappa(x))\cdot
|x|^{\Oh(1)}$.
$\mathbb{A}$ is called an \emph{FPT algorithm} for $(P,\kappa)$.
A \emph{parametrised Turing-reduction} from $(P_1,\kappa_1)$ to $(P_2,\kappa_2)$ is an FPT
algorithm for~$(P_1,\kappa_1)$ with oracle access to $P_2$ such that there is a computable
function $g$ such that, on input~$x$, each oracle query~$y$ satisfies $\kappa_2(y)\leq
g(\kappa_1(x))$.
We write $(P_1,\kappa_1)\fptred (P_2,\kappa_2)$ if a parametrised
Turing-reduction from
$(P_1,\kappa_1)$ to $(P_2,\kappa_2)$ exists, and we write $(P_1,\kappa_1)\fpteq
(P_2,\kappa_2)$ if $(P_1,\kappa_1)\fptred (P_2,\kappa_2)$ and~$(P_2,\kappa_2)\fptred(P_1,\kappa_1)$.
If $(P_1,\kappa_1)\fptred (P_2,\kappa_2)$ and moreover the reduction
runs in FPT \emph{linear time}, that is in time $f(\kappa(x)) \cdot
|x|$, then we write $(P_1,\kappa_1)\fptredlin (P_2,\kappa_2)$.
Note that if $(P_1,\kappa_1)\fptredlin (P_2,\kappa_2)$ and~$(P_2,\kappa_2)\fptredlin (P_3,\kappa_3)$ then
$(P_1,\kappa_1)\fptredlin (P_3,\kappa_3)$ too.

The parametrised problem $\#\textsc{Clique}$ gets as input a graph $G$ and a
positive integer $k$, and outputs the number of $k$-cliques in $G$.
The parametrisation of
$\#\textsc{Clique}$ is given by $\kappa(G,k)=k$; we say that $k$ is
the \emph{parameter}
of the problem.

A parametrised problem $(P,\kappa)$ is $\#\W[1]$-\emph{hard} if
$\#\textsc{Clique}\fptred (P,\kappa)$, and it is $\#\W[1]$-\emph{complete}%
\footnote{For parametrised counting problems it is common to define
  completeness via
  parametrised Turing-reductions rather than via parametrised
  parsimonious reductions.
  In particular, this notion of $\#\W[1]$-completeness does not
  necessarily imply that the
  problem is contained in $\#\W[1]$ (see~\cite[Chapter 14]{FlumG01}
    for a structural
definition of $\#\W[1]$).}
if $\#\textsc{Clique}\fpteq (P,\kappa)$.
The standard hardness assumption in parametrised counting complexity theory
``$\mathrm{FPT}\neq \#\W[1]$'' asserts that $\#\W[1]$-hard problems are not
fixed-parameter tractable.
Moreover, ``$\mathrm{FPT}\neq \#\W[1]$'' is implied by the
\emph{Exponential Time Hypothesis}.
\begin{conjecture}[ETH;~\cite{ImpagliazzoP01}]
    $3$-$\textsc{SAT}$ cannot be solved in time $\mathsf{exp}(o(n))$,
    where $n$ is the number of variables in the input formula.
\end{conjecture}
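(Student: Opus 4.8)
The final statement is the Exponential Time Hypothesis, which I must emphasise is a \emph{conjecture}: it is not known to be provable, and a proof would be a landmark result in complexity theory. I therefore limit myself to describing what a proof would require and why the approaches I can think of do not work.

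First I would observe that ETH implies $\ccP \neq \ccNP$: a $\poly(n)$-time algorithm for $3$-$\textsc{SAT}$ would in particular run in time $\mathsf{exp}(o(n))$, so any proof of ETH subsumes a proof of $\ccP \neq \ccNP$. Consequently the plan cannot proceed by naive diagonalisation against $\mathsf{exp}(o(n))$-time machines --- the simulation overhead swamps the time bound one is trying to beat --- and it must somehow evade the relativisation, algebrisation, and natural-proofs barriers that obstruct essentially all known lower-bound techniques against general computation.

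The most promising-looking route I can envisage is a two-step argument: first prove an \emph{unconditional} lower bound for $3$-$\textsc{SAT}$ in a restricted model of computation (branching programs, constant-depth circuits, streaming or bounded-memory algorithms), and then bootstrap this into an exponential lower bound against arbitrary algorithms. The first step is achievable in some models; the second step is where I expect the plan to collapse. No general amplification theorem is known that lifts a lower bound in a weak model to one against unrestricted algorithms, and designing such a theorem appears to be at least as hard as resolving $\ccP$ versus $\ccNP$ --- so this is the main obstacle, and I do not see how to overcome it.

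Alternatively, I would attempt to \emph{derive} ETH from a different, more widely trusted assumption rather than proving it outright. Here the difficulty is that every candidate I know of --- the Strong ETH, randomised variants of ETH, circuit-size or derandomisation hypotheses --- is itself unproven, and where the assumptions are comparable at all they are at least as strong as ETH, so no unconditional proof results. In short, ETH is adopted here as a working hypothesis, exactly as throughout the fine-grained complexity literature, and all of our lower bounds are stated conditional on it; an unconditional proof is not attempted and, to the best of current knowledge, lies well beyond reach.
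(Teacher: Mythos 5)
You have correctly recognised that this statement is the Exponential Time Hypothesis, which the paper itself presents as a \emph{conjecture} (a hardness assumption due to Impagliazzo and Paturi) rather than as a claim to be proved, and the paper supplies no proof of it. Your observation that ETH implies $\ccP \neq \ccNP$, your discussion of why known barriers block any current proof attempt, and your conclusion that one can only assume it as a working hypothesis are all accurate and consistent with how the paper uses ETH.
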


The parametrised problem $\#\textsc{DomSet}$ gets as input a graph
$G$ and a positive integer $k$, and outputs the number of dominating
sets of size
$k$ in $G$. The parameter is $k$.
A parametrised problem $(P,\kappa)$ is $\#\W[2]$-\emph{hard} if
$\#\textsc{DomSet}\fptred (P,\kappa)$.
It is conjectured that $\#\W[2]$-hard problems are \emph{strictly} harder than
$\#\W[1]$-complete problems~\cite[Chapter 14]{FlumG06}.
The final lower bound assumption featured in this work is the
\emph{Strong Exponential
Time Hypothesis}.
\begin{conjecture}[SETH;~\cite{CalabroIP09}]
    For every $\delta>0$, there is a positive integer $k$ such that
    $k$-$\textsc{SAT}$
    cannot be solved in time $\Oh(2^{(1-\delta)n})$, where $n$ is the
    number of variables
    in the input formula.
\end{conjecture}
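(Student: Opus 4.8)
There is nothing to prove here in the usual sense: the Strong Exponential Time Hypothesis is a \emph{conjecture}, and we state it only as a hardness assumption, in the same spirit as ETH and the assumption $\ccFPT \ne \ccSharpW{1}$. Accordingly, the ``plan'' consists of recording the evidence for the hypothesis and the role it plays in this paper, rather than attempting a derivation. Intuitively, SETH asserts that exhaustive search over the $2^n$ truth assignments is essentially unavoidable for $k$-$\textsc{SAT}$ in the limit $k \to \infty$: while for each fixed $k$ one can shave a $k$-dependent factor off the exponent, the savings must vanish as $k$ grows. This is consistent with the state of the art, where the fastest known $k$-$\textsc{SAT}$ algorithms run in time of the shape $2^{(1-c_k)n}$ with $c_k = \Theta(1/k)$, and no approach is known that breaks this barrier uniformly in $k$.

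In this work SETH is invoked purely as a lower-bound axiom: whenever we wish to rule out an algorithm of a prescribed running time for one of the (counting) problems under study, we exhibit a fine-grained reduction from $k$-$\textsc{SAT}$ (or from a SETH-hard intermediate problem) whose existence, together with SETH, yields the desired conditional lower bound. The actual mathematical content of each such argument is the reduction itself, which is established in the body of the paper; the hypothesis is taken as given, and no part of our development depends on it being provable.

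The obstacle to an unconditional proof is, of course, that proving SETH is far beyond current techniques: it would in particular separate $\ccP$ from $\ccNP$ and pin down the precise exponential complexity of satisfiability. We therefore make no attempt at a proof and, following standard practice in fine-grained complexity, state our strongest lower bounds conditionally on this widely believed hypothesis.
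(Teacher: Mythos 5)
You correctly identify that this is a conjecture, not a theorem: the paper states SETH as an unproved hardness assumption (citing Calabro, Impagliazzo, and Paturi) and offers no proof, exactly as you observe. Your discussion of why no proof is attempted, and how SETH is used purely as a lower-bound axiom via fine-grained reductions, matches the paper's treatment.
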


In this work, we are interested in the parametrised complexity of
counting substructures
in hypergraphs.
Formally, we define below the problems $\homsprob(\scH)$, $\subsprob(\scH)$ and
$\indsubsprob(\scH)$, where $\scH$ denotes any class of hypergraphs,
that is, $\scH$ is
not part of the problem input, but instead each class $\scH$ defines
a unique problem.

\noindent
\begin{minipage}{1\linewidth}%
    \begin{problem}-[$\#\textupsc{Hom}$]{$\#\textupsc{Hom}(\scH)$}
        \label{prob:homs}
        \PInput{A pair of hypergraphs $H, G$ with $H \in \scH$}
        \POutput{$\#\homs{H}{G}$}
        \PParameter{$|H|$}
    \end{problem}

    \begin{problem}+-=[$\#\textupsc{Sub}$]{$\#\textupsc{Sub}(\scH)$}
        \label{prob:sub}
        \PInput{A pair of hypergraphs $H, G$ with $H \in \scH$}
        \POutput{$\#\subs{H}{G}$}
        \PParameter{$|H|$}
    \end{problem}

    \begin{problem}+=[$\#\textupsc{IndSub}$]{$\#\textupsc{IndSub}(\scH)$}
        \label{prob:indsub}
        \PInput{A pair of hypergraphs $H, G$ with $H \in \scH$}
        \POutput{$\#\indsubs{H}{G}$}
        \PParameter{$|H|$}
    \end{problem}
\end{minipage}

\subparagraph*{Invariants and width measures.}\label{sub:invariants}
Our analysis makes use of several hypergraph invariants that we now define.
Note that, for the sole purpose of the definitions, we assume
hypergraphs do not have isolated vertices (vertices that do not
appear in any edge).
More precisely, if $H$ has isolated vertices, then the definition of
any invariant here below is meant on $H$ after adding a singleton
edge $\{v\}$ for every isolated $v$.
This is necessary to guarantee that the definitions of fractional
edge-covers and fractional independent sets admit a solution when
seen as linear programs, which we need to apply standard duality results.
We repeat that this is only for the purpose of the definitions; all
our algorithms work for arbitrary hypergraphs.

\begin{definition}[Edge-cover numbers]
    \dglabel{def:edgeco}
    \label{def:fredgeco}
    Let $H=(V,E)$ be a hypergraph and $X \subseteq V$.

    \begin{itemize}
        \item An edge set $E' \subseteq E$ is an edge cover of $X$ if $X\subseteq\cup_{e'\in E'}e'$.
            The \emph{edge-cover number of $X$ in $H$}, denoted by~$\rho_H(X)$, is the
            smallest cardinality of an edge cover of $X$.
            The edge-cover number of $H$ is $\rho(H)\coloneqq\rho_H(V)$.

        \item A function $\xi\colon E \rightarrow \Rp$ is a
            \emph{fractional edge cover of
            $X$} if $\sum_{e: v\in e} \xi(e) \geq 1$ for all $v\in X$.
            The fractional edge-cover number of $X$ in $H$, denoted by
            $\fredgeco_H(X)$, is the minimum of $\sum_{e\in E} \xi(e)$ over all fractional edge covers~$\xi$ of $X$.
            The fractional edge-cover number of $H$ is $\fredgeco(H) \coloneqq\fredgeco_H(V)$.\qedhere
    \end{itemize}
\end{definition}

\begin{definition}[Independence numbers]
    \dglabel{def:indno}\label{def:frindno}
    Let $H=(V,E)$ be a hypergraph.
    \begin{itemize}
        \item  A subset $X \subseteq V$ is \emph{independent} if $|e \cap X|\le 1$ for all
            $e\in E$.\footnote{This notion of independence is also called strongly independent
            by some authors.}
            The \emph{independence number of $H$}, denoted by~$\alpha(H)$, is the largest
            cardinality of an independent set.

        \item A function $\eta: V \rightarrow \Rp$ is a \emph{fractional
            independent set}
            if
            $\sum_{v\in e} \eta(v) \leq 1$ for all $e\in E$, and its \emph{weight} is
            $\sum_{v
            \in V} \eta(v)$.
            The fractional independence number of $H$, denoted by
            $\alpha^\ast(H)$, is the
            maximum weight of a fractional independent set.
            \qedhere
    \end{itemize}
\end{definition}

\begin{definition}[Fractional co-independent edge-cover numbers]
    \dglabel{def:coindno}\label{def:frcoindno}
    Let $H=(V,E)$ be a hypergraph.
    A set~$X \subseteq V$ is \emph{co-independent} if its complement $\overline{X} = V \setminus X$ is independent.
    The \emph{fractional co-independent edge-cover number} of a
    hypergraph $H=(V,E)$ is \[\sigma^\ast(H) \coloneqq \min
        \left\{\fredgecoS{H}(X) \,: X\text{ is co-independent in }H
    \right\}.\qedhere\]
\end{definition}

Next, we recall the standard notion of tree decomposition of a
hypergraph, and the several
related notions of \emph{width} of a hypergraph, which plays a key
role in our complexity
dichotomies.
\begin{definition}[Tree decomposition]
    \dglabel{def:treewidth}
    A \emph{tree decomposition} of a hypergraph $H$ is a pair
    $(T,\scB)$, where $T$ is a
    tree and $\scB=\{B_t\}_{t\in V(T)}$ is a family of subsets of
    $V(H)$ indexed by
    $V(T)$, such that the following hold.
    \begin{enumerate}
        \item $\bigcup_{t\in V(T)} B_t = V(H)$.
        \item For every $e\in E(H)$ there exists $t \in V(T)$ such that
            $e\subseteq B_t$.
        \item For every $v\in V(H)$ the subgraph $T[\{t~|~v\in B_t\}]$ is
            connected. \qedhere
    \end{enumerate}
\end{definition}

\begin{definition}[$f$-width]
    \dglabel{7-21-1}[def:treewidth]
    Let $H$ be a hypergraph and let $f : 2^{V(H)}\rightarrow \mathbb{R}_{\ge 0}$ be a
    function.
    The $f$-\emph{width} of a tree decomposition $(T,\scB)$ of $H$ is
    \[\fwidth{f}(T,\scB)
    \coloneqq\max_{t\in V(T)}f(B_t).\]
    The $f$-width of $H$ is the minimum $f$-width of any tree
    decomposition of $H$.
\end{definition}

\begin{definition}[Treewidth, fractional hypertree width,
    adaptive width]
    \dglabel{def:widthmeasures}[7-21-1]
    Let $H$ be a hypergraph.
    \begin{itemize}
        \item The treewidth of $H$, denoted by $\tw(H)$, is $\fwidth{f}(H)$ where $f$
            is given by $f(X) \coloneqq\max(0,|X|-1)$.
        \item The fractional hypertree width of $H$ is $\fhtw(H) \coloneq
            \fwidth{\fredgecoS{H}}(H)$.
        \item The adaptive width of $H$, denoted by $\aw(H)$, is the supremum of
            $\fwidth{f}(H)$ over all $f$ that are fractional independent sets of
            $H$.\qedhere
    \end{itemize}
\end{definition}
The Gaifman graph of a hypergraph $H=(V,E)$ is the simple graph
$\gaifman(H)$ with vertex
set $V$ which contains an edge $\{u,v\}$ for all vertices $u$ and $v$
that are adjacent in $H$.
The equality $\tw(\gaifman(H))=\tw(H)$ is well-known (e.g., see
\cite[Section~2]{DBLP:conf/cp/FichteHLS18}).

For a function \(w\) from hypergraphs to $\R_{\ge 0}$ and a
hypergraph family $\scH$, we
set \(w(\scH) \coloneqq \sup_{H \in \scH} w(H)\).
We say that \(w\) is \emph{bounded} for \(\scH\), denoted by
$w(\scH)<\infty$, if there is
a $c \ge 0$ such that $w(H) \le c$ for all~$H \in \scH$.
Otherwise, we say that \(w\) is \emph{unbounded} for \(\scH\), which
we denote by $w(\scH)=\infty$.

We often use the following relationships (e.g., see
\cite[Proposition~3.7f]{Marx13} and \cite[Section~2]{DBLP:conf/cp/FichteHLS18}).
\begin{lemmaq}
    \dglabel{lem:width_measures}(Every hypergraph $H$ satisfies $\aw(H)
    \leq \fhtw(H)
    \leq\tw(H)$)
    If $H$ is a hypergraph, then $\aw(H) \leq \fhtw(H) \leq\tw(H)$.
\end{lemmaq}

\begin{lemmaq}[LP duality, {\cite{fractionalGraphTheory}}]
    \dglabel{lem:LP-duality}(Every hypergraph $H$ satisfies
    $\frindno(H)=\fredgeco(H)$, LP duality, {\cite{fractionalGraphTheory}})
    If $H$ is a hypergraph, then $\frindno(H)=\fredgeco(H)$.
\end{lemmaq}

\begin{lemmaq}[{\cite[Theorem 17]{BLR23}}]
    \dglabel{lem:int_gap_aw}(Every hypergraph $H$ satisfies
    $\indno(H)\geq {1}/{2} +
    {\frindno(H)}/{4\, \aw(H)}$, \cite[Theorem 17]{BLR23})
    If $H$ is a hypergraph, then $\indno(H)\geq {1}/{2} +
    {\frindno(H)}/{4\, \aw(H)}$.
\end{lemmaq}

Note that the presented width measures are not asymptotically equivalent.
There exists a class of hypergraphs \(\scH_1\) such that
\(\aw(\scH_1)<\infty\) and \(\fhtw(\scH_1)=\infty\), see
\cite[Corollaries~22 and~25]{DBLP:journals/mst/Marx11}.
Similarly, there is a class of hypergraphs \(\scH_2\) such that
\(\fhtw(\scH_2)<\infty\) and \(\tw(\scH_2)=\infty\), see
\cite[Example~2.1]{GroheM14}.
However, in the case of bounded rank, all these notions are
asymptotically equivalent \cite[Observation~34]{FockeGRZ25}, that is, for a class of
hypergraphs \(\scH_3\) the property
\(\rank(\scH_3)<\infty\) implies that \(\aw(\scH_3)<\infty\) if and
only if \(\tw(\scH_3)<\infty\).

We recall two conditional lower bounds from \cite{Marx13}, which have been slightly adapted.%
\footnote{In \cite{Marx13}, the first bound uses the \emph{submodular
    width} $\subw(H)$ in place of $\aw(H)$; however, $\aw(H) \le
    \subw(H)$ as shown in the same work. Moreover, in \cite{Marx13} both
    bounds are for the decision version, so they hold in particular for
the counting version.}
\begin{lemmaq}[{see \cite[Theorem 7.1]{Marx13}}]
    \dglabel{lem:unbounded_aw_LB}
    Let $\scH$ be a recursively enumerable family of hypergraphs with~$\aw(\scH)=\infty$.
    If there exist an algorithm $\mathrm{A}$ and a function $f$ such that
    $\mathrm{A}$ solves $\homsprob(\scH)$ in time~${f(H) \cdot
    \size{G}^{o\left(\sqrt[4]{\aw(H)}\right)}}$, then ETH fails.
\end{lemmaq}

\begin{lemmaq}[{see \cite[Theorem 1.2]{Marx13}}]
    \dglabel{lem:unbounded_tw_LB}
    Let $\scH$ be a recursively enumerable family of hypergraphs with~$\tw(\scH)=\infty$ and $\rank(\scH)<\infty$.
    If there exist an algorithm $\mathrm{A}$ and a function $f$ such that
    $\mathrm{A}$ solves~$\homsprob(\scH)$ in time $f(H) \cdot
    \size{G}^{o\left(\tw(H)/\log\tw(H)\right)}$, then ETH fails.
\end{lemmaq}

Finally, we need the following monotonicity property of adaptive width under trimmings.
\begin{lemma}
    \dglabel{lem:aw_monotone}(Trimmed induced sub-hypergraphs preserve
    adaptive width, $\aw(\trim{H}{X}) \le \aw(H)$)
    Let $H=(V,E)$ be a hypergraph.
    For every $X \subseteq V$ it holds that $\aw(\trim{H}{X}) \le \aw(H)$.
\end{lemma}
\begin{proof}
    Let $\scF$ be the family of all fractional independent sets over
    $H$, and $\scF_{X}^0
    \subseteq \scF$ the subfamily of those functions with value zero
    over $V\setminus X$.
    Obviously $\scF\supseteq \scF_{X}^0$, thus
    \begin{align}
        \aw(H) = \sup_{f \in \scF} \left(\fwidth{f}(H)\right) \ge \sup_{f
        \in \scF_{X}^0}
        \left(\fwidth{f}(H)\right).
    \end{align}
    Now, let $\scF_{X}$ be the family of all fractional independent sets
    of $\trim{H}{X}$, and
    for every $f_X \in \scF_{X}$ let $f_X^0 \in \scF_X^0$ be its
    extension to $V$ that
    assigns $0$ to $V\setminus X$.
    Observe that
    \begin{align}
        \fwidth{f_X}(\trim{H}{X}) \le \fwidth{f_X^0}(H).
    \end{align}
    Consider indeed any tree decomposition $(T,\scB)$ of $H$.
    The pair $(T,\scB_{X})$
    obtained by intersecting every bag of $\scB$ with $X$ is then a
    tree decomposition of
    $\trim{H}{X}$, and clearly $\fwidth{f_X}((T,\scB_X)) \le
    \fwidth{f_X^0}((T,\scB))$.
    Taking the infimum over all tree decompositions then yields
    $\fwidth{f_X}(\trim{H}{X})
    \le \fwidth{f_X^0}(H)$, as claimed.
    Finally, by taking the supremum over all $f_X \in \scF_{X}$, we
    conclude that
    \begin{align}
        \aw(\trim{H}{X}) = \sup_{f_X \in \scF_X}\left(\fwidth{f_X}(\trim{H}{X})\right)
        \le \sup_{f_X \in \scF_X} \left(\fwidth{f_X^0}(H)\right)
        \le \sup_{f\in \scF_X^0} \left(\fwidth{f}(H)\right).
    \end{align}
    Together with the inequality above, this proves that
    $\aw(\trim{H}{X}) \le \aw(H)$.
\end{proof}

\section{The hypergraph homomorphism basis}
\label{sub:homsprob}
The goal of this section is to relate the complexity of $\homsprob$ and of
$\subsprob$/$\indsubsprob$; this is one of the key steps in proving our main theorems (see
\cref{sec:sub,sec:indsub}).
To this end we prove a hypergraph version of the so-called \emph{complexity monotonicity}
principle, which requires several technical ingredients.
First, \cref{sub:homs_embs_strembs} introduces \emph{hypergraph motif parameters}, which
are linear combinations of hypergraph homomorphisms counts, and shows that this captures
both sub-hypergraph and induced sub-hypergraph counts.
Next, \cref{sub:homs_and_tensors} introduces a notion of \emph{tensor product} between
hypergraphs that satisfies certain good properties with respect to fixed-parameter
tractability; this is needed afterwards.
Then, \cref{sub:hyper_motifs} develops a hypergraph version of the so-called ``Dedekind
interpolation'' technique of~\cite{BLR23}; this yields in particular
\cref{lem:monotonicity_intro} and, in turn, the desired complexity inequalities between
$\homsprob$ and of $\subsprob$/$\indsubsprob$.
Finally, \cref{sub:homs_and_colhoms} completes the picture by proving complexity results
for the \emph{colour-prescribed} and \emph{colourful} variants of hypergraph homomorphism
counting, both of which are also crucial to our bounds of \cref{sec:sub,sec:indsub}.

\subsection{Hypergraph motif parameters}\label{sub:homs_embs_strembs}
The starting point of our complexity results is the extension to hypergraphs of the notion
of \emph{graph motif parameter} introduced by \cite{CurticapeanDM17}. Essentially, a
hypergraph motif parameter is any function that can be written as a finite linear
combination of hypergraph homomorphism counts.

\hyperparam*
The main result of this section reads as follows.
\begin{lemmaq}\label{lem:subs_and_indsubs_are_motifs}
    For every hypergraph $H$, the functions $\subs{H}{\star}$ and $\indsubs{H}{\star}$ are
    hypergraph motif parameters.
    \begin{enumerate}
        \item There exists a rational function $\gamma_H$ with finite support such that
            \begin{align}
                \label{eq:subs_to_homcomb_reminder}
                \#\subs{H}{\star} = \sum_F \gamma_H(F) \cdot \#\homs{F}{\star}
            \end{align}
            and $\gamma_H(F)\neq 0$ if and only if $F \in \scQ(H)$.
        \item There exists a rational function $\gamma_H'$ with finite support such that
            \begin{align}
                \label{eq:indsubs_to_homcomb_reminder}
                \#\indsubs{H}{\star} = \sum_F \gamma_H'(F) \cdot \#\homs{F}{\star}
            \end{align}
            and $\gamma_H'(F) \ne 0$ if $F \in \scS(H)$.
    \end{enumerate}
    As a consequence, for every hypergraph family we have $\subsprob(\scH) =
    \hyperME(\Gamma_\scH)$, where $\Gamma_\scH=\cup_{H \in \scH}\{\gamma_H\}$, and we have
    $\indsubsprob(\scH) = \hyperME(\Gamma'_\scH)$, where $\Gamma'_\scH=\cup_{H \in
    \scH}\{\gamma'_H\}$.
\end{lemmaq}
The first claim of the lemma follows from \cref{eq:embs=auts/subs} and
\cref{lem:embs_to_homs} below for $\#\subs{H}{\star}$, and from
\cref{eq:strembs=auts/indsubs} and \cref{lem:strembs_to_homs} below for
$\#\indsubs{H}{\star}$. The second claim follows from the definitions of the problems.
The rest of this subsection is devoted to prove the mentioned results.

In graphs, homomorphisms counts, embedding counts, and strong embedding counts can be
represented as linear combinations of each other. We need to derive analogous results for
hypergraphs.
We first define the quotient of a hypergraph.

\begin{definition}[The quotient $H/\tau$ of a hypergraph \(H\) with respect to a
    partition \(\tau\) of the vertices \(H\); the set \(\scQ(H)\) of
    all quotients of \(H\)]
    \dglabel{def:hypergraph_quotient}
    Let $H=(V,E)$ be a hypergraph and let $\tau=\{V_1,\ldots,V_\ell\}$
    be a partition of
    $V$.
    For any set $X \subseteq V$, the \emph{quotient of $X$ under
    $\tau$} is the set
    $X/\tau$ defined via
    \(X/\tau \coloneqq \{V_i \in \tau\colon V_i \cap X \ne \emptyset\}.\)

    The \emph{quotient of~$H$ with respect to $\tau$} is the hypergraph
    $H/\tau$ defined
    via $H/\tau\coloneqq \big(V/\tau,\{e/\tau\colon e \in E\}\big)$.
    We write $\scQ(H)$ for the set of all quotients of $H$, that is,
    \(
    \scQ(H) \coloneqq \big\{H/\tau \colon \tau \text{ partition of }V\big\}.
    \)
\end{definition}

Observe that $V/\tau = \tau$ holds for all partitions $\tau$ of~$V$,
and that we may have
$e/\tau=e'/\tau$ even if the edges $e,e' \in E$ are distinct.
Using the Möbius transform over the partition lattice, we now prove
that embedding counts
can be represented as a linear combination of homomorphism counts,
which is analogous to
the case of graphs (see \cite[(5.18)]{Lovasz12}).
\begin{restatable}[Writing hypergraph embedding counts as
    hypergraph homomorphism
    counts]{lemma}{embshom}
    \dglabel{lem:embs_to_homs}[def:hypergraph_quotient]
    For all hypergraphs $H$ there is a function $\gamma: \scQ(H) \to
    \mathbb{Q}$ such
    that for all hypergraphs $G$ we have
    \begin{align*}
        \#\embs{H}{G} = \sum_{F\in\scQ(H)} \gamma(F) \cdot \#\homs{F}{G},
    \end{align*}
    where $\gamma(F) \ne 0$ holds for all terms in the sum.
\end{restatable}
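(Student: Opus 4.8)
The statement to prove is \cref{lem:embs_to_homs}: embedding counts can be written as a linear combination of homomorphism counts indexed by quotients of $H$, with all coefficients nonzero. This is the hypergraph analogue of the classical graph identity, so the plan is to mimic Lovász's Möbius-inversion argument over the partition lattice, being careful about what is different for hypergraphs (unordered edges of unbounded rank).

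**The core identity.** First I would classify homomorphisms by their ``kernel'': for $\varphi \in \homs{H}{G}$, let $\tau_\varphi$ be the partition of $V(H)$ into fibres $\varphi^{-1}(v)$ for $v$ in the image. Then $\varphi$ factors uniquely as $H \twoheadrightarrow H/\tau_\varphi \hookrightarrow G$, where the second map is an embedding (it is injective by construction, and it is a homomorphism because $\varphi(e) = (e/\tau_\varphi)$-image is an edge of $G$ whenever $e \in E(H)$, using the definition of $H/\tau$ where $e/\tau$ is exactly the set of blocks meeting $e$). Conversely every embedding of a quotient $H/\tau$ into $G$, precomposed with the canonical surjection $H \to H/\tau$, gives a homomorphism with kernel refined by — but in general coarser than — $\tau$; so one must sum over the partition lattice. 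Concretely I would establish, for every hypergraph $F$,
\begin{equation}
    \#\homs{H}{F} \;=\; \sum_{\tau} \#\embs{H/\tau}{F},
\end{equation}
where the sum ranges over all partitions $\tau$ of $V(H)$ and $\#\embs{H/\tau}{F}$ counts injective homomorphisms $H/\tau \to F$. Care is needed because distinct partitions $\tau, \tau'$ can give isomorphic quotients $H/\tau \cong H/\tau'$; but that is fine — the identity is stated partition-by-partition, not isomorphism-type-by-isomorphism-type, and reconciling the two just produces integer coefficients $\#\{\tau : H/\tau \cong F\}$ which are automatically positive for $F \in \scQ(H)$.

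**Möbius inversion.** Now I would invert the above relation over the partition lattice $\Pi(V(H))$ ordered by refinement. Applying the identity with $F = H/\sigma$ for a partition $\sigma$, and noting that embeddings $H/\tau \to H/\sigma$ correspond to partitions coarser than $\tau$ but compatible with $\sigma$ in the appropriate sense, one gets a triangular system: $\#\homs{H/\sigma}{G}$ equals a sum of $\#\embs{H/\tau}{G}$ over $\tau \succeq \sigma$. Standard Möbius inversion on $\Pi(V(H))$ then yields
\begin{equation}
    \#\embs{H}{G} \;=\; \sum_{\tau \in \Pi(V(H))} \mu(\hat 0, \tau) \cdot \#\homs{H/\tau}{G}\,,
\end{equation}
where $\hat 0$ is the finest partition (so $H/\hat 0 = H$) and $\mu$ is the Möbius function of the partition lattice. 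Collecting terms by isomorphism type gives $\gamma(F) = \sum_{\tau:\, H/\tau \cong F} \mu(\hat 0,\tau)$ for $F \in \scQ(H)$, and $\gamma(F)=0$ otherwise, which is the claimed form. The only genuine nuisance compared to graphs is bookkeeping: edges can collapse under a quotient ($e/\tau = e'/\tau$ for $e\neq e'$) and we must make sure this does not break the factorisation argument — it does not, since $E(H/\tau)$ is defined as the \emph{set} $\{e/\tau : e\in E\}$, so a homomorphism $H/\tau \to F$ automatically respects the collapsed edge and the factorisation $\varphi = (\text{emb}) \circ (\text{surj})$ goes through verbatim.

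**The main obstacle — nonvanishing of $\gamma(F)$.** The delicate part is showing $\gamma(F)\neq 0$ for \emph{every} $F\in\scQ(H)$, not just that the expansion exists. The naive worry is cancellation among the Möbius values $\mu(\hat 0,\tau)$ over partitions $\tau$ with $H/\tau \cong F$. The standard resolution, which I would follow, is to re-derive the expansion by a change of basis internal to $\scQ(H)$: order $\scQ(H)$ by the quotient relation (i.e., $F' \le F$ if $F'$ is a quotient of $F$), observe that $\#\homs{H}{G} = \sum_{F \in \scQ(H)} \#\{\tau : H/\tau \cong F\}\cdot \#\embs{F}{G}$ is \emph{strictly triangular with positive diagonal} in this poset (the ``diagonal'' term $F = H$ has coefficient $1$, namely $\tau = \hat0$, since any nontrivial quotient is a \emph{strictly} smaller hypergraph so cannot be isomorphic to $H$), hence the transition matrix from the $(\#\embs{F}{\cdot})_{F\in\scQ(H)}$ basis to the $(\#\homs{F}{\cdot})_{F\in\scQ(H)}$ basis is invertible over $\Q$. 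Inverting it expresses $\#\embs{H}{G}$ in the homomorphism basis with support \emph{exactly} $\scQ(H)$: the coefficient of $F$ is nonzero precisely because inverting a triangular matrix with positive diagonal and the right zero-pattern cannot annihilate any coordinate that is reachable — more precisely, one argues inductively down the poset that $\gamma(F)$ picks up a nonzero contribution from the diagonal entry at $F$ that is not cancelled by strictly-larger terms. This reachability/nonvanishing argument is exactly the hypergraph counterpart of the graph case in \cite{Lovasz12,CurticapeanDM17}, and I expect it to be the step requiring the most care to write out cleanly, though no new idea beyond the graph setting is needed.
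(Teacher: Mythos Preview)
Your derivation of the expansion via Möbius inversion over the partition lattice matches the paper's proof exactly, up to and including the formula $\gamma(F) = \sum_{\tau:\, H/\tau \cong F} \mu(\hat 0,\tau)$. The gap is in the nonvanishing argument.

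Your proposed route --- invert a triangular matrix with positive diagonal and argue that the inverse has full support --- does not work in general. Inverting a triangular matrix with nonzero diagonal yields a triangular matrix with nonzero diagonal, but off-diagonal entries of the inverse can perfectly well vanish: the $3\times 3$ upper-triangular all-ones matrix has inverse with a zero in position $(1,3)$. So ``reachability in the poset'' together with ``positive diagonal'' is not enough; your inductive sketch that ``$\gamma(F)$ picks up a nonzero contribution from the diagonal entry at $F$ that is not cancelled by strictly-larger terms'' is precisely the step that fails in such examples, and you have not supplied any reason specific to the partition lattice that would prevent cancellation. You correctly identified the worry (``cancellation among the Möbius values'') but your proposed resolution does not actually address it.

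The paper's resolution is simpler and uses the explicit form of the Möbius function of the partition lattice: $\mu(\hat 0, \tau) = (-1)^{n-|\tau|} \prod_i (|B_i|-1)!$, whose sign depends only on the number $|\tau|$ of blocks. Since every partition $\tau$ with $H/\tau \cong F$ has the same number of blocks, namely $|V(F)|$, all summands in $\gamma(F)$ have the same sign, and the sum is nonzero.
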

\begin{proof}
    The proof is almost identical to the one for graphs. Observe that
    \begin{align*}
        \#\homs{H}{G} = \sum_{\tau} \#\embs{H/\tau}{G},
    \end{align*}
    where $\tau$ ranges over all partitions of $V(H)$.
    By Möbius inversion over the lattice of partitions of $V(H)$, then,
    \begin{align*}
        \#\embs{H}{G} = \sum_{\tau} \mu(\tau) \cdot \#\homs{H/\tau}{G},
    \end{align*}
    where $\mu(\tau)$ is the Möbius function of the partition lattice,
    see~\cite[Appendix
    A]{Lovasz12}.
    For each $F \in \scQ(H)$, let $\gamma(F) =
    \sum_{\tau:H/\tau \cong F}
    \mu(\tau)$,
    We obtain
    \begin{align*}
        \#\embs{H}{G} = \sum_{F \in \scQ(H)} \gamma(F) \cdot \#\homs{F}{G}.
    \end{align*}
    To conclude, for any given $F \in \scQ(H)$ note that $\mu(\tau)$ is
    either strictly
    positive or strictly negative for all $\tau$ such that $H/\tau \cong F$---see
    again~\cite{Lovasz12}.
    This implies $\gamma(F) \ne 0$ for all $F \in \scQ(H)$, as claimed.
\end{proof}

Using an inclusion--exclusion argument, we now prove that strong
embedding counts can be
represented as a linear combination of embedding counts, which is
analogous to the case of
graphs (see \cite[(5.17)]{Lovasz12}).
\begin{restatable}[Writing hypergraph strong embedding counts as hypergraph
    homomorphism counts]{lemma}{strembshom}
    \dglabel{lem:strembs_to_homs}[lem:embs_to_homs]
    Let $\scS(H)$ be the set of all edge-super-hypergraphs of $H$.
    For every hypergraph $H$ there is a function $\gamma: \scQ(\scS(H)) \to \mathbb{Q}$
    such that, for all hypergraphs $G$, we have
    \begin{align*}
        \#\strembs{H}{G} = \sum_{F \in \scQ(\scS(H))} \gamma(F) \cdot \#\homs{F}{G},
    \end{align*}
    where $\gamma(F) \ne 0$ holds for every $F \in \scS(H)$.
\end{restatable}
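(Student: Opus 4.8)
The plan is to mirror the classical inclusion--exclusion argument used for graphs (Lovász~\cite[(5.17)]{Lovasz12}), with the quotient-based homomorphism basis from \cref{lem:embs_to_homs} substituted for the graph version. First I would fix a hypergraph $H=(V,E)$ and observe that a strong embedding from $H$ into $G$ is exactly an embedding $\varphi\colon V(H)\to V(G)$ whose image induces \emph{precisely} the edge set $\{\varphi(e)\colon e\in E(H)\}$ in $G$; equivalently, an embedding $\varphi$ is a strong embedding into $G$ if and only if, viewing the image as a sub-hypergraph, no edge of $G$ lying inside $\varphi(V(H))$ other than the images of $E(H)$ is present. This suggests summing embedding counts over edge-super-hypergraphs of $H$: for every $G$,
\begin{equation*}
    \#\embs{H}{G} = \sum_{H' \in \scS(H)} \#\strembs{H'}{G},
\end{equation*}
where the sum ranges over (isomorphism types of, or more carefully, labelled) edge-super-hypergraphs $H'$ of $H$ on the same vertex set. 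Indeed, any embedding $\varphi$ of $H$ into $G$ "is" a strong embedding of exactly one such $H'$, namely the one whose edges are the preimages under $\varphi$ of all edges of $G$ contained in $\varphi(V(H))$. The poset here is the lattice of edge-sets containing $E(H)$ ordered by inclusion, which is a Boolean lattice on the ground set $\{S\subseteq V : S\notin E(H)\}$ (or rather the relevant finite portion of it), so Möbius inversion gives
\begin{equation*}
    \#\strembs{H}{G} = \sum_{H' \in \scS(H)} (-1)^{|E(H')\setminus E(H)|}\,\#\embs{H'}{G}.
\end{equation*}

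Next I would expand each $\#\embs{H'}{G}$ via \cref{lem:embs_to_homs}: there is $\gamma_{H'}\colon\scQ(H')\to\Q$, nonzero on all of $\scQ(H')$, with $\#\embs{H'}{G}=\sum_{F\in\scQ(H')}\gamma_{H'}(F)\cdot\#\homs{F}{G}$. Substituting and collecting terms by isomorphism type of $F$ yields a finitely supported coefficient function $\gamma$ on $\bigcup_{H'\in\scS(H)}\scQ(H') = \scQ(\scS(H))$ with
\begin{equation*}
    \#\strembs{H}{G} = \sum_{F\in\scQ(\scS(H))} \gamma(F)\cdot\#\homs{F}{G},
\end{equation*}
where $\gamma(F)=\sum_{H'\in\scS(H)}(-1)^{|E(H')\setminus E(H)|}\,\gamma_{H'}(F)$ (with $\gamma_{H'}(F)\coloneqq 0$ when $F\notin\scQ(H')$). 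Since $\scQ(H')$ ranges over quotients of the finitely many edge-super-hypergraphs of $H$, the support is finite, as required.

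The main obstacle is establishing the "nonvanishing on $\scS(H)$" clause, i.e.\ that $\gamma(F)\ne 0$ for every $F\in\scS(H)$ (viewed as the quotient under the trivial partition, so $F$ itself). Here one must argue that the only $H'\in\scS(H)$ whose quotient set $\scQ(H')$ contains $F$ as the \emph{trivial} quotient is $H'=F$ itself, and that any \emph{nontrivial} quotient $H''/\tau$ of some other $H''$ that happens to be isomorphic to $F$ still does not interfere --- because a genuine quotient identifies at least two vertices and hence has fewer vertices than $H''$, so it cannot be isomorphic to $H''$; thus among the $H'$ contributing to $\gamma(F)$ via a quotient isomorphic to $F$ on $|V(H)|$ vertices, exactly one, $H'=F$, contributes through the identity partition, and that contribution is $\gamma_F(F)\ne 0$ by \cref{lem:embs_to_homs}. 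Combined with a parity/sign argument --- all contributions through the identity partition from a fixed $H'$ have a definite sign $(-1)^{|E(H')\setminus E(H)|}$, but one must check that contributions to $F\in\scS(H)$ from \emph{proper} quotients of strictly larger edge-super-hypergraphs simply cannot land on $F$ for cardinality reasons --- this isolates the single nonzero term $\gamma_F(F)$ and shows $\gamma(F)=\gamma_F(F)\ne 0$. I would be careful to handle the bookkeeping between labelled edge-sets and isomorphism types exactly as in the graph proof of~\cite{Lovasz12}, where the analogous statement and its nonvanishing clause are already known, since that is where the subtlety lies; the rest is a routine transfer.
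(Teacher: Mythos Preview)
Your approach is essentially the same as the paper's: both derive
\[
\#\strembs{H}{G}=\sum_{A\subseteq \overline{E}(H)}(-1)^{|A|}\,\#\embs{H^A}{G}
\]
(you via M\"obius inversion on the Boolean lattice of edge-super-hypergraphs, the paper via inclusion--exclusion over ``bad'' non-edges), then plug in \cref{lem:embs_to_homs} and collect by isomorphism type. One small correction to your nonvanishing argument: for $F\in\scS(H)$ it is not generally true that ``exactly one $H'=F$'' contributes and that $\gamma(F)=\gamma_F(F)$; rather, \emph{every} labelled $H^A$ with $H^A\cong F$ contributes through the finest partition, and since all such $A$ have the same cardinality $|E(F)|-|E(H)|$, these contributions share the sign $(-1)^{|E(F)|-|E(H)|}$ and hence cannot cancel---which is exactly what the paper records.
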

\begin{proof}
    The proof is almost identical to the one for graphs
    (\cite{CurticapeanDM17}, see
    also~\cite[Theorem 12]{RothS20} for a detailed presentation of the
    transformation). We provide the details here only for reasons of
    self-containment.

    First, set $\overline{E}(H) \coloneqq (2^{V(H)} \setminus
    \{\emptyset\}) \setminus
    E(H)$; this is the set of non-edges of $H$.
    For any $A \subseteq \overline{E}(H)$ set~${H^A \coloneqq (V(H),E(H) \cup A})$.
    Observe that
    \begin{align*}
        \#\strembs{H}{G} = \#\embs{H}{G} - \#\left(
            \bigcup_{e \in \overline{E}(H)} \{\varphi \in
        \embs{H}{G}\mid\varphi(e)\in E(G) \}\right).
    \end{align*}
    By inclusion-exclusion, we obtain
    \begin{align*}
        \#\strembs{H}{G} & = \#\embs{H}{G} -
        \sum_{\emptyset \neq A \subseteq \overline{E}(H)} (-1)^{|A|+1}\cdot
        \#\left( \bigcap_{e \in A} \{\varphi \in
        \embs{H}{G}\mid\varphi(e)\in E(G) \}\right)
        \\
        ~                & =  \#\embs{H}{G} -
        \sum_{\emptyset \neq A \subseteq \overline{E}(H)}
        (-1)^{|A|+1}\cdot \#\embs{H^A}{G}       \\
        ~                & = \sum_{A \subseteq \overline{E}(H)}
        (-1)^{|A|}\cdot \#\embs{H^A}{G}.
    \end{align*}
    Next, by applying \cref{lem:embs_to_homs} we obtain
    \begin{align*}
        \#\strembs{H}{G} = \sum_{A \subseteq \overline{E}(H)} (-1)^{|A|}
        \sum_{\tau} \mu(\tau)
        \cdot \#\homs{H^A/\tau}{G},
    \end{align*}
    where $\tau$ ranges over all partitions of $V(H^A)=V(H)$.
    If for every $F \in \scQ(\scS(H))$ we collect all terms~${H^A/\tau\cong F}$, we finally obtain
    \begin{align*}
        \#\strembs{H}{G} = \sum_{F \in \scQ(\scS(H))} \gamma(F) \cdot \#\homs{F}{G},
    \end{align*}
    where
    \begin{align}
        \gamma(F) = \sum_{A \subseteq \overline{E}(H)} (-1)^{|A|}  \cdot
        \sum_{\tau \,:\,
        H^A\!/\!\tau = F} \mu(\tau) .\label{eq:strembs_cF}
    \end{align}

    It remains to prove that $\gamma(F) \ne 0$ when $F \in \scS(H)$,
    that is, when $F\cong
    H^A$ for some $A \subseteq \overline{E}(H)$.
    Clearly, in this case the inner sum in \cref{eq:strembs_cF} is only
    over the finest partition $\tau$ of $V(H)$, which satisfies~${\mu(\tau)=1}$, see~\cite[Appendix A]{Lovasz12}.
    Second, we have $|A|=|E(F)|-|E(H)|$ independently of $A$.
    We conclude that
    \begin{align*}
        \gamma(F) \;= \sum_{A \subseteq\overline{E}(H) \,:\, H^A\cong F}
        (-1)^{|E(F)|-|E(H)|} \cdot 1
        \ne 0,
    \end{align*}
    which concludes the proof.
\end{proof}

\subsection{A hypergraph tensor product}\label{sub:homs_and_tensors}
Before moving forward, we need a notion of tensor product of two hypergraphs.
This is the \emph{categorical product} of~\cite[Definition
15.1]{hellmuth_survey_2012}, and is the hypergraph equivalent of the standard tensor
product of graphs.
We need this product in order to obtain our complexity monotonicity result (the standard
graph tensor product is used in the same way in~\cite{CurticapeanDM17}).
For a formal definition we need some additional notation.
Let $G$ and $H$ be hypergraphs and let $S =
\{(u_1,v_1),\ldots,(u_\ell,v_\ell)\}$ be a
subset of the Cartesian product $V(G) \times V(H)$.%
\footnote{Observe that $u_1,\ldots,u_\ell$, as well as
  $v_1,\ldots,v_\ell$, are in general
not distinct.}
The \emph{$G$-projection} is the function $\pi_G\colon V(G)\times
V(H)\to V(G)$ that maps
a pair to its first component, that is, $\pi_G(u,v)\coloneqq u$ for
all $(u,v)\in
V(G)\times V(H)$.
The $H$-projection $\pi_H$ is defined analogously.
We extend the notion of projections to sets $S\subseteq V(G)\times
V(H)$ in the natural
manner by setting
\begin{align*}
    \pi_G(S) \coloneqq \{\pi_G(s) \colon s\in S\}
    \quad\text{and}\quad
    \pi_H(S) \coloneqq \{\pi_H(s) \colon s\in S\}.
\end{align*}

Let $F,G,H$ be hypergraphs and let $\phi \colon V(F) \to V(G) \times
V(H)$ be a map.
The \emph{projection} of $\phi$ is the pair $\proj(\phi)=(\phi_G,\phi_H)$, where
$\phi_G(v)=\pi_G(\phi(v))$ and $\phi_H(v)=\pi_H(\phi(v))$ for all $v \in V(F)$.
Conversely, given two maps $\phi_G : V(F) \to V(G)$ and $\phi_H :
V(F) \to V(H)$, their
\emph{join} $\join(\phi_G,\phi_H)$ is defined as the function~${\phi\colon V(F) \to V(G)\times V(H)}$ that satisfies $\phi(v)=(\phi_G(v),\phi_H(v))$ for
all $v \in V(F)$.

\begin{remark}
    \dglabel{2.22-1}(We have $\join=\proj^{-1}$)
    Clearly $\join=\proj^{-1}$, since $\join(\proj(\phi))=\phi$ for
    every map $\phi$.
\end{remark}

\begin{definition}[The tensor product of hypergraphs, ${G\tensor H}$]
    \dglabel{def:hyp_tensor}
    For two hypergraphs $G$ and $H$, the \emph{tensor product}
    ${G\tensor H}$ is the
    hypergraph with vertex set $V(G) \times V(H)$ where each set $e
    \subseteq V(G) \times
    V(H)$ is an edge if and only if $\pi_G(e) \in E(G)$ and $\pi_H(e) \in E(H)$.
\end{definition}

Similar to the tensor product for graphs, the tensor product for hypergraphs is
multiplicative with respect to homomorphism counts.

\begin{lemma}[$\join$ and $\proj$ are compatible with the hypergraph tensor product]\dglabel{lem:join_prod_hom}
    If $\phi\in\homs{F}{G \tensor H}$, then $\proj(\phi)\in\homs{F}{G}\times\homs{F}{H}$.
    Conversely, if $(\phi_G,\phi_H) \in \homs{F}{G}\times\homs{F}{H}$, then
    $\join(\phi_G,\phi_H) \in \homs{F}{G \tensor H}$.
\end{lemma}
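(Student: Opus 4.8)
The plan is to simply unwind the definitions on both sides, the linchpin being the bookkeeping identity that, for any map $\phi\colon V(F)\to V(G)\times V(H)$ with projection $\proj(\phi)=(\phi_G,\phi_H)$ and any set $S\subseteq V(F)$, one has $\pi_G(\phi(S))=\phi_G(S)$ and $\pi_H(\phi(S))=\phi_H(S)$. This is immediate from the way projections are extended to subsets of $V(G)\times V(H)$: $\pi_G(\phi(S))=\{\pi_G(\phi(v)):v\in S\}=\{\phi_G(v):v\in S\}=\phi_G(S)$, and symmetrically for $H$. I would isolate this as a one-line observation before treating the two directions.

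For the forward direction, I would fix $\phi\in\homs{F}{G\tensor H}$ and an arbitrary edge $e\in E(F)$. Since $\phi$ is a homomorphism, $\phi(e)\in E(G\tensor H)$, so by \cref{def:hyp_tensor} we get $\pi_G(\phi(e))\in E(G)$ and $\pi_H(\phi(e))\in E(H)$. Rewriting via the identity above yields $\phi_G(e)\in E(G)$ and $\phi_H(e)\in E(H)$. As $e\in E(F)$ was arbitrary, $\phi_G\in\homs{F}{G}$ and $\phi_H\in\homs{F}{H}$, hence $\proj(\phi)=(\phi_G,\phi_H)\in\homs{F}{G}\times\homs{F}{H}$, as claimed.

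For the converse, I would take $(\phi_G,\phi_H)\in\homs{F}{G}\times\homs{F}{H}$, set $\phi=\join(\phi_G,\phi_H)$ (so $\proj(\phi)=(\phi_G,\phi_H)$ by \cref{2.22-1}), and fix $e\in E(F)$. Using the identity, $\pi_G(\phi(e))=\phi_G(e)\in E(G)$ and $\pi_H(\phi(e))=\phi_H(e)\in E(H)$; note also that $\phi(e)\neq\emptyset$ since $e\neq\emptyset$, so $\phi(e)$ is a legitimate candidate edge. Thus by \cref{def:hyp_tensor} we have $\phi(e)\in E(G\tensor H)$, and since $e$ was arbitrary, $\phi\in\homs{F}{G\tensor H}$. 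There is no genuine obstacle here; the only point that needs a moment's care is the commutation of projections with taking images, which is exactly the identity stated at the outset, and the fact that the edge condition in \cref{def:hyp_tensor} is an ``if and only if,'' which makes both directions go through symmetrically.
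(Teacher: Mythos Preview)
Your proof is correct and follows essentially the same approach as the paper's own proof: both directions are handled by fixing an arbitrary edge $e\in E(F)$ and using the defining property of the tensor product, that $\phi(e)\in E(G\tensor H)$ if and only if $\pi_G(\phi(e))\in E(G)$ and $\pi_H(\phi(e))\in E(H)$. Your explicit isolation of the identity $\pi_G(\phi(S))=\phi_G(S)$ is a nice touch that the paper leaves implicit, but otherwise the arguments are the same.
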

\begin{proof}
    Fix $\phi\in\homs{F}{G \tensor H}$ and set $(\phi_G,\phi_H)=\proj(\phi)$.
    Consider any $e\in E(F)$.
    Since $\phi\in\homs{F}{G \tensor H}$, we have $\phi(e)\in E(G\tensor H)$.
    By definition of~$\tensor$, this implies $\phi_G(e)\in E(G)$ and
    $\phi_H(e)\in E(H)$.
    Thus, $\phi_G \in \homs{F}{G}$ and $\phi_H \in \homs{F}{H}$, that
    is, $\proj(\phi) \in
    \homs{F}{G}\times\homs{F}{H}$.

    Now fix $(\phi_G,\phi_H) \in \homs{F}{G}\times\homs{F}{H}$ and set
    $\phi=\join(\phi_G,\phi_H)$.
    Again, consider any $e \in E(F)$.
    Since $\phi_G \in \homs{F}{G}$ and $\phi_H \in \homs{F}{H}$, we
    have $\phi_G(e)\in
    E(G)$ and $\phi_H(e)\in E(H)$.
    By the definition of $\tensor$, this implies $\phi(e) \in E(G \tensor H)$.
    Thus, $\phi \in \homs{F}{G \tensor H}$.
\end{proof}

\begin{corollary}[Hypergraph homomorphism counts are multiplicative
    with respect to the
    hypergraph tensor product]
    \dglabel{lem:semi_group_homomorphism}[lem:join_prod_hom,2.22-1]
    All hypergraphs $F,G,H$ satisfy \[\#\homs{F}{G \tensor H} =
    \#\homs{F}{G} \cdot \#\homs{F}{H}.\]
\end{corollary}
\begin{proof}
    By \cref{lem:join_prod_hom}, we have
    \begin{align*}
    & \proj(\homs{F}{G \tensor H}) \subseteq \homs{F}{G} \times \homs{F}{H},
    \quad                                                                     \\
    & \join(\homs{F}{G} \times \homs{F}{H}) \subseteq \homs{F}{G \tensor H}.
    \end{align*}
    Together with the fact that $\proj$ and $\join$ are inverses of
    each other (see
    \cref{2.22-1}), this implies that $\proj$ is a bijection from
    $\homs{F}{G \tensor H}$
    to $\homs{F}{G} \times \homs{F}{H}$.
    The claim follows.
\end{proof}

\begin{lemma}
    \dglabel{lem:efficient_Tensoring}(The hypergraph tensor product can
    be computed efficiently)
    The tensor product of two hypergraphs $G$ and $H$ can be computed
    in time $\Oh(2^{r^2}
    \cdot \lVert G \rVert \cdot \lVert H \rVert)$ where
    $r=\max(\rank(G),\rank(H))$.
\end{lemma}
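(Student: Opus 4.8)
The plan is to construct $G \tensor H$ directly from \cref{def:hyp_tensor}. The vertex set $V(G) \times V(H)$ is written out explicitly in time $|V(G)| \cdot |V(H)| \le \lVert G \rVert \cdot \lVert H \rVert$, which is within budget. For the edges, the guiding observation is that every edge $e$ of $G \tensor H$ is contained in $\pi_G(e) \times \pi_H(e)$, where $\pi_G(e) \in E(G)$ and $\pi_H(e) \in E(H)$, and that the pair $(\pi_G(e), \pi_H(e))$ is uniquely determined by $e$. Hence I would iterate over all pairs $(e_G, e_H) \in E(G) \times E(H)$ and, for each pair, enumerate every subset $S \subseteq e_G \times e_H$, adding $S$ to $E(G \tensor H)$ exactly when $\pi_G(S) = e_G$ and $\pi_H(S) = e_H$. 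By \cref{def:hyp_tensor}, such $S$ are precisely the edges of $G \tensor H$ whose projections are $e_G$ and $e_H$; and since we insist on \emph{equality} of the projections (rather than mere membership in $E(G)$ and $E(H)$), each edge of $G \tensor H$ is produced exactly once, so no deduplication step is needed. Note also that $S = \emptyset$ is never added, because $\pi_G(S) = e_G \neq \emptyset$ forces $S \neq \emptyset$.

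For the running time, $|E(G)| \le \lVert G \rVert$ and $|E(H)| \le \lVert H \rVert$ since every edge is non-empty, so there are at most $\lVert G \rVert \cdot \lVert H \rVert$ pairs $(e_G, e_H)$. For a fixed pair, $|e_G \times e_H| \le \rank(G) \cdot \rank(H) \le r^2$, so there are at most $2^{r^2}$ subsets $S$ to consider; enumerating them in Gray-code order while maintaining $\pi_G(S)$ and $\pi_H(S)$ incrementally, each step (the projection test and, when it succeeds, emitting $S$) costs $\poly(r)$ time. This yields a total running time of $\Oh(2^{r^2} \cdot \poly(r) \cdot \lVert G \rVert \cdot \lVert H \rVert)$; the $\poly(r)$ bookkeeping overhead is subsumed in the stated bound (and is $\Oh(1)$ per subset if one charges unit cost for set operations).

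I do not anticipate a genuine obstacle. The only two points requiring a little care are (i) organising the enumeration so that each edge of $G \tensor H$ is generated exactly once, which the ``projection equals the chosen edge'' rule above handles, and (ii) checking that the per-subset bookkeeping is dominated by the $2^{r^2}$ factor; neither is difficult.
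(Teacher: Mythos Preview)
Your proposal is correct and follows essentially the same approach as the paper: enumerate $V(G)\times V(H)$, then for each pair $(e_G,e_H)\in E(G)\times E(H)$ list all subsets of $e_G\times e_H$ and keep those whose projections equal $e_G$ and $e_H$. Your additional remarks on uniqueness of the generated edges and the $\poly(r)$ bookkeeping are fine refinements but not needed beyond what the paper does.
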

\begin{proof}
    First, compute $V(G\tensor H)= V(G) \times V(H)$.
    Next, for each pair of edges $e_G \in E(G)$ and $e_H \in E(H)$, for
    each $e \subseteq
    e_G \times e_H$, add $e$ to $E(G \tensor H)$ if and only if $\pi_G(e)=e_G$ and
    $\pi_H(e)= e_H$.
    Observe that this produces all and only the edges of $G\tensor H$.
    The first step requires time $\Oh(|V(G)|\cdot|V(H)|) = \Oh(\lVert G
    \rVert \cdot
    \lVert H \rVert)$.
    Since $G$ and $H$ contain respectively at most $\lVert G \rVert$ and
    $\lVert H \rVert$
    edges, and each of them has size at most $r$, the second step requires time
    $\Oh(2^{r^2} \cdot \lVert G \rVert \cdot \lVert H \rVert)$.
\end{proof}

\subsection{Complexity monotonicity for hypergraph motif paramters}
\label{sub:hyper_motifs}
The main goal of this subsection is to prove \cref{lem:monotonicity_intro}.
In fact, we prove a more technical result,
\cref{cor:hypergraph_motif_monotonicity_fine_neq}, from which the direction
$\homsprob(\scH_\Gamma) \fptredlin \hyperME(\Gamma)$ follows immediately.
The other direction, $\hyperME(\Gamma) \fptredlin
\homsprob(\scH_\Gamma)$, is immediate: given a hypergraph $G$ and a function $\gamma \in
\Gamma$, it is
sufficient to compute $\#\homs{H}{G}$ for every $H \in \supp(\gamma) \subseteq
\scH_{\gamma}$, and then compute their linear combination into $\zeta_\gamma(G)$.
Thus, we have $\homsprob(\scH_\Gamma) \fpteqlin \hyperME(\Gamma)$, which proves \cref{lem:monotonicity_intro}.
As a further consequence we obtain the following result, which we extensively use in
\cref{sec:sub,sec:indsub}.
\begin{theorem}\label{clm:explicit_sub_dedekind}
    For every recursively enumerable family of hypergraphs $\scH$, we have
    \begin{align*}
        \homsprob(\scQ(\scH)) &\fpteqlin \subsprob(\scH)\,,\;\text{and}\\
        \homsprob(\scS(\scH)) &\fptredlin \indsubsprob(\scH) \fptredlin \homsprob(\scQ(\scS(\scH))).
    \end{align*}
\end{theorem}
\begin{proof}
    First, by chaining \cref{lem:subs_and_indsubs_are_motifs,lem:monotonicity_intro}, and
    by definition of $\homsprob(\scQ(\scH))$, we have
    \[
        \subsprob(\scH)=\hyperME\left(\Gamma_\scH\right) \fpteqlin
        \homsprob\left(\bigcup_{\gamma \in \Gamma_\scH}\supp(\gamma)\right) =
        \homsprob(\scQ(\scH)).
    \]
    Second, again by chaining \cref{lem:subs_and_indsubs_are_motifs,lem:monotonicity_intro},
    \[
        \subsprob(\scH)=\hyperME(\Gamma'_\scH)\fpteqlin \homsprob\left(\bigcup_{\gamma' \in
        \Gamma'_\scH}\supp(\gamma')\right).
    \]
    Now, again by \cref{lem:subs_and_indsubs_are_motifs},
    \[
        \scS(\scH) \subseteq \bigcup_{\gamma' \in \Gamma'_\scH}\supp(\gamma') \subseteq \scQ(\scS(\scH)),
    \]
    which proves the two reductions of the second claim.
\end{proof}
The rest of the subsection is therefore dedicated to developing the complexity
monotonicity principle for hypergraph motif parameters, and in particular to obtaining
\cref{cor:hypergraph_motif_monotonicity_fine_neq}.
To obtain complexity monotonicity for hypergraphs, we adapt the so-called ``Dedekind
interpolation'' technique of \cite{BLR23}.
Before starting, we observe that the proof of that technique (see~\cite[full version,
Theorem 36]{BLR23}) is not entirely precise, as it does not discuss the complexity of
arithmetic operations over a certain abstract semigroup used in the proof.
For this reason, we provide an alternative version of the statement via circuits.
Thus, although our proofs follow the same argument as  in~\cite{BLR23}, they require an
additional recursive construction of a circuit; they can be found in the appendix.

\begin{restatable*}[A Dedekind circuit
    $\mathbb{D}(\varphi_1,\dots,\varphi_k)$  for semigroup homomorphisms
    $\varphi_1,\dots,\varphi_k$]{definition}{ddcirc}%
    \label{def:dedekind_circuit}%
    For~a~semigroup $(\mathrm{G},\ast)$
    and pairwise distinct semigroup homomorphisms%
    \footnote{That is, we have $\varphi_i(g\ast g')= \varphi_i(g)\cdot
        \varphi_i(g')$ for all $i\in\{1,\dots,k\}$ and
    $g,g'\in \mathrm{G}$.}
    $\varphi_1,\dots,\varphi_k\colon
    \mathrm{G} \to \Q$,
    a \emph{Dedekind Circuit}
    $\mathbb{D}(\varphi_1,\dots,\varphi_k)$ is an arithmetic circuit
    over~$\mathbb{Q}$ with
    \begin{itemize}
        \item $k$ \emph{output gates} $\mathrm{out}_1,\dots,\mathrm{out}_k$; and
        \item for finitely many \(g \in \mathrm{G}\), an \emph{input gate} $\mathrm{in}[g]$
            labeled with \(g\).
        \item Further, for each linear combination
            $F=a_1\varphi_1 + \dots + a_k\varphi_k$
            (with \(a_1,\dots,a_k \in \mathbb{Q}\)),
            the circuit $\mathbb{D}(\varphi_1,\dots,\varphi_k)$
            when each gate \(\mathrm{in}[g]\) is assigned the value \(F(g)\), outputs
            \(\mathrm{out}_i = a_i\) for all \(1 \le i \le k\).
            \qedhere
    \end{itemize}
\end{restatable*}
\begin{restatable*}[Efficiently isolating terms in finite linear
    combinations of semigroup homomorphisms, ``Dedekind Interpolation'',
    {\cite[full version, compare Theorem 36]{BLR23}}]{theorem}{rstdi}
    \dglabel{thm:dedekind_new}
    Let $(\mathrm{G},\ast)$ be a computable semigroup and let
    ${\varphi_1,\dots,\varphi_k\colon \mathrm{G} \to \Q}$ be pairwise
    distinct and computable semigroup homomorphisms from
    $(\mathrm{G},\ast)$ to $(\Q,\cdot)$.
    There exists an algorithm~$\mathbb{A}$ with the following properties.
    \begin{enumerate}
        \item $\mathbb{A}$ receives as input
            \begin{itemize}
                \item semigroup elements $g_1,\dots,g_k \in \mathrm{G}$
                    with $\varphi_i(g_i)\neq 0$ for all $i\in\{1,\dots,k\}$, and
                \item for each $i,j\in\{1,\dots,k\}$ with $i<j$, a
                    semigroup element $g_{i,j}\in \mathrm{G}$ with
                    $\varphi_i(g_{i,j})\neq \varphi_j(g_{i,j})$.
            \end{itemize}
        \item $\mathbb{A}$ computes a Dedekind Circuit
            $\mathbb{D}(\varphi_1,\dots,\varphi_k)$ of depth $O(k)$ and
            constant fan-in.
            \qedhere
    \end{enumerate}
\end{restatable*}

Since the fan-in is constant, there are exactly $k$ output gates,
and the depth of the circuit is $O(k)$, the circuit has at most
$\exp(O(k))$ gates in total.

We are now able to provide an efficient method for isolating terms
in a hypergraph motif parameter; in other words, we show that the
computation of a linear combination of (hypergraph) homomorphism
counts is at least as hard as computing its hardest term.
\begin{corollary}[Efficiently computing single $\#\homs{H}{G}$
    with oracle access to
    $\zeta_\gamma$]%
    \dglabel{cor:hypergraph_motif_monotonicity_fine_neq}%
    [lem:efficient_Tensoring,lem:semi_group_homomorphism,thm:dedekind]
    There is a computable function $f$ and an oracle algorithm~$R$
    with the following properties.
    \begin{enumerate}
        \item The input of $R$ is a hypergraph~$G$ and a finitely
            supported function~$\gamma$ from hypergraphs to $\Q$,
            encoded in binary.
        \item $R$ has access to an oracle for the hypergraph motif
            parameter~$\zeta_\gamma$.
        \item $R$ computes $\#\homs{H}{G}$ for all $H\in\supp(\gamma)$.
        \item The running time of $R$ is $O(f(\gamma)\cdot \lVert G
            \rVert)$ and it makes at most $f(\gamma)$ queries.
    \end{enumerate}
\end{corollary}
\begin{proof}
    Let $\supp(\gamma)=\{H_1,\dots,H_k\}$, and let $\ell$ be the
    maximum size of any edge in any $H_i$.
    One can compute $\ell$ in time $f(\lVert\gamma\Vert)$; and in
    time $\Oh(\lVert G
    \rVert)$ one can then delete from $G$ all edges of size strictly
    larger than $\ell$.
    Note that this does not change $\#\homs{H}{G}$ for any $H \in
    \supp(\gamma)$.
    Thus, without loss of generality we may assume $G$ contains edges
    of size at most $\ell$.

    Recall that our goal is the computation of $\#\homs{H_i}{G}$ for
    all $i\in[k]$. To this end, we rely on Dedekind Interpolation
    which requires us to specify a semigroup together with a sequence
    of semigroup homomorphisms. We choose $(\mathrm{G},\otimes)$
    where $\mathrm{G}$ is the set of all (isomorphism types of)
    hypergraphs, and $\otimes$ denotes the hypergraph tensor product
   ---clearly, the tensor product is associative, that is $G_1
    \otimes (G_2 \otimes G_3) \cong (G_1 \otimes G_2) \otimes G_3$.
    Next, for every $i\in [k]$, we define $\varphi_i: \mathrm{G} \to
    \mathbb{Q}$ as the function $F \mapsto \#\homs{H_i}{F}$.
    Our analysis in~\cref{sub:homs_and_tensors} shows that the
    $\varphi_i$ are all semigroup homomorphisms
    (w.r.t.\ $(\mathrm{G},\otimes)$).

    For the application of \cref{thm:dedekind_new}, we set $g_i= H_i$
    for each $i\in [k]$, as $\varphi_i(H_i)=\#\homs{H_i}{H_i}>0$.
    Moreover, for the $g_{i,j}$, we recall  Lovász' Theorem stating
    that, for graphs, the (infinite) vector $(\#\homs{H}{F})_F$
    determines the isomorphism type
    of $H$ (see~\cite[Chapter 5.3]{Lovasz12}), and we note that the
    proof applies without non-trivial modifications to hypergraphs as
    well. As a consequence, for each pair $i<j \in [k]$, there must
    be a hypergraph $g_{i,j}$ such that $\#\homs{H_i}{g_{i,j}}\neq
    \#\homs{H_j}{g_{i,j}}$ as $H_i$ and $H_j$ are not isomorphic.
    Lastly, since the $H_i$ only depend on $\gamma$ (and not on $G$),
    we may search for the $g_{i,j}$ by brute force in time only depending on $\gamma$.

    We next call the algorithm $\mathbb{A}$
    from \cref{thm:dedekind_new} with the $g_i$ and $g_{i,j}$ as
    specified previously. Let
    $\mathbb{D}=\mathbb{D}(\varphi_1,\dots,\varphi_k)$ be the
    Dedekind Circuit returned by $\mathbb{A}$. Also, we point out
    that the running time of $\mathbb{A}$, the size of $\mathbb{D}$,
    and the hypergraphs associated to the input gates all only depend
    on $\gamma$, and not on $G$.

    Next, we observe that our oracle for $\zeta_\gamma$ may be used
    to evaluate the following function $F:\mathrm{G} \to \mathbb{Q}$; we have
    \begin{align*}
      F(g) = \zeta_\gamma(G \otimes g) & = \sum_{i=1}^k \gamma(H_i)
      \cdot \#\homs{H_i}{G \otimes g}
      \\
      ~                                & = \sum_{i=1}^k \gamma(H_i)
      \cdot \#\homs{H_i}{G} \cdot
      \#\homs{H_i}{g}
      \\
      ~                                & = \sum_{i=1}^k a_i \cdot
      \varphi_i(g) ,
    \end{align*}
    where $a_i = \gamma(H_i)\cdot \#\homs{H_i}{G}$.

    In the next step, we evaluate $\mathbb{D}$ by inserting to
    each input gate labelled with hypergraph $g$ the value
    $F(g)=\zeta_\gamma(G \otimes g)$---this is done via our oracle
    access and we point out that the total number of calls only
    depends on the number of input gates, and thus only on $\gamma$,
    as required. Note that, since $g$ only depends on $\gamma$, the
    time to construct $G \otimes g$, and its size, is bounded by
    $O(f_1(\gamma) \cdot \lVert G\rVert)$ for some computable
    function $f_1$ (see~\Cref{lem:efficient_Tensoring}).

    Observe that $F(g)\in O(f_2(\gamma) \cdot |V(G)|^\kappa)$ for
    some computable function $f_2$, where $\kappa$ is the maximum
    number of vertices over the hypergraphs $H_1,\dots,H_k$.  Since both
    $\kappa$ and the size of $\mathbb{D}$ only depend on $\gamma$ as
    well, the maximum number of bits required for any intermediate
    value in the circuit is bounded by $O(f_3(\gamma) \cdot \log
    (\lVert G\rVert))$ for some computable function $f_3$. Hence, we
    generously bound the time it takes to evaluate $\mathbb{D}$
    by $O(f'(\gamma) \cdot \lVert G \rVert)$ for some computable
    function $f'$. By definition of Dedekind circuits, the output
    gates yield $a_1,\dots,a_k$. In the final step, we output
    $a_i/\gamma(H_i) = \#\homs{H_i}{G}$ for all $i\in [k]$; note that
    $\gamma(H_i)\neq 0$ for all $i\in [k]$.

    The total running time is as follows.
    \begin{enumerate}
        \item Removing edges of $G$ of size larger than $\ell$: $\Oh(\lVert G
            \rVert)$.
        \item Computing the $g_i$ and $g_{i,j}$ for $i,j \in [k]$: only
            depends on $\gamma$.
        \item Construction of $\mathbb{D}$: only depends on $\gamma$.
        \item Constructing $G \otimes g$ and computing $\zeta_\gamma(G
            \otimes g)(=F(g))$ via our oracle for all hypergraphs $g$
            equipped to the input gates of $\mathbb{D}$: $O(f_1(\gamma)
            \cdot \lVert G\rVert)$.
        \item Evaluating $\mathbb{D}$ and returning $a_i/\gamma(H_i) =
            \#\homs{H_i}{G}$ for all $i\in [k]$: $O(f'(\gamma) \cdot
            \lVert G \rVert)$.
    \end{enumerate}
    The final running time is thus bounded by $O(f(\gamma) \cdot
    \lVert G \rVert)$, which concludes the proof.
\end{proof}

\subsection{Colourful and colour-prescribed homomorphisms}
\label{sub:homs_and_colhoms}
In the literature of counting subgraphs and induced subgraphs, coloured variants of these
problems naturally appear.
We discuss a generalization of these notions to hypergraphs.
To this end, let $H$ be a hypergraph.
An $H$\emph{-coloured hypergraph} is a pair $(G,c)$ of a hypergraph
$G$ and a homomorphism
$c\in \homs{G}{H}$.
In this context, $c$ is also called an \emph{$H$-colouring}.

\begin{definition}[Colour-prescribed and colourful homomorphisms
    between hypergraphs]%
    \dglabel{7-21-2}%
    A \emph{colour-prescribed homomorphism} from $H$ to $(G,c)$ is a
    homomorphism $\varphi
    \in \homs{H}{G}$ with $c(\varphi(v))=v$ for all $v\in V(H)$.
    A \emph{colourful} homomorphism from $H$ to $(G,c)$ is a homomorphism
    $\varphi \in
    \homs{H}{G}$ with $c(\varphi(V(H)))=V(H)$.
\end{definition}
We denote by $\cphoms{H}{(G,c)}$ the set of all colour-prescribed
homomorphisms from $H$
to $(G,c)$ and by $\cfhoms{H}{(G,c)}$ the set of all colourful
homomorphisms from $H$ to
$(G,c)$.
As with uncoloured homomorphisms, we also define corresponding
computational problems.

\noindent
\begin{minipage}{1\linewidth}%
  \begin{problem}-[$\#\textupsc{cp-Hom}$]{$\#\textupsc{cp-Hom}(\scH)$}
    \label{prob:cphoms}
    \PInput{A hypergraph $H \in \scH$ and an $H$-coloured hypergraph $G$}
    \POutput{$\#\cphoms{H}{(G,c)}$}
    \PParameter{$|H|$}
  \end{problem}

  \begin{problem}+=[$\#\textupsc{cf-Hom}$]{$\#\textupsc{cf-Hom}(\scH)$}
    \label{prob:cfhoms}
    \PInput{A hypergraph $H \in \scH$ and an $H$-coloured hypergraph $G$}
    \POutput{$\#\cfhoms{H}{(G,c)}$}
    \PParameter{$|H|$}
  \end{problem}
\end{minipage}

The main result of this section reads as follows.
\begin{lemmaq}
    \dglabel{lem:equiv_uncol_cp_cf}%
    [lem:cp_to_cf,lem:cf_to_nocol,lem:homs_to_cphoms]
    ($\homsprob(\scH)\fpteq\!\cphomsprob(\scH)\fpteq\!\cfhomsprob(\scH)$)
    For any family $\scH$ of hypergraphs
    $\homsprob(\scH)\fpteqlin\!\cphomsprob(\scH)\fpteqlin\!\cfhomsprob(\scH)$.
\end{lemmaq}
\Cref{lem:equiv_uncol_cp_cf} follows by the next three results,
\cref{lem:cp_to_cf_new,,lem:cf_to_nocol_new,,lem:homs_to_cphoms_new}.
\begin{lemma}
    \dglabel{lem:cp_to_cf_new}
    For every hypergraph $H$ and $H$-coloured hypergraph $(G,c)$, we
    have $\#\cphoms{H}{(G,c)}=\#\auts{H}\cdot\#\cfhoms{H}{(G,c)}$.
    As a consequence, $\cphomsprob(\scH)\fpteqlin\!\cfhomsprob(\scH)$
    for every hypergraph family \(\scH\).
\end{lemma}
\begin{proof}
    If $c$ is not surjective, both sides of the equation are 0, so assume this is not the case.
    By definition, color-prescribed homomorphisms are injective.
    Observe that if $h\in\cphoms{H}{(G,c)}$ and $a\in\auts{H}$, then
    $h\circ a$ is colourful as $h(V(H))=(h\circ a)(V(H))$ and $h$ is colour-prescribed.
    This shows ``$\leq$''.

    For ``$\geq$'', we show that for every colourful homomorphism
    $h\in\cfhoms{H}{(G,c)}$,
    there is a unique~$a\in\auts{H}$ such that $h\circ a$ is colour-prescribed.
    Observe that $a\coloneqq c\circ h$ is a surjective endomorphism of~$H$ and hence
    bijective as $V(H)$ is finite.
    Thus, $a$ is an automorphism.
    Then,~${h\circ a^{-1}}$ is colour-prescribed as $c\circ h\circ
    a^{-1}=(c\circ h)\circ(c\circ h)^{-1}=\id$ where $\id$ is the
    identity map on~$V(H)$.
    As composition with an automorphism does not change the image
    of $h$ and different colour-prescribed homomorphism have different
    images, $a$ is the unique automorphism that makes $h$ colour-prescribed.
\end{proof}
\begin{lemma}
    \dglabel{lem:cf_to_nocol_new}
    For every hypergraph $H$ and $H$-coloured hypergraph $(G,c)$, we have
    \begin{align*}
        \#\cfhoms{H}{(G,c)}=\sum_{I\subseteq
        V(H)}(-1)^{|I|}\cdot\#\homs{H}{G-I},
    \end{align*}
    where $G-I$ denotes the hypergraph obtained from $G$ by deleting all
    vertices coloured with a colour in $I$.
    As a consequence, $\cfhomsprob(\scH)\fptredlin\homsprob(\scH)$ for
    every hypergraph family \(\scH\).
\end{lemma}
\begin{proof}
    Let $(H,(G,c))$ be any instance of $\cfhomsprob(\scH)$.
    For every $I\subseteq V(H)$ define the set
    \begin{align*}
        A_I\coloneqq\{h\in\homs{H}{G}\mid c(h(V(H)))\cap I=\emptyset\}.
    \end{align*}
    In words, $A_I$ is the set of homomorphisms from $H$ to $G$ that
    do not use any of the
    vertices in $G$ coloured with a colour from $I$.
    For $j\in V(H)$, we write $A_j$ instead of $A_{\{j\}}$.
    Observe that~${A_I\cap A_J=A_{I\cup J}}$ for all~${I,J\subseteq V(H)}$.
    Then, $|A_I|=\#\homs{H}{G-I}$.
    Applying the inclusion-exclusion-principle yields
    \begin{equation*}
        \#\cfhoms{H}{(G,c)}=\#\bigcap_{j\in V(H)}\overline{A_j}
        =\sum_{I\subseteq V(H)}(-1)^{|I|}\cdot|A_I|
        =\sum_{I\subseteq V(H)}(-1)^{|I|}\cdot\#\homs{H}{G-I}.\qedhere
    \end{equation*}
\end{proof}
\begin{lemma}
    \dglabel{lem:homs_to_cphoms_new}
    There is a deterministic algorithm $\mathbb{A}$ with the
    following behaviour.
    \begin{enumerate}
        \item $\mathbb{A}$ expects as input hypergraphs $H$ and $G$.
        \item $\mathbb{A}$ computes an $H$-coloured hypergraph $(F,c)$
            of size at most $\lVert H\rVert\cdot\lVert G\rVert$ such that
            $\#\homs{H}{G}=\#\cphoms{H}{(F,c)}$.
        \item The running time of $\mathbb{A}$ is bounded by
            $\Oh(2^{|H|^2}\cdot\lVert H\rVert\cdot\lVert G\rVert)$.
    \end{enumerate}
    As a consequence, $\homsprob(\scH)\fptredlin\cphomsprob(\scH)$
    for every hypergraph family $\scH$.
\end{lemma}
\begin{proof}
    Let $(H,G')$ be any instance of $\homsprob(\scH)$.
    In order to compute the tensor product in \ccFPT-time, first
    compute a hypergraph $G$
    such that $\homs{H}{G'}=\homs{H}{G}$ and $\rank(G)\leq\rank(H)$.
    Explicitly, $G$ is obtained from $G'$ by removing all edges $e\in
    E(G')$ with
    $|e|>\rank(H)$.

    To see that $\homs{H}{G'}=\homs{H}{G}$, observe that any
    homomorphism $\phi$ from $H$
    to $G'$ has to map edges of $H$ to edges of $G'$.
    Through this mapping, the size of an edge cannot increase.
    In particular, no edge of $G'$ with size strictly greater than
    $\rank(H)$ is ever hit
    by $\phi$ so that the edge may be removed without invalidating
    any element of
    $\homs{H}{G'}$.

    Let $F = G \tensor H$ be the tensor product of $G$ and $H$
    (\cref{def:hyp_tensor}).
    By \cref{lem:efficient_Tensoring}, $F$ can be computed in time
    $\Oh(2^{|H|^2}\cdot\lVert H\rVert\cdot\lVert G\rVert)$.
    Now consider the $H$-coloured graph $(F,c)$ where $c$ is the
    colouring defined by
    $c((u,v))=v$ for all $(u,v) \in V(F) = V(G) \times V(H)$.
    We claim that there is a bijection between $\homs{H}{G}$ and
    $\cphoms{H}{(F,c)}$.

    Fix $\varphi \in \homs{H}{G}$.
    Define $\psi : V(H) \to V(F)$ by setting $\psi(u) =
    (\varphi(u),u)$ for all $u \in
    V(H)$.
    As $\varphi \in \homs{H}{G}$, for every $e \in E(H)$ we have
    $\varphi(e) \in E(G)$.
    By the definition of the tensor product, $\psi(e) =
    \{(\varphi(u),u) : u \in e\}$ is
    an edge of $F$.
    Hence, $\psi \in \homs{H}{F}$.
    Moreover, we have $c(\psi(u))=u$. Hence, $\psi \in \cphoms{H}{(F,c)}$.
    Finally, distinct $\varphi,\varphi' \in \homs{H}{G}$ clearly
    yield distinct
    $\psi,\psi' \in
    \cphoms{H}{(F,c)}$.

    Now, let $\psi \in \cphoms{H}{(F,c)}$, and define $\varphi : V(H)
    \to V(G)$ by setting
    $\varphi(u) = x_u$ for every $u \in V(H)$, where $(x_u,y_u)=\psi(u)$.
    As $\psi\in\homs{H}{F}$ and by the definition of the tensor
    product, for every $e \in
    E(H)$ we have $\{x_u:u \in e\} \in E(G)$.
    Therefore, $\varphi \in \homs{H}{G}$.
    Moreover, according to the definition of $c$ every $\psi \in
    \cphoms{H}{(F,c)}$
    satisfies $\psi(u)=(x_u,u)$ for some $x_u \in V(G)$ for every
    $u \in V(H)$.
    Therefore, distinct $\psi,\psi' \in \cphoms{H}{(F,c)}$ yield distinct
    $\varphi,\varphi'
    \in \homs{H}{G}$.

    Thus, the mappings between $\cphoms{H}{(F,c)}$ and $\homs{H}{G}$
    described above
    define a bijection, as desired.
\end{proof}

We conclude with some results, needed by our proofs, on the
hardness monotonicity of
hypergraph trimming.
First, in \cref{lem:WM_fptred_new} we show that taking induced
trimmed sub-hypergraphs does not make homomorphism counting harder.
Second, in \cref{lem:aw_monotone} we show that adaptive width is
monotone with respect to trimming.

For a family of hypergraphs \(\scH\), write $\WeakMinors(\scH)$ for
the family of all
hypergraphs that are an induced trimmed sub-hypergraph of some
hypergraph in \(\scH\);
formally, $\WeakMinors(\scH)\coloneqq\{\trimsub{G}{X}\mid
    G\in\scH,\emptyset\neq
X\subseteq V(G)\}$.

\begin{lemma}
    \dglabel{lem:WM_fptred_new}
    There exists a computable function $f$ and a deterministic
    algorithm that expects as input a hypergraph $H_0$, an induced
    trimmed subhypergraph $H$ of $H_0$ and an $H$-coloured hypergraph
    $(G,c)$, and that outputs an $H_0$-coloured hypergraph $(G_0,c_0)$ such that
    \[ \#\cphoms{H}{(G,c)} = \#\cphoms{H_0}{(G_0,c_0)}. \]
    The size of $(G_0,c_0)$ and the running time of the algorithm are
    both bounded by $f(|H_0|)\cdot \Oh(\lVert(G,c)\rVert)$.

    As a consequence,
    $\homsprob(\WeakMinors(\scH))\fptredlin\homsprob(\scH)$ for every
    recursively enumerable hypergraph family $\scH$.
\end{lemma}
\begin{proof}
    First, we compute, in time only depending on $H_0$, a set
    $\emptyset\neq X\subseteq V(H_0)$ such that
    $H=\trimsub{H_0}{X}$.
    Note that the induced trimmed subgraph $\trimsub{H_0}{X}$ can be
    obtained from a sequence of induced trimmed subgraphs where in each step only one
    vertex is trimmed away.
    Write $V(H_0)\setminus X=\{v_1,\dots,v_\ell\}$ and set~$H_i\coloneqq\trimsub{H_0}{V(H_0)\setminus\{v_1,\dots,v_i\}}$.
    Then $H=H_\ell$.
    Set $(G_\ell,c_\ell)\coloneqq(G,c)$.
    For \(i = \ell - 1, \dots, 0\), we construct a corresponding
    \(H_i\)-coloured
    hypergraph \((G_i, c_i)\) based on \((G_{i+1}, c_{i+1})\) as follows.
    Set
    \[ E_{i+1}\coloneqq\{e\setminus\{ v_{i+1}\}\mid  v_{i+1}\in
    e\in E(H_i)\} \]
    and for a fresh vertex $w_{i+1} \notin V(G_{i+1})$ set
    \begin{align*}
        V(G_i) & \coloneqq V(G_{i+1})\cup\{w_{i+1}\},
        \\
        E(G_i) & \coloneqq \big\{e\in E(G_{i+1})\mid c_{i+1}(e)\in E(H_i) \big\}
        \cup \big\{ e\cup\{w_{i+1}\}\mid e\in
        E(G_{i+1}), c_{i+1}(e)\in E_{i+1} \big\},\quad \text{and}
        \\
        c_i(v) & \coloneqq
        \begin{cases}
            v_{i+1},    & v = w_{i+1}    \\
            c_{i+1}(v), & v \neq w_{i+1}
        \end{cases}.
    \end{align*}

    For the correctness, first, we inductively argue (for \(i = \ell
    - 1, \dots, 0\)) that
    each \(c_i\) is indeed a homomorphism from \(G_i\) to \(H_i\).
    To that end, we need to verify the condition \(c_i(e) \in
    E(H_i)\) for every edge
    \(e\in E(G_i)\).
    For edges \(w_{i+1} \not\in e \in E(G_{i})\), by construction, we have
    \[c_i(e) = c_{i+1}(e) \in E(H_i).\]
    For edges \(w_{i+1} \in e \in E(G_i)\), by construction, we have
    \[c_i(e) = c_i(e \setminus \left\{ w_{i+1} \right\}) \cup \{c_i(w_{i+1})\}
    \in E_{i+1} + v_{i+1} \subseteq E(H_i);\]
    where we set \(A + b \coloneqq \{ a \cup \{b\}  \mid a \in A \}\).

    Next, we inductively prove (for \(i = \ell - 1, \dots, 0\)) that
    \[\#\cphoms{H_{i}}{(G_{i},c_{i})} =
    \#\cphoms{H_{i+1}}{(G_{i+1},c_{i+1})}.\]
    To this end, fix a homomorphism
    $\phi\in\cphoms{H_{i+1}}{(G_{i+1},c_{i+1})}$ and
    consider the mapping $\psi : V(H_i) \to V(G_i)$ with
    $\psi(v_{i+1})=w_{i+1}$ and
    $\psi(v)=\phi(v)$ for $v\in V(H_{i+1})$.
    Clearly, $\psi$ is colour-prescribed.
    It remains to show that \(\psi\) is a homomorphism.

    For every edge $v_{i + 1}\notin e\in E(H_i)$, the mapping $\psi$
    inherits the
    homomorphism property from \(\phi\).
    For every edge $v_{i+1} \in e\in E(H_i)$, we have $e_{{i+1}}\coloneqq
    e\setminus\{v_{i+1}\}\in E_{i+1}$ and $\phi(e_{i+1})\in E(G_{i+1})$ because
    $\phi\in\cphoms{H_{i+1}}{(G_{i+1},c_{i+1})}$.
    As $\phi$ is colour-prescribed, we have $c_{i+1}(\phi(e_{i+1})) =
    e_{i+1} \in
    E_{i+1}$.
    Hence, $\psi(e)=\phi(e_{i+1}) \cup \{w_{i+1}\} \in E(G_i)$ by
    construction of $G_i$.
    As $w_{i+1}$ is the only vertex in $V(G_{i})$ with the colour
    $v_{i+1}$ and $c_{i+1}$
    agrees with $c_{i}$ on $V(G_{i+1})$, the above (implicit) mapping between
    $\phi\in\cphoms{H_{i+1}}{(G_{i+1},c_{i+1})}$ and
    $\psi\in\cphoms{H_i}{(G_i,c_i)}$ is
    bijective, with the inverse map being the restriction to $V(G_{i+1})$.
    This yields the claimed equation.
    Hence, we also have
    \[\#\cphoms{H}{(G,c)} = \#\cphoms{H_{0}}{(G_{0},c_{0})}.\]

    Finally, we analyse the running time of our algorithm.
    We first recall that $X$ can be computed in time only depending on $H_0$.
    Next, for each \(0 \le i <\ell\) we construct \((G_i, c_i)\)
    from \((G_{i+1},c_{i+1})\) by a single pass over \((G_{i+1},
    c_{i+1})\) incurring a running time of \(\Oh(\lVert(G_{i+1},
    c_{i+1})\rVert)\).
    As the number of edges is potentially doubled in each step, we
    incur a total running time of $2^{\Oh(\ell)}\cdot \Oh(\lVert (G,
    c)\rVert)$, which we bound by~$f(|H_0|)\cdot \Oh(\lVert(G,c)\rVert)$.
    This completes the proof.
\end{proof}

\section{Complexity classification for
    \texorpdfstring{$\subsprob$}{Sub-Hypergraph
Counting}}\label{sec:subsclassification}
\label{sec:sub}

The next three subsections prove respectively the upper
bounds and the lower bounds of \cref{thm:classification_subsprob},
which we recall for convenience, and \cref{rem:4-3-1}.

\rstmthmone*

\paragraph*{Proof of \cref{thm:classification_subsprob}: upper bounds}

The bounds follow follow from the following facts.
\begin{enumerate}
    \item $\fhtw(\scQ(\scH)) \le \frcoindno(\scH) + 1$, see \cref{2.21-1} below.
    \item $\homsprob(\scQ(\scH))$ is solvable in time $f(H) \cdot
        \size{G}^{\fhtw(\scQ(\scH))+O(1)}$, see \cref{2.21-2} below.
    \item $\subsprob(\scH) \fptredlin \homsprob(\scQ(\scH))$, see~\cref{clm:explicit_sub_dedekind}.
\end{enumerate}

\begin{lemma}
    \dglabel{2.21-1}[def:hypergraph_quotient](Bounded fractional
    co-independent edge-cover number implies bounded fractional hypertree width)
    Let $\scH$ be any hypergraph family with $\frcoindno(\scH)<\infty$.
    Then $\fhtw(\scQ(\scH))\leq\frcoindno(\scH)+1<\infty$.
\end{lemma}
\begin{proof}
    Let $H'=(V',E') \in \scQ(\scH)$, and choose $H=(V,E) \in \scH$ so
    that $H'=H/\tau$ for some
    partition $\tau=\{V_1,\ldots,V_\ell\}$ of $V$.
    \begin{claim}
        \label{2.21-3}
        We have $\frcoindno(H') \le \frcoindno(H)$.
    \end{claim}
    \begin{claimproof}
        Let $S$ be a co-independent set of $H$ such that $\frcoindnoS{H}(S)
        =\frcoindno(H)$, and let $\xi \colon E \to \Rp$ be an optimal fractional
        edge-cover of $S$.
        Set $S' \coloneqq S/\tau$; by \cref{def:hypergraph_quotient},
        \[S' \coloneqq S/\tau \coloneqq \{V_i \in \tau : V_i \cap S \ne
        \emptyset\}.\]
        We claim that $S'$ is co-independent in $H'$.
        To this end, choose any two distinct vertices $V_i, V_j \in V'
        \setminus S'$.
        By definition of $S'$, the fact that $V_i \in V' \setminus S'$
        implies $V_i \cap
        S= \emptyset$ and thus $V_i \subseteq V \setminus S$.
        Similarly, we obtain $V_j \subseteq V \setminus S$.
        We claim that $V_i,V_j$ are not adjacent in $H'$.
        For an indirect proof, suppose that there is an $e' \in E'$ with
        $V_i,V_j \in e'$.
        This implies the existence of an $e \in E$ such that $e \cap V_i
        \ne \emptyset$ and $e \cap V_j \ne \emptyset$.
        As $V_i,V_j \subseteq V \setminus S$, however, this means that $V
        \setminus S$ is
        not independent in $H$, a contradiction.
        We conclude that $V_i,V_j$ are not adjacent in $H'$.
        Therefore, $V'\setminus S'$  is independent in $H'$.
        Thus, $S'$ is co-independent in $H'$.

        Next, we define a function $\xi' : E' \to \Rp$ via
        \begin{align*}
            \xi'(e') = \sum_{\substack{e \in E \\ e/\tau = e'}} \xi(e),
            \qquad \text{for
            all }e' \in E'.
        \end{align*}
        We claim that $\xi'$ is a fractional edge-cover of $S'$ of weight
        at most $\frcoindno(H)$.
        To this end observe that, by construction, we have $\sum_{e' \in
        E'} \xi'(e') \le
        \sum_{e \in E} \xi(e)=\frcoindno(H)$.
        To show that $\xi'$ is a fractional edge-cover of $S'$ in $H'$,
        first note that for every $U \in V'$, we have
        \begin{align*}
            \xi'(U) = \sum_{e' \in E' : U \in e'} \xi'(e') = \sum_{e \in E : U
            \cap e \ne \emptyset} \xi(e).
        \end{align*}
        Now, if $U \in S'$, then by construction we have $U \cap S \ne \emptyset$.
        Thus, there exists a $v \in S \cap U$.
        Since every $e \in E$ with $v \in e$ satisfies in particular $U \cap e
        \ne\emptyset$, we obtain
        \begin{align*}
            \sum_{e \in E : U \cap e \ne \emptyset} \xi(e) \ge \sum_{e \in E : v \in
            e} \xi(e) \ge 1,
        \end{align*}
        where the second inequality follows from $v \in S$ and $\xi$
        being a fractional edge-cover of $S$ in $H$.
        This completes the proof of the claim.
    \end{claimproof}
    \begin{claim}
        \label{2.21-4}
        For any hypergraph \(H\) we have $\fhtw(H) \le \frcoindno(H)+1$.
    \end{claim}
    \begin{claimproof}
        As before, let $S$ be co-independent in $H$ such that
        $\frcoindnoS{H}(S)=\frcoindno(H)$, and consider the following
        tree decomposition
        $(T,\scB)$.
        The tree $T$ is formed by joining the vertices $v \in V \setminus S$ in a path in
        any order and finally joining a new vertex \(w\notin V\) to the path.
        The family $\scB$ contains the set $B_v = S \cup \{v\}$ for each $v \in V\setminus S$ and the set \(B_w=S\).

        One may verify that  $(T,\scB)$ constitutes a tree decomposition of $H$.
        Moreover, each $B_v$ admits a fractional edge-cover of weight at most
        $\frcoindno(H)+1$, obtained by extending the optimal fractional
        edge-cover $\xi$
        of $S$ by assigning weight $1$ to any $e \in E$ that contains $v$.
    \end{claimproof}
    Combining \cref{2.21-3,2.21-4} yields $\fhtw(\scH') \le
    \frcoindno(\scH)+1 < \infty$, which implies the claim.
\end{proof}

Next, we use the following folklore corollary of a result of Grohe
and Marx (\cite[Theorem 3.5]{GroheM14}); we give a formal proof for
completeness.
\begin{theorem}\dglabel{2.21-2}($\homsprob(\scH)$ is $\ccFPT$
    whenever $\fhtw(\scH)<\infty$)
    For every hypergraph family $\scH$ with $\fhtw(\scH)<\infty$ the
    problem $\homsprob(\scH)$ is solvable in time~$f(\size{H}) \cdot
    \size{G}^{\fhtw(H)+O(1)}$ for some computable function $f$.
\end{theorem}
\begin{proof}
    Following the notation of~\cite{GroheM14}, let $I=(V,D,C)$ be the
    CSP (constraint satisfaction problem) instance defined as follows.
    \begin{itemize}
        \item $V = V(H)$.
        \item $D=V(G)$.
        \item $C$ contains for every $e \in E(H)$ the constraint $\left(\langle
            u_1,\ldots,u_r \rangle, R_e\right)$, where $u_1,\ldots,u_r$ are the
            vertices of $e$ in
            arbitrary order, and $R_e$ contains, for every $e_G \in E(G)$ with
            $|e_G| \le r$,
            every $r$-tuple from $(e_G)^r$ where each element of $e_G$ appears
            at least once.
    \end{itemize}
    The set of solutions of $I$ is precisely $\homs{H}{G}$.
    Indeed, a solution of $I$ is by definition a mapping $\phi : V \to
    D$ that for every
    $e \in E(H)$ satisfies $\langle\phi(u_1),\ldots,\phi(u_r)\rangle
    \in R_e$, which means
    precisely that the image of $e$ under $\phi$ is some $e_G \in E(G)$.
    Moreover one may compute $I$ in time $f(\size{H}) \cdot
    \size{G}^{O(1)}$, and that
    $\size{I} \le f(\size{H}) \cdot \size{G}$ for some computable $f$.
    Finally, by~\cite[Theorem 3.5]{GroheM14}, the solutions of $I$ can
    be enumerated in
    time $\size{I}^{\fhtw(H)+O(1)}$.
    This implies the claim, concluding the proof.
\end{proof}

\paragraph*{Proof of \cref{thm:classification_subsprob}: lower bounds}
Before delving into the proof, we need a technical lemma that
``extracts'' certain trimmed
quotients from hypergraphs with a large gap between adaptive width
and fractional
co-independent edge-cover number.

\begin{lemma}\label{lem:hat_H_subs}
    Let $H$ be a hypergraph, and suppose $t = \left\lceil {1}/{2} +
    {\frcoindno(H)}/{(4\cdot \aw(H))} \right\rceil$ is larger than
    some universal
    constant $t_0$.
    Then there exists $\hat H \in \WeakMinors(\scQ(H))$ with $|\hat H|
    \ge 2/3 \cdot t$ and
    $\rank(\hat H) \le 2$ such that $\tw(\hat H) \ge |\hat H| / 24 - 1$ and
    $\frcoindno(\hat H) \ge |\hat H| / 4$.
\end{lemma}
\begin{proof}
    Let $X \subseteq V$ be a minimum-size co-independent set of $H$,
    and let $I$ be a
    maximum independent set of $\trim{H}{X}$.
    Then \cref{lem:int_gap_aw} applied to $\trim{H}{X}$ yields
    \begin{align}\label{eq:sqrt-bound}
        |I| = \alpha(\trim{H}{X}) \ge \frac{1}{2} +
        \frac{\frindno(\trim{H}{X})}{4\cdot\aw(\trim{H}{X})}
         & \ge \frac{1}{2} + \frac{\frcoindno(H)}{4 \cdot \aw(H)}.
    \end{align}
    The second inequality holds as
    $\frindno(\trim{H}{X})=\fredgeco(\trim{H}{X})\ge
    \frcoindno(H)$ by \cref{lem:LP-duality} and the definition of
    $\frcoindno(H)$, and
    because $\aw(\trim{H}{X}) \le \aw(H)$ by \cref{lem:aw_monotone}.
    As $|I|$ is integral, it follows that $|I| \ge t$.

    Next, we use~$I$ to exhibit an almost bipartite graph~$B$, from
    which we build the trimmed quotient~$\hat H$.
    Let $R\coloneqq V\setminus X$; so $R$ is disjoint from $I$, and is
    independent in $H$
    since $X$ is co-independent. Thus, the induced trimmed
    sub-hypergraph~$B\coloneqq\trim{H}{I \cup R}$ has rank at most two,
    and (save for the singleton edges) is bipartite.

    \begin{claim}
        \label{claim:F-contains-large-matching}
        The hypergraph~$B$ contains a matching~$M$ of size at least $|I|$.
    \end{claim}
    \begin{claimproof}
        Without loss of generality we may assume $B$ has no singleton edges and is
        therefore a graph (the singletons are irrelevant to our argument).
        For an indirect proof, assume that the matching number of~$B$ is
        at most $|I|-1$.
        By König's theorem,~$B$ has a vertex cover $C \subseteq V(B)$ of
        size at most~$|I|-1$.
        Let $X' \coloneqq (X \setminus I) \cup C$.
        Since $I \subseteq X$ and $|C|<|I|$, we have $|X'| < |X|$.
        One can readily verify that $R' \coloneqq V \setminus X' = (I
        \cup R) \setminus
        C$.
        However, this implies that $R'$ is independent in $H$, as both $I$ and $R$ are
        independent in $H$ by construction and if $u \in I$ and $v\in R$
        are adjacent in
        $H$ then at least one of them is in $C$ as it is a vertex cover of~$B$.
        We conclude that $X'$ is co-independent in $H$ and $|X'|<|X|$,
        contradicting the
        minimality of $X$.
        Thus,~$B$ contains a matching of size at least $|I|$, as claimed.
    \end{claimproof}

    Next, we consider the trimmed sub-hypergraph of~$H$ induced by the
    vertices~$V(M)$ of
    the matching, and we use quotients to construct a hypergraph $\hat
    H$ with large
    $\tw(\hat H)$ and $\frcoindno(\hat H)$.
    \begin{claim}\label{clm:build-large-tw-graph}
        If $|I|$ is at least a large enough, global constant~$s$, then
        the hypergraph
        $\trim{H}{V(M)}$ has a quotient~$\hat H$ such that $|\hat H| =
        2/3 \cdot |I|$ and
        $\rank(\hat H) \le 2$ and $\tw(\hat H) \ge |\hat H| / 24-1$ and
        $\frcoindno(\hat H) \ge |\hat H|/4$.
    \end{claim}
    \begin{claimproof}
        For every $k \ge s$ where $s$ is a large enough constant, Dvořák
        and Norin show
        the existence of a $3$-regular graph $G_k$ with~$k$ edges,
        $2/3 \cdot k$ vertices,
        and treewidth at least $k/36 -1$; see \cite[Lemma~5 and
        Corollary~7]{DBLP:journals/siamdm/DvorakN16}.
        We now identify endpoints of~$M$ to construct~$G_{|I|}$. Formally, we apply
        \cref{def:hypergraph_quotient} as follows.
        Let $\tau$ be a partition of the vertices of~$M$ so that the
        quotient $M/\tau$ is isomorphic to~$G_{|I|}$.
        Let then $\hat H \coloneqq \trim{H}{V(M)}/\tau$ be the
        corresponding quotient of
        $\trim{H}{V(M)}$.
        By construction, $|\hat H| = 2/3 \cdot |I|$ and~$\rank(\hat H) \le 2$.
        Since $\hat H$ contains all edges of $M/\tau$, its treewidth is at least the
        treewidth of $G_{|I|}$, which in turn is at least $|I| / 36-1 =
        |\hat H| / 24-1$.
        Finally, any independent set of	$\hat H$ has size at most $|\hat H| / 2$, because,
        ignoring the singleton hyperedges, $\hat H$ is a regular graph.
        Thus, any co-independent set $X$ of $\hat H$ satisfies~$|X| \ge |\hat H| /
        2$.
        As $\rank(\hat H) \le 2$, this implies $\fredgecoS{\hat H}(X) \ge |X| / 2 \ge
        |\hat H| / 4$.
        The claim follows.
    \end{claimproof}

    The hypergraph $\hat H$ therefore satisfies the invariant
    inequalities of the statement.
    To see that~${\hat H \in \WeakMinors(\scQ(\scH))}$, note that we
    in fact obtain $\hat
    H$ by first taking a single quotient $\tau$ over $H$ (the one that
    identifies the edges of
    $M$ as described above) and then trimming the resulting hypergraph to
    $V(M / \tau)$.
\end{proof}

We now prove the lower bounds of \cref{thm:classification_subsprob}; which then in total
completes the proof of \cref{thm:classification_subsprob}.
\rstmthmone
\begin{proof}
    Let $\scH$ be a recursively enumerable hypergraph family with
    $\frcoindno(\scH)=\infty$.

    Suppose for some $g(n) = o(\sqrt[4]{n})$ there is an algorithm $\mathrm{A}$ that
    solves $\subsprob(\scH)$ in time $f(H) \cdot
    \size{G}^{g(\frcoindno(H))}$; we show ETH fails.
    First, the following chain of reductions holds by our results in \cref{sub:homsprob}.
    \begin{align*}
        \homsprob(\WeakMinors(\scQ(\scH))) & \fptredlin
        \cphomsprob(\WeakMinors(\scQ(\scH)))
        \tag{\text{\Cref{lem:homs_to_cphoms_new}}}
        \\
        ~                                  & \fptredlin
        \cphomsprob(\scQ(\scH)) \tag{\text{\Cref{lem:WM_fptred_new}}}
        \\
        ~                                  & \fptredlin
        \cfhomsprob(\scQ(\scH)) \tag{\text{\Cref{lem:cp_to_cf_new}}}
        \\
        ~                                  & \fptredlin \homsprob(\scQ(\scH))
        \tag{\text{\Cref{lem:cf_to_nocol_new}}}
        \\
        ~                                  & \fptredlin \subsprob(\scH)
        \tag{\text{\Cref{clm:explicit_sub_dedekind}}}
    \end{align*}
    Since $\fptredlin$ is transitive (see \cref{sec:prelim}), we conclude
    that $\homsprob(\WeakMinors(\scQ(\scH))) \fptredlin \subsprob(\scH)$.
    Thus, we solve $\homsprob(\WeakMinors(\scQ(\scH)))$ in
    time $f(H) \cdot \size{G}^{g(\frcoindno(H))}$, too.
    That is, there is an algorithm that, given an instance $(F,G)$ of
    $\homsprob(\WeakMinors(\scQ(\scH)))$, computes $\#\homs{F}{G}$ in
    time $f(H) \cdot \size{G}^{g(\frcoindno(H))}$ where $H$ is a
    hypergraph in $\scH$ such that $F$ is a trimmed induced subhypergraph
    of a quotient of $H$.
    As \(H\) is found by enumerating \(\scH\), its size depends only on \(F\).

    Now, fix any unbounded function $c$ such that $c(n) = o(n)$ and $g(n) = o\big(\sqrt[4]{c(n)}\big)$.
    We define the following two hypergraph families as
    \begin{align*}
        \scH_{\aw} & \coloneqq \left\{ H \in \scH : \aw(H) \ge
        c(\frcoindno(H)) \right\} \quad\text{and}
        \\
        \scH_{\tw} & \coloneqq \left\{  H \in \scH : \aw(H) <
        c(\frcoindno(H)) \right\}.
    \end{align*}
    As $\scH = \scH_{\aw} \, \cup \, \scH_{\tw}$, we have $\frcoindno(\scH_{\aw}) = \infty$ or
    $\frcoindno(\scH_{\tw})=\infty$.
    We consider the two cases separately, and we show that in any case ETH fails.

    \textbf{Case 1:} $\frcoindno(\scH_{\aw}) = \infty$.
    As $\scH_{\aw} \subseteq \scH \subseteq \WeakMinors(\scQ(\scH))$, we also have
    $\homsprob(\scH_{\aw}) \fptredlin \homsprob(\WeakMinors(\scQ(\scH)))$.
    Our assumptions above allow us to solve $\homsprob(\scH_{\aw})$ in time
    \[
        f(H) \cdot \size{G}^{g(\frcoindno(H))} \le f(H) \cdot
        \size{G}^{o\big(\sqrt[4]{c(\frcoindno(H))}\big)} \le f(H) \cdot
        \size{G}^{o\big(\sqrt[4]{\aw(H)}\big)},
    \]
    where the first inequality holds by the choice of $c$, and the second inequality holds
    by the definition of $\scH_{\aw}$.
    However, since $\frcoindno(\scH_{\aw})=\infty$, the fact that $c$ is
    unbounded and the
    definition of $\scH_{\aw}$ imply $\aw(\scH_{\aw})=\infty$.
    By the conditional lower bound of \cref{lem:unbounded_aw_LB}, the running time
    above implies that ETH fails.

    \textbf{Case 2:} $\frcoindno(\scH_{\tw}) = \infty$.
    Since $\frcoindno(\scH_{\tw}) = \infty$ and $c(n) = o(n)$, the ratio
    ${\frcoindno(H)}/{\aw(H)}$ diverges over~${H \in \scH_{\tw}}$.
    Define then the family $\hat \scH_{\tw} = \{\hat H: H \in
    \scH_{\tw}\}$, where for
    every $H \in \scH_{\tw}$ we let $\hat H$ be the hypergraph of
    \cref{lem:hat_H_subs}---note that $\hat H \in \WeakMinors(\scQ(\scH))$.
    By \cref{lem:hat_H_subs} and said divergence, the family $\hat
    \scH_{\tw}$ is infinite, satisfies~${\frcoindno(\hat \scH_{\tw})=\infty}$, has
    $\rank(\hat \scH_{\tw}) \le 2$, and moreover every $\hat H \in \hat\scH_{\tw}$ large
    enough satisfies~${\frcoindno(\hat H) \le 24(\tw(\hat H) +1)}$ since trivially
    $\frcoindno(\hat H) \le |\hat H|$ for every hypergraph $\hat H$.
    By the choice of $g$, this implies we solve
    $\homsprob(\hat{\scH}_{\tw})$ in time
    \[
        f(\hat H) \cdot \size{G}^{g(\frcoindno(\hat H))} \le f(\hat H) \cdot
        \size{G}^{g(24(\tw(\hat H)+1))} \le f(\hat H) \cdot
        \size{G}^{o(\tw(\hat H) / \ln
        \tw(\hat H))}.
    \]
    Note, however, that $\rank(\hat{\scH}_{\tw})\le 2$; by
    \cref{lem:unbounded_tw_LB},
    then, ETH fails.
    This concludes the proof.
\end{proof}

Finally, we prove \cref{rem:4-3-1}.

\rstmtonelemone
\begin{proof}
    For every pair of positive integers $n,k$ with $n \geq k$, let $H_{n,k}$ be the
    hypergraph with $V(H_{n,k})=[n]$ and $E(H_{n,k}) = {\binom{[k]}{2}} \cup \{[n]\}$.
    Further, set $\scH \coloneqq \{H_{n,k} : n \ge k \ge 1\}$.
    For all $n\geq k >0$, all vertices of $H_{n,k}$ are covered by the
    edge $[n]$; thus
    $\frcoindno(\scH) \le \fredgeco(\scH) = 1$.
    We give a polynomial-time reduction from~$\#\clique$ to $\subsprob(\scH)$.
    Let $(G,k)$ be the input to $\#\clique$.
    We construct a hypergraph $\hat G$ by adding the hyperedge~$V(G)$ to $G$.

    Observe that there is a bijection between the cliques in $G$ and the
    sub-hypergraphs
    of $\hat G$ isomorphic to $H_{n,k}$.
    Indeed, if $K=G[X]$ is a $k$-clique of $G$ for some $X \subseteq V$ then the
    edge-sub-hypergraph $\hat H$ of $\hat G$ defined by $V(\hat H)=V(\hat
    G)$ and $E(\hat
    H) = \{V(\hat G)\} \cup E(K)$ is isomorphic to $H_{n,k}$.
    Vice versa, if $\hat H$ is a sub-hypergraph of $\hat G$ isomorphic to $H_{n,k}$ then
    necessarily $V(\hat H)=V(\hat G)$ and moreover $E(\hat H) = \{V(\hat
    G)\}\cup E(K)$
    where $K$ is some $k$-clique of $G$.
    Thus, $\subs{K_k}{G} = \subs{H_{n,k}}{\hat G}$, concluding the proof.
\end{proof}

\section{Complexity classification for
  \texorpdfstring{$\indsubsprob$}{Induced Sub-Hypergraph Counting}}
\label{sec:indsub}
Next, we prove
\cref{thm:classification_indsubsprob,lem:quasiP_indsubs,lem:GI_hard_indsubs}.

\rstmthmtwo
\begin{proof}
    \textbf{Upper bounds}

    Assuming $\fredgeco(\scH)< \infty$, we prove
    $\homsprob(\scQ(\scS(\scH)))$ is solvable
    in the prescribed time.
    The claim follows as~$\indsubsprob(\scH)\fptredlin\homsprob(\scQ(\scS(\scH)))$ by \cref{clm:explicit_sub_dedekind}.

    Fix any $Q \in \scQ(\scS(\scH))$.
    By definition, there exist hypergraphs $H \in \scH$ and $H' \in
    \scS(H)$ such that $Q
    \in \scQ(H')$.
    Note that $\fredgeco(Q) \le \fredgeco(H') \le \fredgeco(H)$.
    Indeed, every fractional edge-cover of $H$ exists in $H'$, too; and
    every fractional
    edge-cover in $H'$ yields a fractional edge-cover of the same weight
    in $Q$, by simply
    replacing each edge with its quotient.
    Therefore, we have$ \fredgeco(\scQ(\scS(\scH))) \le \fredgeco(\scH)$.
    Moreover, we obtain $\fhtw(\scQ(\scS(\scH))) \le \fredgeco(\scQ(\scS(\scH)))$
    by taking the
    trivial tree decompositions of the hypergraphs.
    We conclude that $\fhtw(\scQ(\scS(\scH))) \le \fredgeco(\scH) < \infty$.
    By \cref{2.21-2}, $\homsprob(\scQ(\scS(\scH)))$ is solvable in
    time $f(H) \cdot \size{G}^{\fredgeco(H)+O(1)}$, as claimed.

    \paragraph*{Lower bounds}
    By chaining \cref{lem:WM_fptred_new,clm:explicit_sub_dedekind}, and using transitivity
    of $\fptredlin$ as done for $\subsprob(\scH)$, we obtain
    \begin{equation*}
        \homsprob(\WeakMinors(\scS(\scH))) \fptredlin \indsubsprob(\scH).
    \end{equation*}
    Suppose then for some $g(n) = o(\sqrt[4]{n})$ there exists an
    algorithm $\mathrm{A}$
    that solves $\indsubsprob(\scH)$ in time $f(H) \cdot \size{G}^{g(\fredgeco(H))}$.
    Thus, we can solve $\homsprob(\WeakMinors(\scS(\scH)))$ in time $f(H) \cdot
    \size{G}^{g(\fredgeco(H))}$, too.
    Fix any unbounded function $c$ such that $c(n) = o(n)$ and $g(n) = o\big(\sqrt[4]{c(n)}\big)$.
    We define the following two hypergraph families as
    \begin{align*}
        \scH_{\aw} & \coloneqq \left\{ H \in \scH : \aw(H) \ge c(\fredgeco(H))
        \right\}
            \quad\text{and}
        \\
            \scH_{\tw} & \coloneqq \left\{  H \in \scH : \aw(H) < c(\fredgeco(H))
            \right\}.
    \end{align*}
    As $\scH = \scH_{\aw} \, \cup \, \scH_{\tw}$, we have
    $\fredgeco(\scH_{\aw}) = \infty$ or $\fredgeco(\scH_{\tw})=\infty$.
    We consider these two cases separately, and we show that in any case ETH fails.

    \textbf{Case 1:} $\fredgeco(\scH_{\aw}) = \infty$.
    As $\scH_{\aw} \subseteq \scH \subseteq \WeakMinors(\scS(\scH))$, we have
    $\homsprob(\scH_{\aw}) \fptredlin \homsprob(\WeakMinors(\scS(\scH)))$.
    By our assumptions above, we can solve $\homsprob(\scH_{\aw})$ in time
    \[
        f(H) \cdot \size{G}^{g(\fredgeco(H))} \le f(H) \cdot
        \size{G}^{o\big(\sqrt[4]{c(\fredgeco(H))}\big)} \le f(H) \cdot
        \size{G}^{o\big(\sqrt[4]{\aw(H)}\big)},
    \]
    where the first inequality holds by the choice of $c$, and the second one by the
    definition of $\scH_{\aw}$.
    However, since $\fredgeco(\scH_{\aw})=\infty$, the fact that $c$ is
    unbounded and the definition of $\scH_{\aw}$ imply $\aw(\scH_{\aw})=\infty$.
    By the conditional lower bound of \cref{lem:unbounded_aw_LB}, then,
    the running time
    above implies that ETH fails.

    \textbf{Case 2:} $\fredgeco(\scH_{\tw}) = \infty$.
    In this case we construct a family $\hat{\scH}_{\tw}$ as follows.
    For every $H \in \scH_{\tw}$ fix a maximum independent set $I
    \subseteq V(H)$, and let
    $\hat H$ be the hypergraph obtained from $H$ by (i) adding to $E(H)$ every edge
    $\{u,v\} \in {\binom{I}{2}}$, and (ii) trimming the resulting hypergraph to $I$.
    Let then $\hat{\scH}_{\tw} = \{\hat H : H \in \scH_{\tw}\}$.
    Note that, by construction, $\hat{\scH}_{\tw} \subseteq
    \WeakMinors(\scS(\scH))$.
    Therefore, by our assumptions, we can solve
    $\homsprob(\hat{\scH}_{\tw})$ in time
    $f(\hat H) \cdot \size{G}^{g(\fredgeco(\hat H))}$ for all $\hat H \in
    \hat{\scH}_{\tw}$.

    Now we show how ETH fails.
    For every $H \in \scH_{\tw}$,
    \cref{lem:int_gap_aw,lem:LP-duality} yield
    \begin{align}
        |\hat H| = \alpha(H) \ge \frac{1}{2} + \frac{\alpha^*(H)}{4\, \aw(H)} =
        \frac{1}{2} + \frac{\fredgeco(H)}{4\, \aw(H)} \ge \frac{1}{2} +
        \frac{\fredgeco(H)}{4\, c(\fredgeco(H))}.
        \label{eq:I_large}
    \end{align}
    As $c(\fredgeco(H)) = o(\fredgeco(H))$, this implies that
    $\hat{\scH}_{\tw}$ contains
    arbitrarily large hypergraphs.
    Moreover, since every $\hat H \in \hat{\scH}_{\tw}$ is a clique graph
    with (possibly)
    additional singleton edges, then $\tw(\hat H) = \tw(K_{|\hat H|}) =
    |\hat H|-1$ and
    $\fredgeco(\hat H) = {|\hat H|}/{2}$, and therefore $\fredgeco(\hat
    H)={(\tw(\hat H)+1)}/{2}$.
    Clearly, this also implies $\fredgeco(\hat{\scH}_{\tw})=\infty$.
    By the choice of $g$, our running time over $\hat{\scH}_{\tw}$ then satisfies
    \[
        f(\hat H) \cdot \size{G}^{g(\fredgeco(\hat H))} \le f(\hat H) \cdot
        \size{G}^{g((\tw(\hat H)+1)/2)} \le f(\hat H) \cdot
        \size{G}^{o(\tw(\hat H) / \ln
        \tw(\hat H))}.
    \]
    Note, however, that $\rank(\hat{\scH}_{\tw})\le 2$; by
    \cref{lem:unbounded_tw_LB},
    then, ETH fails.
\end{proof}

\rstmtlemone
\begin{proof}
    Let $H\in \scH$ and $G$ be the input to $\indsubsprob(\scH)$.
    Since $\fredgeco(\scH)<\infty$, we have that there is a subset
    $A\subseteq E(H)$ of
    size $\Oh(\ln \size{H})$ that covers all vertices of $H$---this
    follows by the $(\ln
    n)$-integrality gap for fractional edge-covers (cf.~\cite[Chapter
    13.1]{Vazirani01}).

    Hence, we can, in time $\size{G}^{\Oh(\ln \size{H})}$, enumerate all
    assignments from
    $A$ to $E(G)$.
    Let $X_1,\dots,X_\ell$ denote the list of distinct images of those
    assignments (note the list can easily be computed in time
    $\size{G}^{\Oh(\ln\size{H})}$ from the list of
    all assignments from $A$ to $E(G)$).

    Finally, we count for how many of the $X_i$ the induced
    sub-hypergraph $G[X_i]$ is
    isomorphic to $H$, and each (hypergraph) isomorphism check can be done in time
    $(\size{H}+\size{G})^{(\log \size{H})^{\Oh(1)}}$~\cite[Corollary
    1.2]{HyperNeuen22}.
    The total running time is thus bounded by
    \begin{equation*}
        \size{G}^{\Oh(\ln \size{H})} + \size{G}^{\Oh(\ln \size{H})} \cdot
        (\size{H}+\size{G})^{(\log \size{H})^{\Oh(1)}} \leq
        (\size{H}+\size{G})^{(\ln
        \size{H})^{\Oh(1)}}.
    \end{equation*}
    This concludes the proof.
\end{proof}

The quasi-polynomial running time in \cref{lem:quasiP_indsubs} has two causes.
We solve a quasi-polynomial number of hypergraph isomorphism
subproblems, each taking a quasi-polynomial amount of time.
The proof of \cref{lem:quasiP_indsubs} can be seen as a
quasi-polynomial time Turing reduction of $\indsubsprob$ to
$\textupsc{Graph Isomorphism}$.

\rstmtlemtwo
\begin{proof}
    For every graph $G=(V,E)$ set $\hat G \coloneqq (V, E \cup \{V\})$
    and $\scH\coloneqq
    \{\hat G \mid G \text{ is a graph}\}$.
    From the definition of a fractional edge-cover, we obtain $\fredgeco(\scH)=1$.
    Moreover, for every pair of graphs $F,G$ we have $\hat F \cong \hat
    G$ if and only if
    $F \cong G$.
\end{proof}

\section{Trimmed homomorphisms and sub-hypergraphs}\label{sec:trimmed_sec}
In this section, we prove results on the complexity of counting
trimmed homomorphisms,
trimmed sub-hypergraphs, and induced trimmed sub-hypergraphs (see
\cref{sec:prelim}).
Under standard computational hardness assumptions, we show that the
complexity is not
monotone in the trimmed homomorphism basis and that there exists no
efficiently computable
associative operation on hypergraphs that is multiplicative with
respect to trimmed
homomorphism counts.

We start by formally defining the problems that we consider.%
\footnote{For our results, we do not need to define
  $\#\textupsc{TrimSub}$, the subgraph
version of the problem.}

\noindent
\begin{minipage}{1\linewidth}%
    \begin{problem}-[$\#\textupsc{TrimHom}$]{$\#\textupsc{TrimHom}(\scH)$}
        \label{prob:trimhoms}
        \PInput{A pair of hypergraphs $H, G$ with $H \in \scH$}
        \POutput{$\#\trimhoms{H}{G}$}
        \PParameter{$|H|$}
    \end{problem}

    \begin{problem}+=[$\#\textupsc{IndTrimSub}$]{$\#\textupsc{IndTrimSub}(\scH)$}
        \label{prob:trimindsub}
        \PInput{A pair of hypergraphs $H, G$ with $H \in \scH$}
        \POutput{$\#\indtrimsubs{H}{G}$}
        \PParameter{$|H|$}
    \end{problem}
\end{minipage}

As in the case of normal sub-hypergraphs and homomorphisms,
$\#\indtrimsubs{H}{\star}$ can
be represented as a linear combination of  $\#\trimhoms{H'}{\star}$.
However, we show that the complexity of this representation has undesirable
properties.
\begin{restatable*}[Complexity monotonicity fails for induced trimmed
    sub-hypergraphs]{theoremq}{rstri}
    \dglabel{thm:main_trimmed}
    There exists a hypergraph family $\scH$ with the following properties.
    \begin{enumerate}
        \item $\trimhomsprob(\scH)$ is $\#\W[2]$-hard and, assuming the
            Strong Exponential
            Time Hypothesis (SETH), cannot be computed
            in time $f(k)\cdot \Oh(|V(G)|^{k-\varepsilon})$ for any
            computable function
            $f$
            and $\varepsilon > 0$.
        \item $\indtrimsubsprob(\scH)$ can be computed in polynomial time.
        \item For every $H \in \scH$ the trimmed-homomorphism expansion of the
            function $\#\indtrimsubs{H}{\star}$ contains
            $\#\trimhoms{H}{\star}$ with a
            non-zero coefficient.
            \qedhere
    \end{enumerate}
\end{restatable*}

We may think of \cref{thm:main_trimmed} as a counterexample to the
complexity monotonicity
principle.
If we write $\#\indtrimsubs{H}{\star}$ as a linear
combination of trimmed
homomorphisms,
\begin{align}
    \#\indtrimsubs{H}{\star} = \sum_{F} \gamma(F) \cdot \#\trimhoms{F}{\star},
\end{align}
then the coefficient of $H$ itself does not vanish, that is, we have
$\gamma(H) \ne 0$.
Now, if the complexity monotonicity principle held, this would imply that
computing $\#\indtrimsubs{H}{\star}$ is at least as hard as computing
$\#\trimhoms{H}{\star}$---formally, $\trimhomsprob(\scH) \fptred
\indtrimsubsprob(\scH)$.%
\footnote{Recall that this is indeed what happens in
    \cref{sub:homsprob,sec:sub} for
establishing the hardness of $\subsprob$ and $\indsubsprob$.}

However, such a reduction is ruled out by \cref{thm:main_trimmed}:
$\trimhomsprob(\scH)$
is hard, but $\indtrimsubsprob(\scH)$ is easy.
Thus, \emph{somewhere}, the approach sketched above has to fail.
Our second contribution is to pinpoint what exactly breaks.

\begin{restatable*}[Trimmed hypergraph homomorphisms have no useful
    ``tensor product'']{theorem}{rstto}
    \dglabel{cor:trimmed_no_FPT_product}%
    [cor:trimmed_no_FPT_product,lem:trimmed_homs_hard,%
    lem:trimmed_indsubs_easy,lem:aut-omatic_lemma,%
    lem:trimmed-strembs_to_homs,lem:trimming-lovasz,%
    thm:dedekind_new]
    Assuming $\ccFPT\neq \#\W[2]$, there is no associative binary
    operator (``product'') $\boxplus$ over hypergraphs that is
    fixed-parameter tractable and for all hypergraphs $F,G,H$ satisfies
    \[\#\trimhoms{F}{G \boxplus H} = \#\trimhoms{F}{G} \cdot
    \#\trimhoms{F}{H}.\qedhere\]
\end{restatable*}
In light of \cref{thm:main_trimmed,cor:trimmed_no_FPT_product}, one
might wonder if the issue is with trimmed homomorphisms; perhaps, if
we use instead \emph{standard} hypergraph homomorphisms, we would
bypass the obstacles of
\cref{thm:main_trimmed,cor:trimmed_no_FPT_product} and obtain
complexity monotonicity.
As our final contribution we show that, using standard homomorphisms,
complexity monotonicity fails at an even
more fundamental level: $\#\indtrimsubs{H}{\star}$ cannot even be
\emph{expressed} as a linear combination of functions $\#\homs{F}{\star}$.

\begin{restatable*}[Trimmed hypergraphs are incompatible with
    standard homomorphisms]{theorem}{rstna}
    \dglabel{thm:trim_basis_indep}[eq:impossible-trim-homomorphism-count-equality]
    Let $H$ be a hypergraph that contains at least one edge and let
    $f\colon\G\to\N$ be a function from the
    set~$\{\#\trimhoms{H}{\star},\#\trimsubs{H}{\star},\#\indtrimsubs{H}{\star}\}$.
    Then, there is no finitely supported function $\gamma$ such that
    \begin{equation}
        \label{eq:impossible-trim-homomorphism-count-equality}
        f(G) = \sum_{F\in\G}\gamma(F) \cdot \#\homs{F}{G}
    \end{equation}
    holds for all $G\in\G$, where $\G$ is the class of all hypergraphs.
\end{restatable*}

Based on these observations, we see two avenues to analyse the complexity of
$\indtrimsubsprob(\scH)$.
Either we need to adopt a basis other than the trimmed homomorphism
basis; that is, we
need to express~$\#\indtrimsubs{H}{\star}$ as a linear combination of
functions other than~$\#\trimhoms{H'}{\star}$.
Or, we need a complexity analysis that is \emph{not} based on
complexity monotonicity and
Dedekind interpolation.
Either approach seems challenging and poised to yield interesting future work.
In the rest of this section, we
prove~\cref{thm:main_trimmed,cor:trimmed_no_FPT_product,thm:trim_basis_indep}.

\subsection{Basic results for trimmed sub-hypergraphs}
We now adapt \cref{eq:embs=auts/subs,eq:strembs=auts/indsubs} to
their trimmed counterparts.
\begin{lemma}[Rewriting (induced) subgraph counts as (strong)
    embedding counts]
    \dglabel{lem:aut-omatic_lemma}
    For any hypergraphs \(H\) and \(G\), we have
    \begin{align*}
        \#\trimembs{H}{G}    & = \#\mathsf{Aut}(H) \cdot \#\trimsubs{H}{G}, \quad
        \text{and} \\
        \#\trimstrembs{H}{G} & = \#\mathsf{Aut}(H) \cdot \#\indtrimsubs{H}{G}.
    \end{align*}
\end{lemma}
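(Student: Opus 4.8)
The plan is to mirror the classical derivations of \eqref{eq:embs=auts/subs} and \eqref{eq:strembs=auts/indsubs}: attach to each (strong) trimmed embedding the ``copy of $H$ in $G$'' that it witnesses, and then count how many embeddings witness each copy. For the first identity, to a trimmed embedding $\varphi\in\trimembs{H}{G}$ I would associate the hypergraph $H_\varphi$ with vertex set $\img(\varphi)$ and edge set $\{\varphi(e):e\in E(H)\}$. First I would verify that $H_\varphi\in\trimsubs{H}{G}$: the defining property of a trimmed embedding supplies, for every $e\in E(H)$, an $e'\in E(G)$ with $\varphi(e)=e'\cap\img(\varphi)$, which is exactly what makes $H_\varphi$ a trimmed sub-hypergraph of $G$; and since $\varphi$ is injective, the definition of $H_\varphi$ yields $S\in E(H)\iff\varphi(S)\in E(H_\varphi)$ for all $S\subseteq V(H)$, so $\varphi$ is an isomorphism $H\to H_\varphi$. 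Next I would fix $G'\in\trimsubs{H}{G}$ and show that $\{\varphi\in\trimembs{H}{G}:H_\varphi=G'\}$ equals the set $\mathsf{Iso}(H,G')$ of isomorphisms $H\to G'$: one inclusion is immediate from the above, and for the other, any isomorphism $\psi\colon H\to G'$ is injective and, for $e\in E(H)$, $\psi(e)\in E(G')$ equals $e''\cap V(G')=e''\cap\img(\psi)$ for some $e''\in E(G)$ because $G'$ is a trimmed sub-hypergraph of $G$, so $\psi$ is a trimmed embedding with $H_\psi=G'$. Since $G'\cong H$, the set $\mathsf{Iso}(H,G')$ is nonempty and $\mathsf{Aut}(H)$ acts on it freely and transitively by precomposition, whence $\#\mathsf{Iso}(H,G')=\#\mathsf{Aut}(H)$; as every trimmed sub-hypergraph of $G$ isomorphic to $H$ arises as some $H_\varphi$, summing over these fibers gives $\#\trimembs{H}{G}=\#\mathsf{Aut}(H)\cdot\#\trimsubs{H}{G}$.

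For the second identity I would repeat the argument almost verbatim, with $H_\varphi$ replaced by the induced trimmed sub-hypergraph $\trimsub{G}{\img(\varphi)}$ (whose vertex set is exactly $\img(\varphi)$, so distinct induced trimmed sub-hypergraphs correspond to distinct images). The only new ingredient is the claim that a trimmed embedding $\varphi$ is an isomorphism onto $\trimsub{G}{\img(\varphi)}$ if and only if it is a \emph{strong} trimmed embedding: indeed, for nonempty $S\subseteq V(H)$ the definition of $\trimsub{G}{\,\cdot\,}$ gives $\varphi(S)\in E(\trimsub{G}{\img(\varphi)})$ iff there is $e'\in E(G)$ with $\varphi(S)=e'\cap\img(\varphi)$ (the required nonemptiness being automatic since $\varphi(S)\neq\emptyset$), so demanding $S\in E(H)\iff\varphi(S)\in E(\trimsub{G}{\img(\varphi)})$ for all such $S$ is precisely the strong-trimmed-embedding condition. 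With this in hand the fibers of $\varphi\mapsto\trimsub{G}{\img(\varphi)}$ are the isomorphism sets $\mathsf{Iso}(H,\trimsub{G}{X})$, each of size $\#\mathsf{Aut}(H)$, indexed by the induced trimmed sub-hypergraphs of $G$ isomorphic to $H$, giving $\#\strtrimembs{H}{G}=\#\mathsf{Aut}(H)\cdot\#\indtrimsubs{H}{G}$.

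I do not expect a real obstacle: the whole argument is definition-chasing parallel to the untrimmed case. The only point needing a little care is the empty-set corner in the strong-embedding equivalence --- a stray edge of $G$ disjoint from $\img(\varphi)$ would also satisfy $\varphi(\emptyset)=\emptyset=e'\cap\img(\varphi)$ --- which is why that equivalence is stated over nonempty subsets only; this is harmless, as neither $E(H)$ nor any edge set $E(\trimsub{G}{X})$ contains $\emptyset$. Beyond that, the only real bookkeeping is keeping the arbitrary trimmed sub-hypergraph (first identity) distinct from the induced one (second identity) while identifying the fibers.
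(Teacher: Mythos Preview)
Your proposal is correct and follows essentially the same route as the paper. The paper phrases the first identity via the free action of $\mathsf{Aut}(H)$ on $\trimembs{H}{G}$ and the Orbit--Stabiliser theorem, then builds a bijection from orbits to $\trimsubs{H}{G}$ sending $[\varphi]$ to the edge-sub-hypergraph of $\trim{G}{\img(\varphi)}$ with edges $\{\varphi(e):e\in E(H)\}$; this is exactly your $H_\varphi$, and your fibre-counting argument is the Orbit--Stabiliser step unpacked.
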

\begin{proof}
    The proof is almost identical to the non-trimmed version.
    We present the argument for trimmed embeddings and subgraphs; the
    proof is similar for
    strong trimmed embeddings and trimmed induced subgraphs.

    To this end, observe that $\#\mathsf{Aut}(H)$ acts on
    $\trimembs{H}{G}$ in the canonical way.
    For $a \in \mathsf{Aut}(H)$ and $\varphi \in \trimembs{H}{G}$ set
    $(a \triangleright \varphi)(v)\coloneqq \varphi(a(v))$.
    This action is free, that is, $a\triangleright \varphi = \varphi$
    if and only if $a$ is the trivial automorphism.
    In other words, the stabiliser of each $\varphi$ just contains the
    trivial automorphism.
    Thus, by the Orbit-Stabiliser-Theorem, we have $\#\trimembs{H}{G} =
    \#\mathsf{Aut}(H)
    \cdot \#\mathcal{O}(H,G)$, where $\mathcal{O}(H,G)$ is the set of
    all orbits of this
    action.

    It thus remains to show that $\#\mathcal{O}(H,G) = \#\trimsubs{H}{G}$.
    We construct a bijection $b$ explicitly.
    Let $[\varphi]$ be the orbit of $\varphi$.
    Set $b([\varphi])$ as the edge-sub-hypergraph of
    $\trim{G}{\mathsf{im}(\varphi)}$
    obtained by keeping only the edges $\varphi(e)$ for $e \in E(H)$.
    This mapping is well-defined since two distinct trimmed embeddings
    $\varphi_1,\varphi_2$ in the same orbit yield the same edge-sub-hypergraph of
    $\trim{G}{\mathsf{im}(\varphi_1)}=\trim{G}{\mathsf{im}(\varphi_2)}$.

    Moreover, for injectivity of $b$, assume that $b([\varphi])=b([\psi])$.
    Then $\mathsf{im}(\varphi)=\mathsf{im}(\psi)$.
    Moreover, $\varphi$ is an isomorphism from $H$ to $b([\varphi])$ and
    $\psi^{-1}|_{V(b([\psi]))}$ is an isomorphism from $b([\psi]) =
    b([\varphi])$ to $H$.
    Consequently, $a:= \varphi \circ \psi^{-1}|_{V(b([\psi]))}$ is an
    automorphism with $a
    \triangleright \psi = \varphi$ and thus  $[\varphi]=[\psi]$.

    Finally, for surjectivity, let $H' \in \trimsubs{H}{G}$, that is, $H'$ is an
    edge-sub-hypergraph of $\trim{G}{X}$ for some $X \subseteq V(G)$,
    and there is an
    isomorphism $\pi$ from $H$ to $H'$. Clearly, $H'=b([\pi])$.
\end{proof}

Next, we prove that the terms $\#\trimhoms{H_1}{\star}$ and
$\#\trimhoms{H_2}{\star}$ are
distinct functions whenever $H_1 \ncong H_2$.
\begin{lemma}[Trimmed hypergraph homomorphism indistinguishability]
    \dglabel{lem:trimming-lovasz}
    Let $H_1$ and $H_2$ be hypergraphs.
    Then $H_1 \cong H_2$ if and only if for all hypergraphs $G$ we have
    $\#\trimhoms{H_1}{G} = \#\trimhoms{H_2}{G}$.
\end{lemma}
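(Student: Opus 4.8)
The forward implication is immediate: an isomorphism $\sigma$ from $H_1$ to $H_2$ induces, for every hypergraph $G$, a bijection $\varphi\mapsto\varphi\circ\sigma$ from $\trimhoms{H_2}{G}$ to $\trimhoms{H_1}{G}$, since $\sigma$ maps edges to edges in both directions and $\img(\varphi\circ\sigma)=\img(\varphi)$. The real content is the converse, and I would prove it by adapting Lov\'asz' classical ``homomorphism counts determine the isomorphism type'' argument to trimmed homomorphisms.

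First I would establish the \emph{fibre-factorisation identity}
\[
    \#\trimhoms{H}{G} \;=\; \sum_{\tau}\#\trimembs{H/\tau}{G},
\]
where $\tau$ ranges over all partitions of $V(H)$ and $H/\tau$ is the quotient of \cref{def:hypergraph_quotient}. One sends a trimmed homomorphism $\varphi\colon H\to G$ to the pair $(\tau,\bar\varphi)$, where $\tau$ is the partition of $V(H)$ into the fibres of $\varphi$ and $\bar\varphi\colon H/\tau\to G$ is the induced injective map; the key checks are that $\bar\varphi$ is again a trimmed homomorphism (because $\img(\bar\varphi)=\img(\varphi)$ and $\bar\varphi(e/\tau)=\varphi(e)$ for every $e\in E(H)$) and that, conversely, $\psi\circ\pi_\tau$ is a trimmed homomorphism whose fibre partition is exactly $\tau$ for any trimmed embedding $\psi$ of $H/\tau$; these two constructions are mutually inverse. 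Grouping the summands by isomorphism type rewrites this as $\#\trimhoms{H}{\star}=\sum_{F}m_F(H)\cdot\#\trimembs{F}{\star}$, where $m_F(H)=\#\{\tau : H/\tau\cong F\}\ge 0$ satisfies $m_H(H)\ge 1$ (witnessed by the finest partition) and $m_F(H)=0$ whenever $|V(F)|>|V(H)|$.

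The crux is then linear independence of the family $\{\#\trimembs{F}{\star}\}_F$ over all isomorphism types $F$ (equivalently, via \cref{eq:trimembs=auts/trimsubs}, of $\{\#\trimsubs{F}{\star}\}_F$). I would introduce the relation ``$F\preceq F'$ iff there is an injective trimmed homomorphism from $F$ to $F'$'' and show it is a partial order on isomorphism types: injective trimmed homomorphisms compose (this needs a small argument, since general trimmed homomorphisms do \emph{not} compose --- compare \cref{cor:trimmed_no_FPT_product} --- but injectivity of the outer map $\psi$ gives $\psi(A\cap B)=\psi(A)\cap\psi(B)$, which is exactly what the verification requires), and a trimmed embedding between hypergraphs with the same number of vertices is a vertex-bijection that maps $E(F)$ injectively into $E(F')$, so mutual embeddability forces the edge counts to agree as well and hence yields an isomorphism, giving antisymmetry. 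Moreover $\#\trimembs{F}{G}>0$ iff $F\preceq G$, and $\#\trimembs{F}{F}=\#\auts{F}\ge 1$ (a trimmed self-embedding permutes $V(F)$ and maps $E(F)$ bijectively onto $E(F)$, hence is an automorphism). Given a finite dependence $\sum_i a_i\,\#\trimembs{F_i}{\star}=0$ with the $F_i$ pairwise non-isomorphic, choose a $\preceq$-minimal $F_j$ and evaluate at $G=F_j$: all terms with $i\ne j$ vanish since $F_i\not\preceq F_j$, leaving $a_j\,\#\auts{F_j}=0$ and thus $a_j=0$, a contradiction.

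Finally I would assemble the pieces. If $\#\trimhoms{H_1}{\star}=\#\trimhoms{H_2}{\star}$, then linear independence forces $m_F(H_1)=m_F(H_2)$ for every $F$; taking $F=H_1$ gives $m_{H_1}(H_2)=m_{H_1}(H_1)\ge 1$, so some partition of $V(H_2)$ has quotient isomorphic to $H_1$ and in particular $|V(H_1)|\le|V(H_2)|$, and symmetrically $|V(H_2)|\le|V(H_1)|$; with the vertex numbers equal, the witnessing partition of $V(H_2)$ must be the finest one, whence $H_1\cong H_2$. The step I expect to demand the most care is checking that trimming does not break the classical arguments --- above all that injective trimmed homomorphisms compose and that a bijective trimmed embedding is an isomorphism (so that $\preceq$ is genuinely a partial order and $\#\trimembs{F}{F}=\#\auts{F}$), since for non-injective maps the trimmed notion is markedly less well-behaved.
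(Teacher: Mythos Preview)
Your argument is correct, but it takes a genuinely different route from the paper. The paper avoids building the trimmed-embedding basis altogether: it expresses the number of \emph{surjective} trimmed homomorphisms $\#\mathsf{TrimSur}(H'\to G')$ as an inclusion--exclusion sum $\sum_{A\subseteq V(G')}(-1)^{|A|}\,\#\trimhoms{H'}{G'\setminus A}$, uses the hypothesis to conclude $\#\mathsf{TrimSur}(H_1\to H_2)=\#\mathsf{TrimSur}(H_2\to H_2)>0$ (and symmetrically), deduces $|V(H_1)|=|V(H_2)|$, and then observes that a surjective trimmed homomorphism between hypergraphs of equal order is a vertex bijection and hence an ordinary (non-trimmed) surjective homomorphism --- so the classical endgame applies directly. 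Your approach instead carries the full Lov\'asz machinery over to the trimmed world: you prove the fibre factorisation $\#\trimhoms{H}{\star}=\sum_F m_F(H)\,\#\trimembs{F}{\star}$ and the linear independence of $\{\#\trimembs{F}{\star}\}_F$, which requires the non-obvious (and nicely isolated) fact that \emph{injective} trimmed homomorphisms compose. The paper's route is shorter and sidesteps that subtlety by passing to the non-trimmed setting as soon as the vertex counts match; your route is heavier but yields the stronger by-product that trimmed-embedding counts form a linearly independent family, which could be of independent use.
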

\begin{proof}
    The only-if direction is immediate.
    Hence, assume that for all hypergraphs $G$ we have $\#\trimhoms{H_1}{G} =
    \#\trimhoms{H_2}{G}$.
    It remains to show that $H_1 \cong H_2$.
    To this end, let~$\mathsf{TrimSur}(H'\to G')$ denote the set of all
    surjective trimmed
    homomorphisms from $H'$ to $G'$. Moreover, recall that
    $\mathsf{Sur}(H' \to G')$
    denotes the set of all surjective (non-trimmed) homomorphisms from~$H'$ to $G'$.
    Observe that
    \begin{equation}\label{eq:trimmed_lovasz_helper}
        |V(H')| = |V(G')| ~\Rightarrow~ \mathsf{TrimSur}(H'\to G') =
        \mathsf{Sur}(H'\to
        G').
    \end{equation}
    Observe further that, by inclusion-exclusion, we have
    \begin{align*}
        \#\mathsf{TrimSur}(H'\to G')
        & = \sum_{A \subseteq V(G')} (-1)^{|A|} \cdot \#\{\varphi \in \trimhoms{H'}{G'} \mid
        \forall v \in A: v \notin \mathsf{im}(\varphi) \},\\
        & = \sum_{A \subseteq V(G')} (-1)^{|A|} \cdot \#\trimhoms{H'}{G'\setminus A}.
    \end{align*}
    Recall that $G'\setminus A$ denotes the hypergraph obtained from $G'$ by deleting all
    vertices in $A$ and trimming the edges accordingly, that is $G'\setminus A =
    \trim{G'}{V(G')\setminus A}$.
    Consequently, we have
    \begin{align*}
        \#\mathsf{TrimSur}(H_1\to H_2) & = \sum_{A \subseteq V(H_2)}
        (-1)^{|A|} \cdot
        \#\trimhoms{H_1}{H_2\setminus A}
        \\
        ~& =\sum_{A \subseteq V(H_2)} (-1)^{|A|} \cdot
        \#\trimhoms{H_2}{H_2\setminus A} =
        \#\mathsf{TrimSur}(H_2\to H_2) > 0.
    \end{align*}
    Symmetrically, we also have $\#\mathsf{TrimSur}(H_2\to H_1)>0$.

    As a first consequence, this implies that $|V(H_1)|=|V(H_2)|$.
    Moreover, in combination with \eqref{eq:trimmed_lovasz_helper}, we obtain
    $\#\mathsf{Sur}(H_1\to H_2)>0$ and $\#\mathsf{Sur}(H_2\to H_1)>0$,
    that is, there are
    surjective (non-trimmed) homomorphisms from $H_1$ to $H_2$ and from
    $H_2$ to $H_1$,
    implying that $H_1 \cong H_2$.
\end{proof}

Similarly to \cref{lem:embs_to_homs,lem:strembs_to_homs}, we obtain analogous
transformations for the trimmed case.

\begin{lemmaq}[Rewriting trimmed embedding counts as trimmed
    homomorphism counts]
    \dglabel{lem:trimmed-embs_to_homs}
    For all hypergraphs $H$ and $G$ we have
    \begin{align*}
        \#\trimembs{H}{G} = \sum_{F\in\scQ(H)} \gamma(F) \cdot \#\trimhoms{F}{G},
    \end{align*}
    where $\gamma(F) \ne 0$ for all terms in the sum.
\end{lemmaq}

\begin{lemmaq}[Rewriting trimmed strong embedding counts as trimmed
    homomorphism counts]
    \dglabel{lem:trimmed-strembs_to_homs}
    Let $\scS(H)$ be the set of all edge-super-hypergraphs of $H$.
    For all hypergraphs $H$ and $G$ we have
    \begin{align*}
        \#\trimstrembs{H}{G} = \sum_{F \in \scQ(\scS(H))} \gamma(F) \cdot
        \#\trimhoms{F}{G},
    \end{align*}
    where $\gamma(F) \ne 0$ for every $F \in \scS(H)$.
\end{lemmaq}

\subsection{A counterexample to complexity monotonicity}

We proceed to the proof of \cref{thm:main_trimmed}.

\rstri

Consider the following family of hypergraphs.

\begin{definition}
    Let $B_k$ be the hypergraph with vertex set $[k]$ and with edge set
    $\{[k]\}$; that is, the hypergraph has only one edge that contains all vertices.
    Moreover, set $\scB=\{B_k \mid k \in \mathbb{N}\}$.
\end{definition}

We intend to show that \(\scB\) fulfils all claims of
\cref{thm:main_trimmed}; we show
each claim separately.

\paragraph*{\(\#\W[2]\)-hardness of \(\trimhomsprob(\scB)\)}
We prove a more complete claim which includes a fine-grained lower
bound based on the
Strong Exponential Time Hypothesis (SETH,~\cite{ImpagliazzoPZ01}).
To this end, we reduce from the problem $\CfCommonNeighProb$, which
we define next, after
introducing the required notation.

Let $G$ be a bipartite graph with $V(G) = Y \dotcup X$, and let $X=
X_1 \dotcup \dots
\dotcup X_k$ be a partition of $X$.
For any $y \in Y$, let $N(y) \subseteq X$ be the neighbourhood of $y$ in $G$.
A subset $S \subseteq X$ is a {\em colourful $k$-neighbourhood} of
$G$ if $S \subseteq
N(y)$ for some $y \in Y$ and $|S \cap X_i| = 1$ for every $i \in [k]$.
Set $\scN(G) \coloneqq \{S \subseteq X : S \text{ is a colourful
}k\text{-neighbourhood of }G\}$.

\begin{problem}[$\#\textupsc{CfCommonNeighbours}$]%
    {$\#\textupsc{CfCommonNeighbours}$}
    \label{prob:common_neigh}
    \PInput{Bipartite graph $G=(Y \dotcup X, E)$, and partition $X_1
    \dotcup \ldots \dotcup X_k$ of $X$}
    \POutput{$|\scN(G)|$}
    \PParameter{$k$}
\end{problem}

It is known that $\CfCommonNeighProb$ is $\#\W[2]$-hard and that,
assuming SETH, for every
fixed $k$ it cannot be solved in time $|V(G)|^{k-\varepsilon}$ for
any $\varepsilon > 0$.
This was shown in Lemma 19 in the full version of~\cite{DellRW19},
and more concisely
in~\cite[Section~3]{Mengel21}.%
\footnote{Observe that both~\cite{DellRW19,Mengel21} model
  $\CfCommonNeighProb$ via
counting answers to a star-shaped conjunctive query.}

\begin{lemma}
    \dglabel{lem:trimmed_homs_hard}[sub:homs_and_colhoms]($\trimhomsprob(\scB)$
    is $\#\W[2]$-hard)
    $\trimhomsprob(\scB)$ is $\#\W[2]$-hard and, assuming SETH, cannot
    be solved in time
    $f(k)\cdot \Oh(|V(G)|^{k-\varepsilon})$ for any computable function $f$ and
    $\varepsilon> 0$.
\end{lemma}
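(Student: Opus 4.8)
The statement to prove is \cref{lem:trimmed_homs_hard}: that $\trimhomsprob(\scB)$, where $\scB=\{B_k\}$ with $B_k$ the single-edge hypergraph on $[k]$, is $\#\W[2]$-hard and, under SETH, has no $f(k)\cdot O(|V(G)|^{k-\varepsilon})$ algorithm. The plan is to reduce from $\CfCommonNeighProb$, which has exactly these hardness properties. The key observation is what a trimmed homomorphism from $B_k$ into a hypergraph $G$ actually is: a map $\varphi\colon[k]\to V(G)$ such that there exists an edge $e'\in E(G)$ with $\varphi([k]) = e'\cap \img(\varphi) = \img(\varphi)$; equivalently, $\img(\varphi)$ is contained in some edge of $G$. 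So $\#\trimhoms{B_k}{G}$ counts maps from $[k]$ whose image lies inside a common hyperedge of $G$.

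First I would set up the reduction. Given an instance $(G, X_1\dotcup\cdots\dotcup X_k)$ of $\CfCommonNeighProb$ with $G=(Y\dotcup X, E)$, I construct a hypergraph $G'$ as follows: the vertex set is $X$, and the edge set is $\{N(y) : y\in Y\}$ (discarding empty neighbourhoods and duplicates, or keeping them as a set — only the set of distinct neighbourhoods matters for the trimmed-hom count). Then $\#\trimhoms{B_k}{G'}$ counts all maps $\varphi\colon[k]\to X$ whose image lies inside some $N(y)$. This over-counts what we want: we want \emph{colourful} $k$-neighbourhoods, i.e. images that pick exactly one vertex from each $X_i$ and lie in a common neighbourhood. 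To fix this I would apply a standard colour-coincidence / inclusion–exclusion argument over the partition: for each subset $P$ of $[k]$, or more precisely by merging colour classes, count maps whose image lies in a common neighbourhood and is ``$P$-monochromatic'' in the sense of using only colours in $P$, then Möbius-invert over the partition lattice of $[k]$ (or a simpler inclusion–exclusion over which colour classes are hit) to extract the count of maps using all $k$ colours exactly once, which is $k!\cdot|\scN(G)|$. Each such sub-count is itself a trimmed-homomorphism count $\#\trimhoms{B_{k'}}{G'_P}$ for a modified host (restricting $X$ to $\bigcup_{i\in P}X_i$), so the whole reduction is a parametrised Turing reduction making at most $2^k$ oracle calls, each with parameter at most $k$ and host of size $O(|V(G)|+|E(G)|)$.

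For the fine-grained bound, I would be careful that the reduction is polynomial-time, increases $|V(G')|$ only to $|X|\le|V(G)|$, and keeps the parameter at exactly (at most) $k$; since $\CfCommonNeighProb$ has no $f(k)\cdot O(|V(G)|^{k-\varepsilon})$ algorithm under SETH and our reduction is an $O(|V(G)|^{O(1)})$-overhead, $2^k$-call Turing reduction preserving both parameter and host size up to constants, any $f(k)\cdot O(|V(G')|^{k-\varepsilon})$ algorithm for $\trimhomsprob(\scB)$ would yield one for $\CfCommonNeighProb$, a contradiction. The $\#\W[2]$-hardness follows from the same reduction being a parametrised Turing reduction.

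\textbf{Main obstacle.} The routine part is the reduction shape; the delicate part is getting the inclusion–exclusion/Möbius bookkeeping exactly right so that the coefficient extracting ``uses each colour exactly once'' is provably nonzero and computable, and simultaneously confirming that removing duplicate/empty neighbourhoods from $G'$ does not change $\#\trimhoms{B_k}{G'}$ in a way that breaks the count — concretely, one must check that a trimmed hom from $B_k$ sees only the \emph{set} of edges (maximal ones even suffice, since image-containment in any edge is what matters), so replacing $E(G')$ by its inclusion-maximal elements is harmless. A secondary subtlety is handling $k=1$ and small cases, and ensuring the map-to-neighbourhood correspondence is tight (the "there exists $y$" quantifier must match "there exists an edge", which it does by construction). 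I expect the colour-class inclusion–exclusion to be the step demanding the most care, but it is a by-now standard technique (as used for colourful-to-uncoloured transfers in, e.g., \cite{DellRW19,Mengel21}), so no genuinely new idea is needed.
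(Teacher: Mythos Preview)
Your proposal is correct and follows essentially the same route as the paper: build the hypergraph $\hat G$ on $X$ with edges $\{N(y):y\in Y\}$, observe that trimmed homomorphisms from $B_k$ are precisely maps whose image lies in some edge, apply inclusion--exclusion over the colour classes hit to extract the colourful count (which equals $k!\cdot|\scN(G)|$), and trace the size and parameter through for the SETH bound. One small wrinkle: in your inclusion--exclusion step you write $\#\trimhoms{B_{k'}}{G'_P}$, but the pattern stays $B_k$ throughout---only the host is restricted to $\bigcup_{i\in P}X_i$---so make sure not to let the parameter drift in the write-up.
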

\begin{proof}
    We prove
    \(
        \CfCommonNeighProb \fptred \trimhomsprob(\scB).
    \)

    The reduction goes via the colour-prescribed and colourful versions of
    $\trimhomsprob(\scB)$, denoted with $\#\cptrimhomsprob(\scB)$ and
    $\#\cftrimhomsprob(\scB)$; which are defined as expected (consult
    \cref{sub:homs_and_colhoms}), with the exception that, here, the
    colouring $c : V(G)
    \to V(H)$ of the host hypergraph $G$ does \emph{not need} to be a
    homomorphism from
    $G$ to $H$.

    We proceed to show
    \(
        \#\cptrimhomsprob(\scB) \fptred \#\cftrimhomsprob(\scB) \fptred
        \trimhomsprob(\scB).
    \)

    For the first reduction, we observe
    \(
        \#\cftrimhoms{B_k}{G} = k! \cdot \#\cptrimhoms{B_k}{(G,c)}.
    \)

    For the second reduction, we use standard inclusion-exclusion
    arguments on the subset
    of colours spanned by the image of the homomorphisms.
    Formally, we have
    \begin{align*}
        \#\cftrimhoms{B_k}{G} = \sum_{A \subseteq [k]} (-1)^{|A|}
        \#\trimhoms{B_k}{G \setminus c^{-1}(A)}.
    \end{align*}
    Finally, we prove $\CfCommonNeighProb \fptred \#\cftrimhomsprob(\scB)$.

    To this end, let the input to $\CfCommonNeighProb$ be $G \coloneqq
    (Y \dotcup X, E)$
    and the partition $X_1 \dotcup\ldots\dotcup X_k$ of $X$.
    Consider $H=B_k$, and construct a hypergraph $\hat G=(\hat V, \hat
    E)$ as well as a
    colouring $c$ as follows.
    The vertex set is $\hat V = X$. The edge set is $\hat E = \{e_y : y
    \in Y\}$, where
    $e_y = N_G(y)$ for all $y \in Y$.
    Finally, construct the colouring $c : \hat V \to [k]$ by choosing
    $c(x)$ to be the
    unique index $i \in [k]$ such that $x \in X_i$.

    \begin{claim}
        There is a bijection between $\scN(G)$ and $\cptrimhoms{H}{(\hat G,c)}$.
        Further, $\hat G$ can be constructed in time $\poly(\size{H})
        \cdot \size{G}$.
    \end{claim}
    \begin{claimproof}
        Set $S \coloneqq \{x_1,\ldots,x_k\} \in \scN(G)$ where $x_i \in
        X_i$ for each $i =1,\ldots,k$.
        Consider the map $\varphi_S : V(H) \to \hat V$ defined by
        $\varphi_S(i)=x_i$.
        Since $S \in \scN(G)$, by construction there exists $e_y \in \hat
        E$ such that $S\subseteq e_y$.
        Therefore, $\varphi_S([k]) = S = \img(\varphi_S) \cap e_y$.
        Thus, $\varphi_S \in \trimhoms{H}{\hat G}$.
        Moreover, we have $c(\varphi_S(i))=i$ for all $i=1,\ldots,k$.
        Hence, we also have $\varphi_S \in \cptrimhoms{H}{(\hat G,c)}$.

        For the other direction, consider $\varphi \in
        \cptrimhoms{H}{(\hat G,c)}$, and set $S \coloneqq \img(\varphi)$.
        Clearly the colour-prescribedness implies $S =
        \{x_1,\ldots,x_k\}$ where $x_i \in
        X_i$ for each $i=1,\ldots,k$.
        Moreover,~$\varphi \in \trimhoms{H}{\hat G}$ implies the
        existence of $e_y \in \hat E$ such that $\varphi([k]) = \img(\varphi) \cap e_y=S\cap e_y$, and by
        construction this means that $S \subseteq N_G(y)$.
        Thus, $S \in \scN(G)$.

        We conclude that $|\scN(\hat G)| = \#\cptrimhoms{H}{(\hat G, c)}$.
        Finally, our construction implies that $\hat G$ can be constructed in time $\poly(\size{H}) \cdot \size{G}$.
    \end{claimproof}
    Therefore, $\CfCommonNeighProb \fptred \#\cptrimhomsprob(\scB)$.
    This proves that $\#\cptrimhomsprob(\scB)$ is $\ccSharpW{2}$-hard, as claimed.

    For the conditional lower bounds, suppose that
    $\#\cptrimhomsprob(\scB)$ admits an
    algorithm with running time $f(k)\cdot |V(\hat G)|^{k-\varepsilon}$
    for some $0
    <\varepsilon < 1$.
    For $k \ge 3$, the reduction above yields $\hat G$ such that $|V(\hat G)| \le
    |V(G)|+|k| = \Oh(|V(G)|)$.
    Moreover, $\size{G} = \Oh(|V(G)|^2)$ as $G$ is a graph, hence the
    reduction runs in
    time $\poly(k) \cdot \size{G} = \Oh(|V(G)|^2)$.
    We conclude that, for some $k \ge 3$, we can solve
    $\CfCommonNeighProb$ in time
        \(\Oh(|V(G)|^2) + |V(\hat G)|^{k-\varepsilon} = \Oh(|V(G)|^{k-\varepsilon}).\)
    As discussed above, this contradicts SETH.
\end{proof}

\paragraph*{A polynomial-time algorithm for $\indtrimsubsprob(\scB)$}

We proceed with the second item of \cref{thm:main_trimmed}.
\begin{lemma}
    \dglabel{lem:trimmed_indsubs_easy}
    ($\indtrimsubsprob(\mathcal{B})$ is solvable in polynomial time)
    $\indtrimsubsprob(\mathcal{B})$ is solvable in polynomial time.
\end{lemma}
\begin{proof}
    Suppose we are given $B_k$ and $G$ and write
    $\{e_1,e_2,\dots,e_m\}$ for the edges of
    $G$.
    For each vertex \(v\) of \(G\), we set \(t(v) \subseteq [m]\) to be
    the set of all
    indices of edges of \(G\) that \(v\) is part of; formally
    \[
        t(v) \coloneqq \{ i \mid v \in e_i \}.
    \]
    We write \(\scT\) for the set of all types of \(G\).
    For each \(t \in \scT\), we also write \(n_t\) for the number of
    vertices in \(G\)
    that have type \(t\).

    Now, we proceed as follows.
    First, we compute \(t(v)\) for each vertex \(v\) of \(G\).
    Next, we compute and return
    \( \sum_{t \in \scT}\binom{n_t}{k}.\)

    The key observation is that a non-empty subset $X$ of $V(G)$ satisfies
    $\trimsub{G}{X}\cong B_k$ if and only if all $v\in X$ have the same type.
    Hence, we indeed have
    \( \#\indtrimsubs{B_k}{G} =  \sum_{t \in \scT}\binom{n_t}{k}.\)

    Finally, observe that all computations can be performed in time
    polynomial in the
    input.
\end{proof}

\paragraph*{The trimmed-homomorphism expansion of
\(\#\indtrimsubs{B_k}{\star}\)}

Observe that every hypergraph is trivially an edge-super-hypergraph of itself.
Thus, the final item of \cref{thm:main_trimmed} follows from~\cref{lem:trimmed-strembs_to_homs}.

\subsection{On the non-existence of associative hypergraph products}
Next, we prove  \cref{cor:trimmed_no_FPT_product}.
We require the following definition.
\begin{definition}[Associative Hypergraph Product]
    An \emph{associative hypergraph product} is a mapping~$\boxplus$ from pairs of
    hypergraphs to hypergraphs such that, for all $F,G,H$, we have
    \( F \boxplus (G \boxplus H) \cong (F \boxplus G) \boxplus H.\)

    We say that $\boxplus$ is fixed-parameter tractable, if there is a
    computable function
    $f$ and an algorithm $\mathbb{A}$ that computes $G \boxplus H$ in
    time $f(|H|) \cdot
    |G|^{\Oh(1)}$.
\end{definition}

\rstto
\begin{proof}
    Assume for contradiction that there is a fixed-parameter tractable associative
    hypergraph product~$\boxplus$ satisfying the criterion of
    \cref{cor:trimmed_no_FPT_product}.
    Let $\mathbb{A}$ be the algorithm for computing $\boxplus$ in \ccFPT time.
    We claim that the existence of $\mathbb{A}$ implies the reduction
    \(\trimhomsprob(\mathcal{B}) \fptred \indtrimsubsprob(\mathcal{B}),\)
    which yields the contradiction by
    \cref{lem:trimmed_homs_hard,lem:trimmed_indsubs_easy}.
    For constructing the reduction, let $B_k$ and $G$ be the input to
    $\trimhomsprob(\mathcal{B})$.
    By \cref{lem:aut-omatic_lemma,lem:trimmed-strembs_to_homs}, we have
    \begin{align*}
        \#\indtrimsubs{B_k}{G} = \frac{1}{k!}\cdot \sum_{F \in
        \scQ(\scS(B_k))} \gamma(F) \cdot \#\trimhoms{F}{G},
    \end{align*}
    where $\gamma(F) \ne 0$ for every $F \in \scS(B_k)$.
    Observe that $k!=\#\mathsf{Aut}(B_k)$.
    Clearly, $B_k$ itself is contained in $\scQ(\scS(B_k))$, so the above linear
    combination includes $\#\trimhoms{B_k}{G}$ with a non-zero coefficient.

    Consequently, using $\mathbb{A}$ for the computation of $\boxplus$,
    and observing that
    \cref{lem:trimming-lovasz} guarantees~$\#\trimhoms{H_1}{\star}\neq\#\trimhoms{H_2}{\star}$ for $H_1 \neq H_2$.
    By \cref{thm:dedekind_new}, we can efficiently isolate the term
    $\#\trimhoms{B_k}{G}$ by
    querying the oracle for $\indtrimsubsprob(\mathcal{B})$.
\end{proof}

\subsection{Homomorphisms are not a good basis to count trimmed sub-hypergraphs}
\label{sub:trim_basis_indep}

Finally, we prove \cref{thm:trim_basis_indep}.

\rstna
\begin{proof}
    Assume for contradiction that \cref{eq:impossible-trim-homomorphism-count-equality}
    holds for all $G\in\G$.
    Then, $\gamma(F)=0$ for all $F\in\G$ with $E(F)=\emptyset$.
    To see this, let $G$ be a hypergraph consisting of $n\in\N$ isolated vertices.
    Applying \cref{eq:impossible-trim-homomorphism-count-equality} yields
    \[
        0=f(G)
        =\sum_{F\in\G}\gamma(F)\cdot\#\homs{F}{G}
        =\sum_{F\in\G\,:\,E(F)=\emptyset}\gamma(F)\cdot\#\homs{F}{G}
        =\sum_{F\in\G\,:\,E(F)=\emptyset}\gamma(F)\cdot n^{|F|},
    \]
    so $\sum_{F\in\G\,:\,E(F)=\emptyset}\gamma(F)\cdot n^{|F|}$ is a
    polynomial in $n$ with
    infinitely many roots.
    Thus, $\gamma(F)$ must be zero for all hypergraphs~$F\in\G$ that do
    not contain any edges.

    Set $N\coloneqq\max\{|F|\colon\gamma(F)\neq0\}$.
    Now, we construct the hypergraph~$G$ from~$H$ as follows.
    For each edge~$e$ of~$H$, we add~$N$ fresh and otherwise unused
    vertices to~$e$.
    This way, we have $H=\trim{G}{V(H)}$ and thus $f(G)>0$.
    Moreover, we have $\#\homs{F}{G}=0$ for all $F$ with
    $\gamma(F)\neq0$, because~$F$ only
    contains edges of size between~$1$ and $N$, whereas all edges of
    $G$ have size at
    least $N+1$.
    Applying \cref{eq:impossible-trim-homomorphism-count-equality} yields
    \[
        0=\sum_{F\in\G}\gamma(F)\cdot\#\homs{F}{G}
        =f(G)
        >0,
    \]
    which is a contradiction.
\end{proof}

\bibliographystyle{alphaurl-new}
\bibliography{main}

\newpage

\appendix
\section{On Dedekind interpolation}

In this appendix, we give a self-contained proof of the Dedekind interpolation result
used in \cref{sub:hyper_motifs}. We first formulate a circuit model tailored to the
interpolation argument, then prove a single-output version, and finally derive the full
statement of \cref{thm:dedekind_new}.

\ddcirc

\def\circin#1{\ensuremath\mathrm{in}[#1]}
\def\circinn#1#2{\ensuremath\mathrm{in}_{#1}[#2]}
\def\circout#1{\ensuremath\mathrm{out}_{#1}}
\def\alga{\mathbb{A}}

\begin{lemma}
    \dglabel{lem:better1}
    Write $(\mathrm{G},\ast)$ for a computable semigroup and let
    ${\varphi_1,\dots,\varphi_k\colon \mathrm{G} \to \Q}$ be \(k \ge
    1\) pairwise distinct and
    computable semigroup homomorphisms from $(\mathrm{G},\ast)$ to $(\Q,\cdot)$.
    For each subset \(S = \{s_1<\dots<s_t\} \subseteq\{2,\dots,k\}\), we  write
    \[m(S)\coloneqq g_{s_t}\ast\dots\ast g_{s_1}\ast g_1;\]
    and for \(S=\emptyset\) we write \(m(\emptyset)\coloneqq g_1\).

    There is an algorithm \(\alga\) that
    on input \(g_1, \dots, g_k \in \mathrm{G}\) with \(\varphi_1( g_1 )
    \neq 0\) and
    \(\varphi_1(g_i) \neq \varphi_i(g_i)\)
    for all \(1 < i \le k\),
    computes a circuit \(D(\varphi_1, \dots, \varphi_k)\) such that
    \begin{itemize}
        \item
            the circuit has an input gate
            \(\circinn{D}{m(S)}\)
            for each subset \(S\subseteq\{2,\dots,k\}\),
            as well as an input gate \(\circinn{D}{m(\emptyset)}\);
        \item
            the circuit has a single output gate \(\circout{D}\);
        \item
            \(D(\varphi_1, \dots, \varphi_k)\) has a depth of \(\Oh(k)\);
            each gate has fan-out \(1\) and fan-in at most \(2\);
        \item
            for each rational linear combination
            \(F = a_1 \varphi_1 + \sum_{i = 2}^k a_i \varphi_i\), the circuit has the
            following property.
            Assigning the value \(F(m(S))\) to the gate \(\circinn{D}{m(S)}\) makes the output
            gate \(\circout{D}\) evaluate to
            \(a_1\).
    \end{itemize}
    The algorithm \(\alga\) runs in time \(\Oh(2^k)\).
\end{lemma}
\begin{proof}
    For each \(i\in[k]\), let \(\mathcal P_i\coloneqq 2^{\{2,\dots,i\}}\).
    Note that for every \(S\in\mathcal P_{i-1}\), we have
    \(m(S\cup\{i\})=g_i\ast m(S)\).

    The algorithm \(\mathbb{A}\) iteratively constructs circuits
    \(D_1(\varphi_1), D_2(\varphi_1, \varphi_2), \dots, D_k(\varphi_1,\dots,\varphi_k)\).
    Consult \cref{fig:circ} for illustrations of \(D_1\), \(D_2\), and \(D_3\).

    For \(i=1\), the algorithm defines the specified circuit via
    \[D_1(\varphi_1) \coloneqq \{\circout{D_1} \gets \circinn{D_1}{m(\emptyset)}
    \cdot (1/\varphi_1(g_1))\}.\]
    In words, \(D_1\) has a single input gate indexed by \(m(\emptyset)\), which corresponds
    to the element \(m(\emptyset)=g_1\), and the circuit simply rescales its input by
    \(1/\varphi_1(g_1)\).

    For \(i>1\), assuming that \(D_{i-1}\) has already been constructed,
    the algorithm \(\mathbb{A}\) obtains \(D_i\) from \(D_{i-1}\) by replacing,
    for each \(S\in\mathcal P_{i-1}\), the input gate \(\circinn{D_{i-1}}{m(S)}\)
    with the circuit
    \[\circinn{D_i}{m(S\cup\{i\})} - \varphi_i(g_i)\cdot \circinn{D_i}{m(S)},\]
    and then multiplying the resulting output by
    \(1/(\varphi_1(g_i)-\varphi_i(g_i))\).

    We next verify by induction on \(i\) that, for every rational linear combination
    \(F \coloneqq a_1\varphi_1+\dots+a_i\varphi_i\), if each input gate
    \(\circinn{D_i}{m(S)}\) with \(S\in\mathcal P_i\) is assigned the value \(F(m(S))\),
    then the output gate \(\circout{D_i}\) has the value \(a_1\).

    For \(i=1\) and
    \(F=a_1\varphi_1\), we observe that assigning the value \(F(m(\emptyset))=F(g_1)\) to
    \(\circinn{D_1}{m(\emptyset)}\) yields
    \[
        \circout{D_1} \gets F(g_1)\cdot (1/\varphi_1(g_1))
        = a_1\varphi_1(g_1)/\varphi_1(g_1)=a_1,
    \]
    where we use the assumption \(\varphi_1(g_1)\neq 0\).

    For the inductive step, assume \(i>1\) and that the claim holds for \(D_{i-1}\).
    Fix any \(S\in\mathcal P_{i-1}\).
    By construction, the former input gate \(\circinn{D_{i-1}}{m(S)}\) of \(D_{i-1}\)
    is assigned the value
    \begin{align*}
        \circinn{D_{i-1}}{m(S)} \; & \!\gets
        \circinn{D_i}{m(S\cup\{i\})} - \varphi_i(g_i)\cdot \circinn{D_i}{m(S)}
        \\
                                & = F(m(S\cup\{i\})) - \varphi_i(g_i)\cdot F(m(S))                \\
                                & = F(g_i\ast m(S)) - \varphi_i(g_i)\cdot F(m(S))
                                \\
                                & = \sum_{j=1}^i a_j \cdot \varphi_j(g_i\ast m(S)) -
                                \varphi_i(g_i)\sum_{j=1}^i a_j \cdot \varphi_j(m(S)).
                                \intertext{Exploiting that each \(\varphi_j\) is a semigroup homomorphism, we rewrite this as}
                                & = \sum_{j=1}^i a_j \cdot \varphi_j(g_i) \cdot \varphi_j(m(S)) -
                                \varphi_i(g_i)\sum_{j=1}^i a_j \cdot \varphi_j(m(S))
                                \\
                                & = \sum_{j=1}^{i-1} a_j(\varphi_j(g_i) - \varphi_i(g_i)) \cdot
                                \varphi_j(m(S)).
    \end{align*}
    Thus, if we set \({a}'_j\coloneqq a_j(\varphi_j(g_i)-\varphi_i(g_i))\) and
    \(F'\coloneqq \sum_{j=1}^{i-1} {a}'_j\varphi_j\), then every input gate
    \(\circinn{D_{i-1}}{m(S)}\) is assigned the value \(F'(m(S))\).
    By the induction hypothesis, the output of \(D_{i-1}\) is therefore
    \({a}'_1 = a_1(\varphi_1(g_i)-\varphi_i(g_i))\).
    Finally, the last multiplication by \(1/(\varphi_1(g_i)-\varphi_i(g_i))\) is
    well-defined by assumption and yields \(a_1\) at the output gate \(\circout{D_i}\).

    In each recursive step, the depth increases by a constant, so the total depth is
    \(\Oh(k)\).
    Moreover, all gates have fan-in at most \(2\) and fan-out \(1\) by construction.
    Finally, \(D_i\) has exactly \(|\mathcal P_i| = 2^{i-1}\) input gates.
    If \(s_i\) denotes the total number of gates of \(D_i\), then
    \(s_1=2\).
    For \(i>1\), the circuit \(D_{i-1}\) has \(|\mathcal P_{i-1}|=2^{i-2}\) input gates,
    one for each subset \(S\subseteq\{2,\dots,i-1\}\).
    Each of these input gates is replaced by a constant-size gadget, and one final
    multiplication gate is added.
    Hence \(s_i \le s_{i-1}+c\cdot 2^{i-2}+c\) for some constant \(c\).
    Hence \(s_i=\Oh(2^i)\), and in particular \(D=D_k\) has size \(\Oh(2^k)\).

    Finally, observe that \(\alga\) runs in the time needed to write down the
    circuit, which is of size \(\Oh(2^k)\).
\end{proof}

\begin{figure}[tp]
    \renewcommand\tabularxcolumn[1]{m{#1}}
    \begin{subfigure}[t]{.3\linewidth}
        \centering
        \includegraphics[scale=1.7,page=1]{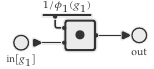}
        \caption{The circuit $D_1$.}
    \end{subfigure}
    \begin{subfigure}[t]{.67\linewidth}
        \centering
        \includegraphics[scale=1.7,page=2]{figs/2026-01_g01}
        \caption{The circuit $D_2$.}
    \end{subfigure}

    \bigskip

    \begin{subfigure}[t]{\linewidth}
        \centering
        \includegraphics[scale=1.75,page=3]{figs/2026-01_g01}
        \caption{The circuit $D_3$.}
    \end{subfigure}
    \caption{Illustration of the recursive circuit construction in
    \cref{lem:better1}.}
    \label{fig:circ}
\end{figure}

We now use \cref{lem:better1} as a building block for the full Dedekind interpolation
statement. The idea is to recover one coefficient at a time by rotating the role of
\(\varphi_1\), and then to assemble the resulting single-output circuits into one circuit
with \(k\) designated outputs.

\rstdi
\begin{proof}
    We use \(k\) calls to the algorithm of \cref{lem:better1}, where we
    shift the role of
    \(\varphi_1\) in each call. Finally, we combine the obtained
    circuits into a single
    Dedekind circuit.
\end{proof}

\end{document}